\documentclass[11pt,reqno]{amsart}
\usepackage{amssymb}
\usepackage{amsfonts}
\usepackage{amsmath}
\usepackage{stmaryrd}
\usepackage{physics}
\usepackage{braket}
\usepackage{dsfont}
\usepackage{bbold}
\usepackage{graphicx}
\usepackage{relsize}
\usepackage{makecell}
\usepackage[table,xcdraw]{xcolor}
\usepackage{enumerate}
\usepackage[pagebackref, colorlinks = true, linkcolor = blue, urlcolor  = blue, citecolor = red]{hyperref}
\usepackage[margin=1in]{geometry}
\usepackage{enumitem}
\usepackage{mathtools}
\usepackage{graphbox}
\usepackage{comment}
\usepackage{float}
\usepackage[font=scriptsize]{caption}

\usepackage[capitalize]{cleveref}

\renewcommand{\epsilon}{\varepsilon}

\newcommand{\M}[1]{\mathcal{M}_{#1}(\mathbb{C})}

\newtheorem{theorem}{Theorem}[section]
\newtheorem{proposition}[theorem]{Proposition}
\newtheorem{corollary}[theorem]{Corollary}
\newtheorem{lemma}[theorem]{Lemma}
\newtheorem{conjecture}[theorem]{Conjecture}
\newtheorem*{conjecture*}{Conjecture}

\theoremstyle{definition}

\theoremstyle{definition}
\newtheorem{remark}[theorem]{Remark}
\theoremstyle{definition}
\newtheorem{example}[theorem]{Example}
\theoremstyle{definition}
\newtheorem{question}
[theorem]{Question}

\newcommand\vertarrowbox[3][6ex]{%
  \begin{array}[t]{@{}c@{}} #2 \\
  \left\uparrow\vcenter{\hrule height #1}\right.\kern-\nulldelimiterspace\\
  \makebox[0pt]{\scriptsize#3}
  \end{array}%
}

\definecolor{darkgreen}{rgb}{0,0.392,0}

\newcommand{\la}{\langle}
\newcommand{\ra}{\rangle}

\newcommand{\A}{\mathcal{A}}
\newcommand{\B}{\mathcal{B}}
\newcommand{\D}{\mathcal{D}}
\newcommand{\E}{\mathbb{E}}

\newcommand{\Le}{\mathcal{L}}

\newcommand{\T}{\mathcal{T}}

\newcommand{\Real}{\mathbb{R}}
\newcommand{\Comp}{\mathbb{C}}

\newcommand{\id}{\mathrm{id}}
\newcommand{\Om}{\Omega}
\newcommand{\om}{\omega}
\newcommand{\eps}{\varepsilon}

\newcommand{\sch}{\mathcal{SN}}
\newcommand{\SR}{{\rm SR}}
\newcommand{\SN}{{\rm SN}}
\newcommand{\SEP}{\mathcal{SEP}}

\newcommand{\PPT}{\mathcal{PPT}}

\newcommand{\Inv}{{\rm Inv}}
\newcommand{\Cov}{{\rm Cov}}

\newcommand{\Ad}{{\rm Ad}}

\author{Sang-Jun Park}
\email{sjpark@whu.edu.cn}
\address{School of Mathematics and Statistics, Wuhan University, 
430070, Wuhan, Hubei, China}

\title[PPT entanglement under symplectic group symmetries]{$k$-Positivity and high-dimensional bound entanglement under symplectic group symmetries}

\begin{document}
\begin{abstract}
We investigate the structure of $k$-positivity and Schmidt numbers for classes of linear maps and bipartite quantum states exhibiting symplectic group symmetry. Specifically, we consider (1) linear maps on $M_d(\Comp)$ which are covariant under conjugation by unitary symplectic matrices $S$, and  (2) $d\otimes d$ bipartite states which are invariant under $S\otimes S$ or $S\otimes \overline{S}$ actions, each parametrized by two real variables. We provide a complete characterization of all $k$-positivity and decomposability conditions for these maps and explicitly compute the Schmidt numbers for the corresponding bipartite states. In particular, our analysis yields a broad class of PPT states with Schmidt number $d/2$ and the first explicit constructions of (optimal) $k$-positive indecomposable linear maps for arbitrary $k=1,\ldots, d/2-1$, achieving the best-known bounds. Overall, our results offer a natural and analytically tractable framework in which both strong forms of positive indecomposability and high degrees of PPT entanglement can be studied systematically.

We present two further applications of symplectic group symmetry. First, we show that the PPT-squared conjecture holds within the class of PPT linear maps that are either symplectic covariant or conjugate symplectic covariant. Second, we resolve a conjecture of Pál and Vértesi concerning the optimal lower bound of the Sindici–Piani semidefinite program for PPT entanglement.
\end{abstract}

\maketitle

\tableofcontents

\section{Introduction}

The theory of positive linear maps plays a central role in operator algebras and quantum information theory \cite{stormer1963positive,Cho75a,Wor76,paulsen2002completely,TextSto,TextWat18}. Among the various refinements of positivity, the notion of 
\emph{$k$-positivity} has emerged as a fundamental yet notoriously subtle concept. A linear map $\Le:M_{d_A}(\Comp)\to M_{d_B}(\Comp)$ between matrix algebras is said to be 
{$k$-positive} if its ampliation map $\id_k\otimes \Le:M_k(\Comp)\otimes M_{d_A}(\Comp)\to M_k(\Comp)\otimes M_{d_B}(\Comp)$ maps positive semidefinite operators to positive semidefinite operators. While complete positivity ($k=\min(d_A,d_B)$) admits several powerful characterizations, including the Choi–Kraus representation, the structure of $k$-positive maps for intermediate values of $k$
remains poorly understood. In fact, even deciding ($1$-)positivity is computationally hard \cite{Gur03,GHP10}, and several semidefinite programming relaxations for testing $k$-positivity have been proposed recently \cite{CCF25}. Nevertheless, no general structural classification is known beyond very special cases.

In quantum information theory, $k$-positivity acquires an operational meaning through its intimate connection with quantum entanglement \cite{HHHH09}. Via the Jamiołkowski–Choi isomorphism \cite{Jam72,Cho75a,SSZ09}, $k$-positive maps correspond to entanglement witnesses capable of detecting bipartite quantum states whose \textit{Schmidt number} exceeds $k$ \cite{SBL01}. Recall that the Schmidt number \cite{TH00} of a bipartite mixed state is the minimal Schmidt rank required to prepare the state as a convex combination of pure states, and thus provides a natural hierarchy refining the dichotomy between separable and entangled states. From this perspective, the problem of characterizing 
$k$-positive maps is \textit{dual} to that of determining Schmidt number regions of quantum states, and hence is a task which is also known to be highly nontrivial both analytically and computationally.

A recent important open problem in entanglement theory concerns the maximal Schmidt number attainable by bipartite quantum states with \emph{positive partial transpose} (PPT) \cite{HLLMH18}. While the PPT criterion \cite{Per96,HHH96} provides an efficiently checkable necessary condition for separability and is in fact exact in low dimensions \cite{Wor76}, it becomes increasingly coarse in higher-dimensional systems, where PPT entangled states are known to exist \cite{Cho82,Hor97}. All such states are necessarily \emph{bound entangled}, meaning that their entanglement cannot be distilled by local operations and classical communication \cite{HHH98}. In contrast, the Schmidt number offers a finer and operationally meaningful quantifier of entanglement, as it characterizes the minimal local dimension required to prepare a given quantum state \cite{BD11}. Futhermore, there has been several recent works discussing the use of high-dimensional entanglement in various quantum information processing \cite{LHF25,MGM24+}. Demonstrating the existence of PPT states with large Schmidt number shows that bound entanglement can be intrinsically high-dimensional, rather than merely a weak remnant of separability. Moreover, such constructions expose fundamental limitations of PPT-based entanglement detection and provide structured test cases for entanglement manipulation and for long-standing open problems, such as the \emph{PPT squared conjecture} \cite{PPTsq,CMHW19}.

In the case of the $d\otimes d$ system, evidence for the existence of PPT states whose Schmidt number scales linearly in $d$ was first obtained in \cite{SWZ11} using tools from asymptotic geometric analysis, although explicit constructions remained elusive. 
Early explicit examples exhibited PPT states whose Schmidt number grows logarithmically in 
$d$ \cite{CYT17}, while subsequent works improved this bound to linear scaling in $d$ \cite{HLLMH18,PV19,Car20}. At present, the best known general upper bound is 
$\lfloor d/2\rfloor$, except for the dimensions $d=3,5,7$ and $9$ \cite{Cho82,Hor97,KG24,KG25+}, and explicit constructions achieving Schmidt number of arbitrary order 
$d/2$ has been known only in limited settings \cite{PV19,Car20}. The well-known dual formulation of this problem asks for the existence of 
$k$-positive \textit{indecomposable} (i.e., not of the form $\Le'+\top \circ \Le''$ for some completely positive maps $\Le'$ and $\Le''$) linear maps on 
$M_d(\Comp)$ as large as possible, a question originally raised in \cite{Ter01}. This dual viewpoint has also driven much of the development of the theory of $k$-positive maps and its Choi matrices \cite{CK09,YLT16,HLLMH18,BO20,MOM25,HK25a,HK25b,EKD25}.

On the other hand, one of the most effective strategies in quantum information theory is to impose \emph{symmetry} with respect to group representations \cite{Wer89,holevo1993note,VW01,MSD17,LY22}. Quantum states that are invariant under group actions, as well as linear maps that are covariant with respect to such actions, form tractable yet nontrivial subclasses that often admit complete analytic descriptions. This symmetry-based approach has proved extremely fruitful in entanglement theory, where invariance under classical matrix group actions has led to explicit classifications of entanglement, PPT, and separability properties \cite{Wer89,HH99,VW01,EW01,CK06,TG09,ES13,Yu16,SN21,PJPY24,GNS25,GNP25+}.

In this work, we contribute to both of the above questions on $k$-positivity and Schmidt numbers by exploring new classes of quantum objects possessing symmetries with respect to the \emph{symplectic group}. More precisely, we study linear maps that are covariant under conjugation by unitary symplectic matrices $S$, as well as bipartite quantum states that are invariant under $S\otimes S$ or $S\otimes \overline{S}$ actions, each parametrized by two real variables. The representation theory of the symplectic group gives rise to algebraic structures that are formally similar to those arising in symmetry classes associated with the orthogonal group \cite{VW01,PY24}; however, the resulting entanglement properties turn out to be drastically different. As we show, this symmetry setting supports a remarkably rich entanglement landscape, including highly entangled PPT states and highly positive yet indecomposable linear maps.

Our first main result is summarized as follows.

\begin{theorem} \label{thm-main}
Let $d\geq 4$ be an even integer and let $V$ be a skew-symmetric unitary matrix, i.e., $V\in U(d)$ with $V^{\top}=-V$.
\begin{enumerate}
    \item All the $k$--positivity conditions of the linear map
        $$\Le_{p,q}^{(d)}:Z\mapsto \frac{1-p-q}{d}\Tr(Z)I_d+pZ+q V Z^{\top}V^*, \quad Z\in M_d(\Comp),$$
    are completely characterized by linear and quadratic inequalities in the parameters $p$ and $q$. Moreover, there exists a (triangular) set $R=R^{(d)}\subset \Real^2$ with nonempty interior such that $\Le_{p,q}^{(d)}$ is `\emph{$\big(\frac{d}{2}-1\big)$-positive and indecomposable}' if and only if $(p,q)\in R$, whereas $\Le_{p,q}^{(d)}$ becomes decomposable whenever it is $\frac{d}{2}$-positive.
    
    \item The Schmidt number of the bipartite quantum state
        $$\rho_{a,b}^{(d)}=\frac{1-a-b}{d^2}I_d\otimes I_d+ \frac{a}{d} \sum_{i,j=1}^d |ii\ra\la jj|+\frac{b}{d}(I_d\otimes V) \sum_{i,j=1}^d |ij\ra\la ji| (I_d\otimes V)^{*}$$
    in a $d\otimes d$ system, whenever it is well defined, is computed explicitly for all values of $(a,b)$. Furthermore, there exists a (triangular) set $S=S^{(d)}\subset \Real^2$ with nonempty interior such that $\rho_{a,b}^{(d)}$ is `\emph{PPT with the Schmidt number $d/2$}' if and only if $(a,b)\in S$, while $\SN \big(\rho_{a,b}^{(d)}\big)\leq d/2$ whenever $\rho_{a,b}^{(d)}$ is PPT.
\end{enumerate}
\end{theorem}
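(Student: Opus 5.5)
The plan is to reduce both parts to a two-dimensional convex-geometry problem via symplectic symmetry. First I would check that $V Z^{\top} V^*$ is covariant under $Z\mapsto SZS^*$ for every unitary symplectic $S$ (taking $V=J$, the standard symplectic form, after a unitary change of basis, since any skew-symmetric unitary is $UJU^{\top}$). Hence $\Le_{p,q}^{(d)}$ is $\mathrm{Sp}(d/2)$-covariant. Its Choi matrix is, up to normalization, the state $\rho_{a,b}^{(d)}$ with a linear reparametrization $(p,q)\leftrightarrow(a,b)$, and this state is invariant under $S\otimes\overline{S}$. The commutant of $\{S\otimes \overline S\}$ is spanned by $I$, the maximally entangled projection $|\Omega\ra\la\Omega|$, and the partially transposed, $V$-twisted flip. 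The partial transpose exchanges the $S\otimes\overline S$ picture with the $S\otimes S$ picture, where the commutant is spanned by $I$, the flip $F$, and the rank-one projection onto the antisymmetric symplectic-singlet $|\omega\ra=(I\otimes V)|\Omega\ra$. Consequently positivity and PPT reduce to finitely many linear inequalities on eigenvalues of these commuting operators. Decomposability of the covariant map also reduces, by twirling the CP parts over $\mathrm{Sp}(d/2)$, to the existence of two positive invariant operators in these two-parameter families, which is again a linear-programming condition in $(p,q)$.

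For $k$-positivity, twirling shows that $\Le$ is $k$-positive iff $\la \xi|C_\Le|\xi\ra\ge 0$ for all $\xi$ of Schmidt rank $\le k$. By invariance this is equivalent to nonnegativity of a linear functional on the convex set
$$\Big\{\big(|\la\Omega|\xi\ra|^2,\ |\la\omega|\xi\ra|^2\big)\ :\ \|\xi\|=1,\ \SR(\xi)\le k\Big\},$$
or equivalently of the $(\la F\ra,\la |\omega\ra\la\omega|\ra)$ pair. Writing $\xi=\sum_i s_i\, x_i\otimes y_i$, the first coordinate is $|\Tr X|^2/d$ and the second is $|\Tr(V^{\top}X)|^2/d$ type quantities, where $X$ has singular values $s_i$. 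The key step is to determine the extreme points of this $2$D region for each $k$. The obvious candidates are: a Schmidt-rank-$k$ vector maximally overlapping $|\Omega\ra$ (value $k/d$) with zero $\omega$-overlap; vectors built from $k$ symplectic pairs of $J$, giving $\omega$-overlap up to $2\lfloor k/2\rfloor/d$, or rather using a $2k$-dimensional isotropic-free choice; and mixed curves between them. The mixed curves presumably produce the quadratic inequalities. I expect the joint constraint to be a Cauchy–Schwarz-type bound, obtained by decomposing $X$ into its $J$-symmetric and $J$-antisymmetric parts ($X\mapsto \pm J X^{\top} J^*$), giving an ellipse-like boundary. Dually, the Schmidt number of $\rho_{a,b}$ equals the least $k$ such that $\rho_{a,b}$ lies in the convex hull of twirls of rank-$\le k$ pure states, which is exactly the same $2$D region, mapped linearly. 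So part (2) follows from part (1) by duality, with explicit decompositions given by twirled pure states at the extreme points.

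Finally, I would intersect the regions. The triangle $R$ is where $\Le_{p,q}$ is $(d/2-1)$-positive but fails the decomposability LP. The claim that $d/2$-positive implies decomposable needs the region for $k=d/2$ to lie inside the decomposable set; I expect this because rank-$d/2$ vectors, using all $d/2$ symplectic pairs, already saturate the $\omega$-overlap. The PPT statement $\SN\le d/2$ is the dual of this. The main obstacle is the exact computation of the $k$-Schmidt-rank numerical range above, in particular proving that the quadratic boundary is optimal (upper bound via an inequality for $|\Tr X|^2+|\Tr JX^{\top}|^2$-type combinations under $\mathrm{rank} X\le k$), and handling parity of $k$. I would attack this first by a Lagrange/singular-value argument after symmetrizing $X$ under the $J$-twist.
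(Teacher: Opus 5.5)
Your treatment of complete (co)positivity, of decomposability via $(S,S)$-twirling, and of the passage from maps to states by duality is sound and matches the paper. The genuine gap is the step you yourself call the main obstacle. The exact $k$-positivity region (equivalently, the convex hull of the Schmidt-rank-$k$ joint numerical range) is the substance of the theorem, and you leave it at ``I expect''. Every conclusion depends on it quantitatively. For $k=d/2$ the corner $\big(\frac{-2}{d^2-d-2},\frac{-d}{d^2-d-2}\big)$ of $\mathbb{P}_{d/2}$ lies \emph{exactly} on the line $p+q=-\frac{d+2}{d^2-d-2}$ bounding $\mathbb{D}$. For $k=d/2-1$ the corner $\big(\frac{-2}{d^2-3d},\frac{-(d-2)}{d^2-3d}\big)$ lies strictly outside $\mathbb{D}$. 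Showing that maps in $R$ (e.g.\ $\Le^{\rm BH}_{d/2-1}$) are genuinely $(d/2-1)$-positive needs an inequality valid for \emph{every} Schmidt-rank-$k$ vector.

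Your set-up for this step also uses the wrong coordinates. $C_{\Le}$ lies in ${\rm span}\{I,|\Omega\rangle\langle\Omega|,F^V\}$ with $F^V=(I\otimes V)F(I\otimes V)^*$. So $\langle\xi|C_{\Le}|\xi\rangle$ depends on $\big(|\langle\Omega|\xi\rangle|^2,\langle\xi|F^V|\xi\rangle\big)$, not on $|\langle\omega|\xi\rangle|^2$. Indeed $|\omega\rangle\langle\omega|$ is not $S\otimes\overline{S}$-invariant, and $\langle\xi|F^V|\xi\rangle$ is a Hermitian quadratic form in $X$ (for $V=\Om_d$, your $J$, it equals $\Tr(X^*\Om_dX^{\top}\Om_d)$), not a function of $|\Tr(V^{\top}X)|^2$. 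In the partially transposed picture the pair $(\langle F\rangle,\langle|\omega\rangle\langle\omega|\rangle)$ is legitimate, but then the first coordinate is again a quadratic form, not $|\Tr X|^2/d$. So the planned Cauchy--Schwarz bound on $|\Tr X|^2+|\Tr(JX^{\top})|^2$ aims at the wrong functional.

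Likewise, the easy inclusion $\mathbb{P}_{d/2}\subseteq\mathbb{D}$ does not come from vectors ``using all symplectic pairs''. It comes from a maximally entangled vector on a Lagrangian subspace, e.g.\ $\xi=(2/d)^{1/2}\sum_{j=1}^{d/2}|jj\rangle$ for $V=\Om_d$. This vector has $\langle\xi|F^V|\xi\rangle=0$ and forces $(1-\frac{d^2}{2})p+q\le 1$. Together with $p+(1-d)q\le1$, this puts the corner on $\partial\mathbb{D}$.

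The missing idea is to control all Schmidt-rank-$k$ vectors by fixing their support. By Tomiyama's criterion, $\Le_{p,q}$ is $k$-positive iff $C^v_k=A\,I+kp|\om^v_k\ra\la\om^v_k|+qF^v_k\ge0$ for every orthonormal frame $v_1,\ldots,v_k$. Here $A=\frac{1-p-q}{d}$, $|\om^v_k\ra=k^{-1/2}\sum_j|j\ra\otimes|v_j\ra$ and $F^v_k=\sum_{i,j}|i\ra\la j|\otimes|V\overline{v_j}\ra\la V\overline{v_i}|$. Write $F^v_k=\Pi^v_{\mathcal S}-\Pi^v_{\A}$. Skew-symmetry of $B=(\la v_i|V|\overline{v_j}\ra)_{i,j}$ gives $|\om^v_k\ra\perp{\rm Ran}\,\Pi^v_{\mathcal S}$. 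The non-commuting pair $|\om^v_k\ra\la\om^v_k|$, $\Pi^v_{\A}$ then interacts only through the $2\times2$ system spanned by $\xi_1=\Pi^v_{\A}\om^v_k$ and $\xi_2=\om^v_k-\xi_1$.

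All resulting conditions are affine in the single frame invariant $t=\|\xi_1\|^2=\frac1k\|B\|_2^2$, so only its extremes matter:
\begin{itemize}
\item $\max t=\frac2k\lfloor\frac k2\rfloor$, from the normal form of the skew-symmetric $B$ together with $\|B\|_\infty\le1$, which follows from the Gram matrix of $\{v_j\}\cup\{V\overline{v_j}\}$;
\item $\min t=\frac1k\max(2k-d,0)$, from a dimension count for ${\rm span}\{v_j\}\cap V\,\overline{{\rm span}\{v_j\}}$.
\end{itemize}
The $2\times2$ determinant condition produces the hyperbolas $f^{(d,k)}_{kt}\ge0$. The value $2\lfloor k/2\rfloor$ produces the parity dependence, and $2k-d$ the change of regime for $k>d/2$.

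Once this range (equivalently $\mathbb{P}_k$) is known, your primal description of $\mathbb{S}_k$ as the affine image of the twirled Schmidt-rank-$k$ range is a valid alternative to the paper's route. The paper instead obtains $\mathbb{S}_k$ as the polar of $\mathbb{P}_k$ via Theorem~\ref{thm-DualSymm}, turning the hyperbolic arcs into the elliptic arcs $g_j\le0$ by dual curves. Without the range computation, neither route gives the explicit Schmidt numbers or the triangle $S$.
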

We refer to \cref{thm-Symp-PosDec,thm-SympkPos,thm-SympSepPPT,thm-SympSch} for the precise regions of positivity and Schmidt numbers, and to Corollaries \ref{cor-Symp-kposIndec} and \ref{cor-Symp-PPTSch} for the exact descriptions of the sets $R$ and $S$. {Both the $k$-positivity and Schmidt number regions for our classes are, in fact, independent of the choice of $V$ (Remark \ref{rmk-V-equiv}), yet remain highly nontrivial even for the minimal dimension $d=4$. For illustration, we depict the corresponding regions in Fig.~\ref{fig-4kPos}.} 

\begin{figure}
    \centering

    \includegraphics[width=0.51\linewidth]{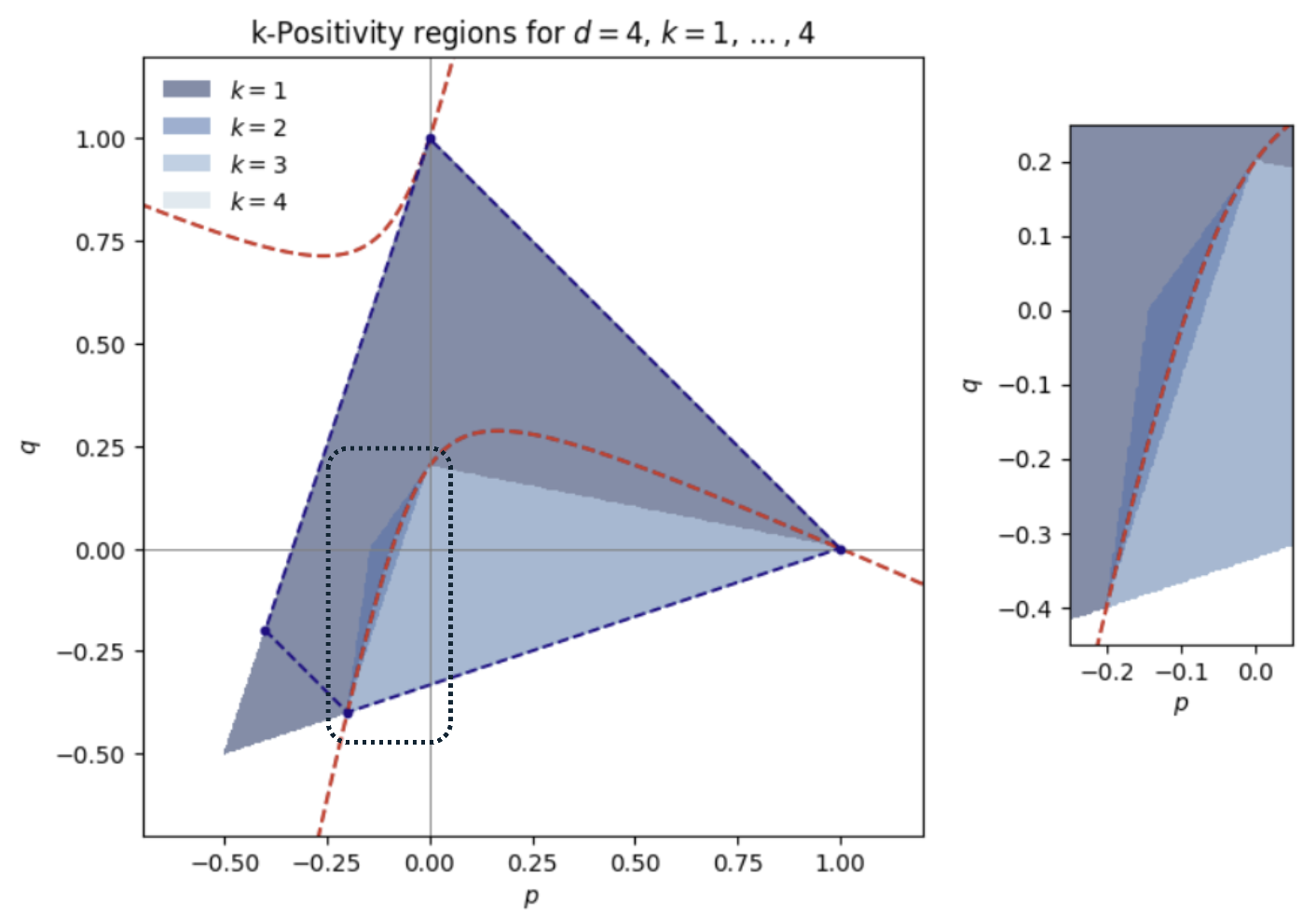} 
    \includegraphics[width=0.35\linewidth]{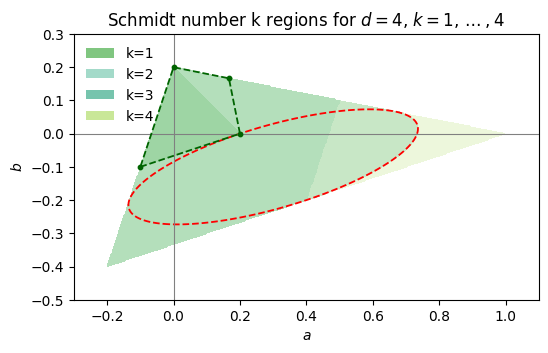}
    
    \caption{$k$-Positivity (left) and Schmidt number (right) regions for $d=4$. The red dashed hyperbola (left) and ellipse (right) partially determine the boundaries for $k=3$. Furthermore, the dark blue quadrilateral (left) and dark green quadrilateral (right) indicate the decomposability and PPT regions, resp.}
    \label{fig-4kPos}
\end{figure}

Here we highlight several new consequences of these results:
\begin{itemize}
    \item We provide a \emph{broad constructive family} of $(d/2-1)$-positive indecomposable linear maps and PPT states with Schmidt number $d/2$ (both within the class of symplectic group symmetry and beyond; see Corollaries \ref{cor-PPTSchOpt1} and \ref{cor-PPTSchOpt2}), thus achieving the best-known bounds. To the best of our knowledge, this is the first explicit construction of indecomposable linear maps attaining the currently best known positivity degree $d/2-1$; while the existence of such maps was previously known indirectly via duality arguments involving Schmidt numbers, explicit realizations had remained elusive.  Moreover, we show that these values of $k$ are \emph{optimal} within these classes of linear maps and quantum states.  

    \item When $p=q=-\frac{1}{d-2}$, the linear map $\Le_{p,q}^{(d)}$ corresponds to the widely studied \textit{Breuer-Hall map} \cite{Bre06,Hal06} $\Le^{\rm BH}$, which has proven effective in detecting PPT entanglement. It follows, however, that $\Le^{\rm BH}$ fails to be \emph{$2$-positive}; consequently, it cannot be used as an entanglement witness tailored solely to states with Schmidt number greater than $2$. In fact, $\Le^{\rm BH}$ is shown to be \emph{atomic}, that is, it is not of the form $\Le^{\rm BH}=\Le'+\top\circ \Le''$ for some $2$-positive linear maps $\Le'$ and $\Le''$.
    
    \item When $a=b=\frac{1}{d+2}$, the state $\rho_{a,b}^{(d)}$ is PPT, and its partial transpose $(\rho_{a,b}^{(d)})^{\Gamma}$ corresponds to the \textit{P\'al--V\'ertesi state} \cite{PV19} $\rho^{\rm PV}$, previously shown to have Schmidt number \emph{at least $d/2$}. Our results further establish that $\rho^{\rm PV}$ has Schmidt number \emph{exactly $d/2$}. 
    Moreover, we verify that $\rho^{\rm PV}$ is local-unitary equivalent to its partial transpose $(\rho^{\rm PV})^{\Gamma} = \rho_{\frac{1}{d+2},\frac{1}{d+2}}^{(d)}$, demonstrating that even bipartite states local-unitary equivalent to their partial transpose can attain a high Schmidt number. In fact, an infinite family of such states can be constructed within our class (Corollary \ref{cor-PPT-SNDiff} (2)).

    \item As another interesting example, we construct a broad class of quantum states $\rho=\rho_{a,b}^{(d)}$ exhibiting a large Schmidt number gap between the state and its partial transpose (Corollary \ref{cor-PPT-SNDiff} (1)):
    \begin{align*}
        &{\SN}(\rho)-{\rm SN}(\rho^{\Gamma})=d/2-2\\
        &= \max\{{\SN}(\rho_{a',b'}):\rho_{a',b'} \text{ is PPT}\} - \min\{\rm SN(\rho'):\rho' \text{ is entangled}\}.
    \end{align*}
    Questions of this type were raised in \cite{CYT17} and first addressed in \cite{HLLMH18}, where the gap was of order $d/4$; our construction improves this bound within a natural and highly symmetric class of states.
\end{itemize}

In particular, we emphasize a new construction of Breuer–Hall–type maps, which are shown to be $k$-positive and indecomposable in stronger sense, and thus remedy a limitation of the original Breuer–Hall map.
\begin{theorem} [Theorem \ref{thm-Symp-opt}, Remark \ref{rmk-SympOptAtomic}]
For $d\geq 4$ an even integer and for $V$ a skew-symmetric unitary matrix, we consider the linear maps
    $$\Le_k^{\rm BH}:Z\mapsto \frac{1}{kd-k-1} (k\Tr(Z) I_d -Z-kVZ^{\top} V^*), \quad Z\in M_d(\Comp).$$
Then for $k=1,\ldots, d/2-1$,
\begin{enumerate}
    \item $\Le_k^{\rm BH}$ is an \emph{optimal $k$-positive indecomposable linear map}: there exists no other $k$-positive witness that is able to detect strictly more PPT states with Schmidt number larger than $k$;

    \item $\Le^{\rm BH}_k$ cannot be decomposed as $\Le'+\top \circ \Le''$ for any $(k+1)$-positive linear maps $\Le'$ and $\Le''$.
\end{enumerate}
\end{theorem}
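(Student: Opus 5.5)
The plan is to place $\Le_k^{\rm BH}$ inside the two-parameter family $\Le_{p,q}^{(d)}$ and then use the regions already computed in \cref{thm-SympkPos} and Corollary \ref{cor-Symp-kposIndec}. Put $c=\frac{1}{kd-k-1}$. Then $\Le_k^{\rm BH}=\Le^{(d)}_{p,q}$ with $p=-c$ and $q=-kc$, because $\frac{1-p-q}{d}=\frac{kd}{d(kd-k-1)}=kc$.

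\emph{$k$-positivity and indecomposability.} First I check that $(-c,-kc)$ satisfies the inequalities of \cref{thm-SympkPos} for $k$-positivity. I expect it to sit on the boundary of that region, at a corner where a linear constraint meets the quadratic one. Next I check that it violates the decomposability conditions of \cref{thm-Symp-PosDec}. A concrete way to confirm indecomposability is to produce a PPT state $\rho_{a,b}^{(d)}$ from \cref{thm-SympSepPPT} with $\Tr\big(\rho_{a,b}^{(d)}\,C_{\Le_k^{\rm BH}}\big)<0$, where $C_{\Le}$ denotes the Choi matrix. By \cref{thm-SympSch}, such a state then has Schmidt number larger than $k$.

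\emph{Part (2).} Suppose $\Le_k^{\rm BH}=\Le'+\top\circ\Le''$ with $\Le',\Le''$ both $(k+1)$-positive. I twirl both sides over the unitary symplectic group, $\Le\mapsto\int S^*\Le(SZS^*)S\,dS$.
\begin{itemize}
\item The left-hand side is covariant, so it is unchanged.
\item Twirling preserves $(k+1)$-positivity.
\item Since $V\overline{S}V^*=S$ for unitary symplectic $S$ (with $V$ the symplectic form), the map $\top\circ\Le''$ twirls to $\top\circ\widetilde{\Le''}$. Here $\widetilde{\Le''}$ is a twirl of $\Le''$ twisted by $\Ad_V$, which is still $(k+1)$-positive.
\end{itemize}
Both summands therefore lie in the covariant class: one is some $\Le_{p_1,q_1}$ and the other is $\top\circ\Le_{p_2,q_2}$. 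The condition $\top\circ\Le_{p_2,q_2}=\Ad_V\circ\Le_{p_2',q_2'}\circ\top$ expresses this second summand through the parameters of the class. Comparing coefficients gives a linear relation $(p,q)=(p_1,q_1)+\lambda(\ldots)$ between points of the $(k+1)$-positive region and its image under $\top$. Using the explicit boundary lines from \cref{thm-SympkPos}, I will show that $(-c,-kc)$ lies outside the Minkowski sum of these two cones. This is a finite computation in the plane.

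\emph{Optimality (part (1)), the main step.} The criterion I use is the PPT analogue of the Lewenstein–Kraus–Cirac–Horodecki spanning criterion. Let $P_k$ be the set of vectors $\psi\in\Comp^d\otimes\Comp^d$ with Schmidt rank at most $k$ and $\la\psi|C_{\Le_k^{\rm BH}}|\psi\ra=0$. If $P_k$ spans $\Comp^d\otimes\Comp^d$, then one cannot subtract any $P\ge0$ from $C_{\Le_k^{\rm BH}}$. A similar spanning condition on the partial transposes excludes subtracting any $Q^\Gamma$ with $Q\ge0$. Together these give optimality among $k$-positive witnesses detecting PPT states of Schmidt number greater than $k$.

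I will construct $P_k$ explicitly from symplectic structure. Take vectors $\psi=\sum_{i=1}^k x_i\otimes y_i$, where the $x_i$ span a subspace $W$ that is isotropic for the form $\la x,Vy\ra$, and the $y_i$ are chosen relative to $x_i$ through $\overline{V}$ or $\overline{x_i}$. Equality in the quadratic constraint of \cref{thm-SympkPos} should be attained on such configurations. Then I apply symplectic unitaries $S\otimes\overline{S}$ (or $S\otimes S$) to them. Their span is an invariant subspace of $\Comp^d\otimes\Comp^d$. By the decomposition of this representation into a few irreducible components, it suffices to show that the span meets each component nontrivially.

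I expect this spanning verification, especially its partial-transpose version, to be the main obstacle. If it fails, my fallback is a twirling argument. If some $\Le'$ detected strictly more, then $\Le_k^{\rm BH}-\Le'$ would be decomposable. Twirling would reduce this to the two-parameter class, where the corner position of $(-c,-kc)$ forbids it.
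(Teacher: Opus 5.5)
\textbf{Where the proposal matches the paper, and a correction.} Your indecomposability check and your plan for (2) follow the paper's argument in Remark \ref{rmk-SympOptAtomic}: twirl both summands, land in $\mathbb{P}_{k+1}$ and its reflection, and show the point lies outside ${\rm conv}(\mathbb{P}_{k+1}\cup\mathbb{P}_{k+1}^{\top})$. You misplace the point, though, and this affects (1). $(-c,-kc)$ is not where a linear constraint meets the quadratic one. It is the vertex of $\mathbb{P}_k$ cut out by the two \emph{linear} facets $p+(1-d)q=1$ and $(1-kd)p+q=1$; for odd $k$ the hyperbola arc lies in the second quadrant. Likewise, in the block analysis of Appendix \ref{app-kPos} the tight conditions there are $A+q=0$ and $A+kp=0$ (with $A=\frac{1-p-q}{d}=kc$, the latter at $\|\xi_1\|=0$), not $f^{(d,k)}_u\ge0$. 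The twirling argument uses exactly this. The line $p+(1-d)q=1$ is also a facet of $\mathbb{D}$, and $\mathbb{D}$'s edge on it stops at $p+q=-\frac{d+2}{d^2-d-2}$. That is strictly before the vertex, since $\frac{k+1}{kd-k-1}>\frac{d+2}{d^2-d-2}$ for $k<d/2$.

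\textbf{The genuine gap is in your main route for (1).} The partial-transpose half of the spanning test cannot be run on zeros of Schmidt rank $\ge 2$. For such $\psi$, $(|\psi\ra\la\psi|)^{\Gamma}$ is not positive semidefinite, so $\la\psi|Q^{\Gamma}|\psi\ra$ may be negative and gives no constraint on $Q\ge0$. Also, ``together'' needs an argument, since optimality requires excluding $P+Q^{\Gamma}$ jointly.

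Both points can be repaired. First, if $C-P-Q^{\Gamma}$ is $k$-blockpositive, so is $C-Q^{\Gamma}$, by adding back $P\ge0$. Second, $C=C_{\Le_k^{\rm BH}}$ vanishes on the product vectors $x\otimes Vx$. For $k\ge2$ these are the only product zeros, because $\Le_k^{\rm BH}(|v\ra\la v|)$ has eigenvalues $(k-1)c$, $0$, $kc$ on $v$, $V\overline v$, and the rest. Their partial conjugates $(I\otimes\overline V)(x\otimes\overline x)$ span $\Comp^d\otimes\Comp^d$, which forces $Q=0$.

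For the $P$-half, the vectors $x\otimes Vx$ span only $(I\otimes V)(\Comp^d\vee\Comp^d)$, which is ${\rm Ran}\,\Pi_2^{S\overline S}$ for $V=\Om_d$. The other two components are reached by the zeros $\sum_{j=1}^k w_j\otimes\overline{w_j}$, where $\{w_j\}$ is an orthonormal basis of a $k$-dimensional isotropic subspace $W$. These have overlap $k/\sqrt d$ with $|\om_d\ra$. They also have a nonzero $\Pi_3^{S\overline S}$-component, because $\Pi_W+\Pi_{\Om_d\overline W}$ is a rank-$2k$ projection and not $\frac{2k}{d}I$. With these repairs your route becomes a valid alternative, and it identifies the zero set explicitly.

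\textbf{Your fallback is the paper's proof, but it is missing its decisive last step.} Twirling plus the vertex geometry only yields $\mathcal{T}_{S,S}\Le''=0$. You still need $\Le''=0$. This follows because $\Tr\Le''(I)=\Tr\big((\mathcal{T}_{S,S}\Le'')(I)\big)=0$ with $\Le''(I)\ge0$, and a positive map with $\Le''(I)=0$ vanishes. The paper phrases this via the trace of the block-positive $C_{\Le''}$.

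Also, LKCH gives $\Le_k^{\rm BH}=(1-\epsilon)\Le'+\epsilon\Le''$ with $\Le''$ decomposable, not ``$\Le_k^{\rm BH}-\Le'$ decomposable''. The usable form is: no nonzero decomposable $\Le''$ leaves $\Le_k^{\rm BH}-\Le''$ $k$-positive.
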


In this sense, we may refer to $\Le_k^{\rm BH}$ as the \emph{$k$-Breuer--Hall maps}. 
We remark that the construction of $k$-positive linear maps possessing either of the above two properties has remained largely out of reach beyond the case $k=1$; here, however, the group-symmetry method leads to a substantial simplification of the argument. 
Furthermore, from these linear maps we derive general Schmidt number criteria (Proposition \ref{prop-Sch-general}) and obtain additional examples of PPT states with high Schmidt number (Corollaries \ref{cor-PPTSchOpt1} and \ref{cor-PPTSchOpt2}).

As highlighted by our results, there is a sharp contrast between symplectic and \emph{orthogonal group symmetries}. In the orthogonal case, the corresponding $(O,O)$-covariant linear maps and $O\otimes O$-invariant bipartite quantum states are given by:
\begin{align*}
    \Le_{p,q}^{(OO)}(Z) &= \frac{1-p-q}{d} \Tr(Z) I_{d} + pZ + qZ^{\top},\\
    \rho_{a,b}^{(OO)} &= \frac{1-a-b}{d^2} I_{d^2} + \frac{a}{d}\sum_{i,j=1}^d |ii\ra\la jj| + \frac{b}{d} \sum_{i,j=1}^d |ij\ra\la ji|,
\end{align*}
which are similar to those in \cref{thm-main} but \textit{without the twisting unitary $V$}. Within these classes, it is known that positivity implies decomposability, and PPT implies separability \cite{VW01,PJPY24}. This contrast underscores the symplectic group as a natural setting in which both strong forms of positive indecomposability and high degrees of PPT entanglement can be studied systematically.

\medskip

We have two additional applications using our framework. In Section \ref{sec-PPTSq}, we show that the \textit{PPT squared conjecture} \cite{PPTsq,CMHW19} holds within symplectic group symmetry in two perspectives: (1) any composition of two PPT maps is entanglement-breaking, and (2) any composition of a positive map and a PPT map is decomposable. This result is in fact a simple consequence of Theorem \ref{thm-main}. Nevertheless, this is still remarkable in the sense that the verification applies to a broad class of quantum channels whose Choi matrices may exhibit high-dimensionality of entanglement, thereby providing new and nontrivial evidence in support of the conjecture. 

\medskip

In Section \ref{sec-SP-SDP}, we revisit the work of \cite{PV19} on the semidefinite program introduced by Sindici and Piani \cite{SP18} for PPT entanglement, defined as follows. For a $d\otimes d$ antisymmetric state $\rho_{\A}$,
\begin{center}
    $p^{\rm PPT}(\rho_\A):=\max_{\sigma} 
    \Tr(\Pi_\A\sigma)$ \quad subject to \quad $\begin{cases}
        \text{$\sigma$ is a PPT state}, \\ \text{$\Pi_\A\sigma \Pi_\A=\Tr(\Pi_\A\sigma) \rho_\A$,}
    \end{cases}$
\end{center}
where $\Pi_\A$ denotes the projection onto the antisymmetric space $\Comp^d\wedge \Comp^d$. It was shown in \cite{PV19} that, if $p^{\rm PPT}(\rho_A)<\frac{1}{2r+2}$ for some integer $r\geq 2$, then any optimal PPT state $\sigma^*$ for $p^{\rm PPT}(\rho_{\A})$ satisfies $\SN(\sigma^*)\geq r+1$. Furthermore, it was conjectured  that the bound $\frac{1}{2r+2}$ cannot be improved. We resolve this conjecture by showing that the minimum value of $p^{\rm PPT}$ equals $\frac{1}{d+2}$ whenever $d \ge 4$ is even.

%-----------------------

\section{Preliminaries}

\subsection{Schmidt number and positive linear maps} \label{sec-notations}

Throughout this paper, we use Dirac's \textit{bra-ket} notations: column vectors $v \in \mathbb{C}^d$ are written as kets $\ket{v}$ and their conjugate transpose $v^* \in (\Comp^d)^*$ are written as bras $\bra{v}$. We distinguish these from the \textit{entrywise complex-conjugated vectors} $\overline{v}\in \Comp^d$, noting that $v^*=(\overline{v})^{\top}$. The standard \emph{inner product} is denoted by $\langle v | w \rangle:=v^*w\in \Comp$ and the \emph{outer product} (rank-one operator) by $\ketbra{v}{w}:=vw^*\in M_d(\Comp)$. The standard basis in $\Comp^d$ is denoted by $\{\ket{j}:=e_j\}_{j=1}^d$, and we write $\ket{ij} := \ket{i} \otimes \ket{j}$ to denote tensor products of standard basis vectors. Finally, we denote by $\D(\Comp^d)$ the set of \textit{quantum states}, i.e., a positive semidefinite matrix $\rho\in M_d(\Comp)^+$ of unit trace.

Let $H_A=\Comp^{d_A}$ and $H_B=\Comp^{d_B}$ be finite-dimensional Hilbert spaces. Any nonzero bipartite vector $\xi\in H_{AB}:=H_A\otimes H_B$ admits a \textit{Schmidt decomposition} \cite{NiCh} $|\xi\ra=\sum_{i=1}^k{\lambda_i}|v_i\ra\otimes |w_i\ra$ where $\lambda_1\geq \cdots \geq \lambda_k>0$, and $\{v_i\}_{i=1}^k$ and $\{w_i\}_{i=1}^k$ are orthonormal subsets in $H_A$ and $H_B$, respectively. The integer $k$ and the coefficients $\{\lambda_i\}_{i=1}^k$ are uniquely determined, and we call $k$ the \textit{Schmidt rank} of $\xi$ and write ${\rm SR}(|\xi\ra)=k$. For any natural number $k$, let us define
    $$\sch_k(d_A\otimes d_B):={\rm conv}\{|\xi\ra\la \xi|: \xi\in H_{AB}, \;\;\| \xi\|_2=1,\;\; {\rm SR}(|\xi\ra)\leq k\}\subset \D(H_{AB})$$
and write simply $\sch_k$ when no confusion arises. Then the \textit{Schmidt number} of a quantum state $\rho_{AB}\in \D(H_{AB})$ is defined as the smallest natural number $k$ such that $\rho_{AB}\in \sch_k$, and we denote this by ${\rm SN}(\rho_{AB})=k$. By definition, $\sch_k=\{\rho\in \D(H_{AB}):SN(\rho)\leq k\},$
and therefore, we have $\sch_1\subset \sch_2\subset \sch_3\subset\cdots$. Note that $1\leq {\rm SN}(\rho_{AB})\leq \min(d_A,d_B)$ and $\sch_k=\D(H_{AB})$ for all $k\geq \min(d_A,d_B)$. Furthermore, $\sch_1(d_A\otimes d_B)$ coincides with the set $\SEP(d_A\otimes d_B)$ of \textit{separable} states in a $d_A\otimes d_B$ system; every state outside the set $\SEP$ is called \textit{entangled}. Another important class of quantum states is the set $\PPT(d_A\otimes d_B)$ of \textit{positive-partial-trasnpose (PPT)} quantum states, i.e., $\rho_{AB}\in \D(H_{AB})$ satisfying
    $$\rho_{AB}^{\Gamma}:=(\id_A\otimes \top_B)(\rho_{AB})\geq 0.$$
A celebrated Horodecki-Perez criterion \cite{HHH96,Per96} is reformulated as $\SEP\subset \PPT$, providing a simple yet still powerful necessary condition for separability.

A linear map $\Le:M_{d_A}(\Comp)\to M_{d_B}(\Comp)$ is called \emph{positive} if $\Le(M_{d_A}(\Comp)^+)\subseteq M_{d_B}(\Comp)^+$. For an integer $k\geq 1$, $\Le$ is called \emph{$k$-positive} if the ampliation map $\id_k\otimes \Le:M_k(\Comp)\otimes M_{d_A}(\Comp)\to M_k(\Comp)\otimes M_{d_B}(\Comp)$ is positive. Furthermore, $\Le$ is called \emph{completely positive} or shortly \emph{CP} (resp. \emph{completely copositive; co-CP}) if $\Le$ (resp. $\top\circ \Le$) is $k$-positive for all $k\geq 1$, where $\top:Z\mapsto Z^{\top}$ denotes the transposition map. Finally, $\Le$ is said to be \emph{decomposable} if $\Le$ is a sum of a CP and a co-CP linear map. It is well-known that the complete positivity of $\Le$ is equivalent to the $\min(d_A,d_B)$-positivity \cite{Cho75a}. 

The positivity properties are often efficiently analyized via the {Choi–Jamiołkowski isomorphism} \cite{Jam72,Cho75a}. A (normalized) \emph{Choi matrix} of a linear map $\Le:M_{d_A}(\Comp)\to M_{d_B}(\Comp)$ is defined by
    $$C_{\Le}:= (\id_{d_A}\otimes \Le)(|\om_{d_A}\ra\la \om_{d_A}|)=\frac{1}{d_A} \sum_{i,j=1}^{d_A} |i\ra\la j|\otimes \Le(|i\ra\la j|)\in M_{d_A}(\Comp)\otimes M_{d_B}(\Comp),$$
where $|\om_{d_A}\ra:=\frac{1}{\sqrt{d_A}}\sum_{j=1}^{d_A} |jj\ra$ is the \textit{maximally entangled vector} in $\Comp^{d_A}\otimes \Comp^{d_A}$. Arguably one of the most important results is that $\Le$ is completeley positive if and only if $C_{\Le}$ is positive semidefinite \cite{Cho75a}. Concerning the Choi matrices of $k$-positive linear maps, let us recall the following fundamental duality between $k$-positivity and Schmidt number, established in \cite{TH00,SSZ09}

\begin{theorem} \label{thm-dual-schmidt}
For a bipartite quantum state $\rho_{AB}\in \D(\Comp^{d_A}\otimes \Comp^{d_B})$, the following are equivalent:
\begin{enumerate}
    \item $\rho_{AB}\in \sch_k$, i.e., $\SN(\rho_{AB})\leq k$,

    \item $(\id_{d_A}\otimes \Le')(\rho_{AB})$ for all $k$-positive linerar maps $\Le':M_{d_B}(\Comp)\to M_{d_A}(\Comp)$,

    \item $\Tr(\rho_{AB} C_{\Le})\geq 0$ for all $k$-positive linear maps $\Le:M_{d_A}(\Comp)\to M_{d_B}(\Comp)$,
\end{enumerate}
\end{theorem}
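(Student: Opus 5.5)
We prove the cycle $(1)\Rightarrow(2)\Rightarrow(3)\Rightarrow(1)$. In (2) the intended condition is $(\id_{d_A}\otimes \Le')(\rho_{AB})\geq 0$, and we read it that way throughout. The only nontrivial implication is $(3)\Rightarrow(1)$; we obtain it from the Hahn--Banach separation theorem together with a description of the dual cone of $\sch_k$.

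\emph{$(1)\Rightarrow(2)$.} By convexity it suffices to take $\rho_{AB}=|\xi\ra\la\xi|$ with $\SR(|\xi\ra)\leq k$. Write $|\xi\ra=(X\otimes I)\sum_{i=1}^k |i\ra|w_i\ra$, where the $|w_i\ra$ are orthonormal and $X:\Comp^k\to\Comp^{d_A}$ is a linear map. Then
$$(\id\otimes\Le')(|\xi\ra\la\xi|)=(X\otimes I)\Big[(\id_k\otimes\Le')\Big(\sum_{i,j=1}^k|i\ra\la j|\otimes|w_i\ra\la w_j|\Big)\Big](X\otimes I)^*.$$
The middle operator is positive semidefinite because $\Le'$ is $k$-positive and the rank-one matrix inside is positive. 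Conjugation by $X\otimes I$ preserves positivity, so the whole expression is $\geq 0$.

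\emph{$(2)\Rightarrow(3)$.} Given a $k$-positive $\Le$, let $\Le'$ be its Hilbert--Schmidt adjoint, possibly composed with the transpose or conjugation needed to match how $\Tr(\rho C_\Le)$ pairs the two factors. The standard identity
$$\Tr(\rho\, C_\Le)=\tfrac{1}{d_A}\la \om|\,(\id\otimes\Le^\dagger)(\rho)\,|\om\ra$$
(up to the relevant flip of tensor factors) reduces (3) to positivity of $(\id\otimes\Le^\dagger)(\rho)$. It therefore remains to check that $\Le^\dagger$ is again $k$-positive. This holds because $k$-positivity is self-dual under the trace pairing: $\id_k\otimes\Le^\dagger$ is the adjoint of $\id_k\otimes\Le$, and the positive cone is self-dual. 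Some care with swapping the roles of $A$ and $B$ is needed here, and this is pure bookkeeping.

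\emph{$(3)\Rightarrow(1)$.} This is the main step. The set $\sch_k$ is compact and convex, being the convex hull of a compact set. Suppose $\rho\notin\sch_k$. Hahn--Banach then gives a Hermitian $W$ such that
$$\Tr(W\sigma)\geq 0 \ \text{for all } \sigma\in\sch_k, \qquad \Tr(W\rho)<0.$$
(One separates within the trace-one hyperplane and shifts by a multiple of the identity.) Write $W=C_\Le$ for the unique linear map $\Le$ with this Choi matrix. It remains to show that $\Le$ is $k$-positive, which yields a contradiction with (3).

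To see this, take any vector $\eta=\sum_{i=1}^k |i\ra\otimes|y_i\ra$ with $y_i\in\Comp^{d_A}$, and any $z\in\Comp^k\otimes\Comp^{d_B}$. Expanding
$$\la z|(\id_k\otimes\Le)(|\eta\ra\la\eta|)|z\ra$$
through the Choi matrix expresses it as $d_A\,\Tr(C_\Le\,|\zeta\ra\la\zeta|)$, where $\zeta=\sum_i \overline{y_i}\otimes z_i$ with $z=\sum_i|i\ra\otimes|z_i\ra$. (The precise placement of conjugates depends on conventions.) This vector $\zeta$ has Schmidt rank at most $k$, so the quantity is $\geq 0$. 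Since every positive element of $M_k\otimes M_{d_A}$ is a sum of such rank-one terms $|\eta\ra\la\eta|$, $\Le$ is $k$-positive.

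The expected obstacle is only keeping the conjugation and transpose conventions consistent in this last correspondence; conceptually, the argument is the duality $\sch_k^{\circ}=\{C_\Le:\Le\ k\text{-positive}\}$.
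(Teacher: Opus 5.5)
Your proof is correct. The paper gives no proof of its own for this theorem; it recalls it from \cite{TH00,SSZ09}, and your cycle is essentially the standard argument found there: conjugation by $X\otimes I$, passage to the $k$-positive adjoint $\Le^{\dagger}$, and separation plus the correspondence between $k$-block-positive Choi matrices and $k$-positive maps. (A cosmetic point only: with the paper's normalization one has exactly $\Tr(\rho_{AB}C_{\Le})=\la \om_{d_A}|(\id_{d_A}\otimes\Le^{\dagger})(\rho_{AB})|\om_{d_A}\ra$, with no factor $1/d_A$ and no flip of tensor factors, which does not affect the sign argument.)
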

In particular, a linear map $\Le$ is $k$-positive if and only if $C_{\Le}$ is \textit{$k$-blockpositive}, that is, 
    $$\la \xi|C_{\Le}|\xi\ra\geq 0\;\;\forall\,\xi\in \Comp^{d_A}\otimes \Comp^{d_B}\text{ with } \SR(|\xi\ra)\leq k,$$
as noted in \cite{Jam72,SSZ09}. Note that the case $k=1$ yields a duality between separability and positivity: a state $\rho_{AB}$ is separable if and only if $(\id_A\otimes \Le')(\rho_{AB})$ is positive semidefinite for all positive linear maps $\Le'$, recovering the result of Horodecki \cite{HHH96}. Furthermore, classical convex cone theory implies that several reverse-type duality relations also hold. For example, if a linear map  $\Le:M_{d_A}(\Comp)\to M_{d_B}(\Comp)$ is $(k-1)$-positive but not $k$-positive, then there exists a quantum state $\rho_{AB}$ with $\SN(\rho_{AB})=k$ such that $\Tr(\rho_{AB} C_{\Le})< 0$. For these reasons, $(k-1)$-positive (and not $k$-positive) linear maps, and their Choi matrices, are often referred to as \textit{$k$-Schmidt number witnesses} \cite{SBL01}, as they are capable of detecting quantum states whose Schmidt number is precisely $k$.

Such duality principles are in fact ubiquitous: they can be formulated within a general framework using the notion of \emph{mapping cones}~\cite{Sto86,Sko11} together with the Choi--Jamiołkowski correspondence. We refer to~\cite{GKS21,Kye23} for a detailed survey in this direction. For our purposes, let us recall another duality relation between PPT states and decomposable linear maps~\cite{Sto82}.

\begin{theorem} \label{thm-dual-ppt}
The following two equivalences hold.
\begin{enumerate}
    \item A bipartite $d_A\otimes d_B$ state $\rho_{AB}$ is PPT if and only if $\Tr(\rho_{AB}C_{\Le})\geq 0$ for all decomposable linear maps $\Le:M_{d_A}(\Comp)\to M_{d_B}(\Comp)$.

    \item A linear map $\Le:M_{d_A}(\Comp)\to M_{d_B}(\Comp)$ is decomposable if and only if $\Tr(\rho_{AB}C_{\Le})\geq 0$ for all $d_A\otimes d_B$ PPT states $\rho_{AB}$.
\end{enumerate}
\end{theorem}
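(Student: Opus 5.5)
We prove the two equivalences by a cone-duality argument, working in the real vector space of Hermitian matrices in $M_{d_A}(\Comp)\otimes M_{d_B}(\Comp)$ with the pairing $\la X,Y\ra=\Tr(XY)$. Let $\mathsf{PPT}^+$ denote the closed convex cone of positive semidefinite $X$ with $X^{\Gamma}\geq 0$. Let $\mathsf{Dec}$ denote the cone of Choi matrices of decomposable maps. First we identify $\mathsf{Dec}$. A map $\Le$ is CP if and only if $C_{\Le}\geq 0$. Since $C_{\top\circ\Le}=(C_{\Le})^{\Gamma}$, the map $\Le$ is co-CP if and only if $C_{\Le}^{\Gamma}\geq 0$. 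Hence
\[
\mathsf{Dec}=\{P+Q^{\Gamma}: P,Q\geq 0\}=\mathsf{P}+\mathsf{P}^{\Gamma},
\]
where $\mathsf{P}$ is the positive semidefinite cone. Both summands are closed convex cones, and each sits inside the pointed cone of matrices with nonnegative trace ($\Tr(Q^\Gamma)=\Tr Q$). So $\mathsf{Dec}$ is a closed convex cone: if $P_n+Q_n^\Gamma$ converges, the traces of $P_n$ and $Q_n$ are bounded, and compactness gives the limit in the same form.

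Next we compute the dual cone of $\mathsf{Dec}$. Partial transpose is self-adjoint for the trace pairing, i.e.\ $\Tr(\rho Q^{\Gamma})=\Tr(\rho^{\Gamma}Q)$. Therefore $\Tr(\rho\,(P+Q^{\Gamma}))\geq 0$ for all $P,Q\geq 0$ if and only if $\Tr(\rho P)\geq 0$ and $\Tr(\rho^{\Gamma}Q)\geq 0$ for all $P,Q\geq 0$. By self-duality of the positive semidefinite cone, this holds if and only if $\rho\geq 0$ and $\rho^{\Gamma}\geq 0$. Thus $\mathsf{Dec}^{*}=\mathsf{PPT}^+$.

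Statement (1) is exactly this computation restricted to states. For a state $\rho_{AB}$, the condition $\rho_{AB}\geq 0$ is automatic, and since every decomposable $C_{\Le}$ has the form $P+Q^{\Gamma}$, testing against all $C_{\Le}$ with $\Le$ decomposable is the same as testing against all of $\mathsf{Dec}$.

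Statement (2) follows from the bipolar theorem. Because $\mathsf{Dec}$ is a closed convex cone, $\mathsf{Dec}=\mathsf{Dec}^{**}=(\mathsf{PPT}^+)^{*}$. Positive scaling does not affect the sign of $\Tr(\rho C_{\Le})$, so testing against $\mathsf{PPT}^+$ is the same as testing against PPT states. Hence $\Tr(\rho_{AB}C_{\Le})\geq 0$ for all PPT states if and only if $C_{\Le}\in\mathsf{Dec}$, i.e.\ if and only if $\Le$ is decomposable. The only delicate step is the closedness of $\mathsf{Dec}$, which the trace-boundedness argument above handles.
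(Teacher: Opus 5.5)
Your argument is correct. Note that the paper gives no proof of this theorem: it quotes it as a known duality, attributing it to St\o rmer \cite{Sto82} and pointing to the mapping-cone framework \cite{Sto86,Sko11,GKS21,Kye23} for the general picture.

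St\o rmer's original result is formulated for maps on general C$^*$-algebras, and the matrix statement here is its finite-dimensional specialization. The mapping-cone viewpoint treats this duality and the $k$-positivity/Schmidt-number duality of Theorem \ref{thm-dual-schmidt} uniformly. You instead take the direct finite-dimensional route:
\begin{enumerate}
\item identify the Choi cone of decomposable maps as $\mathsf{P}+\mathsf{P}^{\Gamma}$;
\item use self-adjointness of $\Gamma$ for the trace pairing, together with self-duality of the PSD cone, to get $\mathsf{Dec}^*=\mathsf{PPT}^+$, which is (1);
\item obtain (2) from the bipolar theorem once $\mathsf{Dec}$ is shown to be closed.
\end{enumerate}
This is self-contained and elementary. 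It also isolates the one genuinely nontrivial point, namely closedness of a sum of two closed cones. The cost is that it does not reach beyond matrix algebras or the broader mapping-cone setting.

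Two small repairs are needed.
\begin{itemize}
\item In (2) the map $\Le$ is arbitrary, so before working in the real space of Hermitian matrices you should observe that the hypothesis forces $C_{\Le}$ to be Hermitian. PPT states (indeed product states) span the Hermitian matrices over $\mathbb{R}$, so $\Tr(\rho C_{\Le})\in\mathbb{R}$ for all of them gives $C_{\Le}=C_{\Le}^*$.
\item The matrices with nonnegative trace form a half-space, not a pointed cone. What your closedness argument actually uses, correctly, is that $\Tr$ is strictly positive on nonzero elements of both $\mathsf{P}$ and $\mathsf{P}^{\Gamma}$. That bounds $\Tr P_n+\Tr Q_n$, hence bounds $\|P_n\|$ and $\|Q_n\|$, and compactness finishes the argument.
\end{itemize}
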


An important implication of \cref{thm-dual-schmidt,thm-dual-ppt} is that the problem of finding PPT states with high Schmidt number can be transferred to a problem concerning positive linear maps, as follows.

\begin{corollary} \label{cor-PPTSch-equiv}
For $d_A, d_B\geq 2$ and $1\leq k< \min(d_A,d_B)$, the following are equivalent:
\begin{enumerate}
    \item There exists a $d_A\otimes d_B$ PPT state $\rho_{AB}$ with $\SN(\rho_{AB})> k$,

    \item There exists a $k$-positive indecomposable linear map $\Le:M_{d_A}(\Comp) \to M_{d_B}(\Comp)$.
\end{enumerate}
\end{corollary}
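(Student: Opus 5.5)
Both implications follow from the dualities in \cref{thm-dual-schmidt,thm-dual-ppt}; the only work is a short convex-cone argument, and we prove the two directions separately.

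\emph{(1) $\Rightarrow$ (2).} Let $\rho_{AB}$ be PPT with $\SN(\rho_{AB})>k$, so $\rho_{AB}\notin\sch_k$. By \cref{thm-dual-schmidt} (3), there is a $k$-positive map $\Le:M_{d_A}(\Comp)\to M_{d_B}(\Comp)$ with $\Tr(\rho_{AB}C_{\Le})<0$. Since $\rho_{AB}$ is PPT, \cref{thm-dual-ppt} (1) gives $\Tr(\rho_{AB}C_{\Le'})\geq 0$ for every decomposable $\Le'$. Hence $\Le$ is not decomposable, and $\Le$ is the required $k$-positive indecomposable map.

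\emph{(2) $\Rightarrow$ (1).} Let $\Le$ be $k$-positive and indecomposable. By \cref{thm-dual-ppt} (2), there is a PPT state $\rho_{AB}$ with $\Tr(\rho_{AB}C_{\Le})<0$. Since $\Le$ is $k$-positive, \cref{thm-dual-schmidt} (3) shows $\rho_{AB}\notin\sch_k$, that is, $\SN(\rho_{AB})>k$. So $\rho_{AB}$ is a PPT state with Schmidt number larger than $k$.

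Three remarks on hypotheses. The assumption $k<\min(d_A,d_B)$ plays no role in either argument. It only excludes the trivial case: for $k\ge\min(d_A,d_B)$, every $k$-positive map is CP, hence decomposable, and $\sch_k=\D$, so both statements are false. Second, \cref{thm-dual-schmidt} is stated with $\Le$ running from $M_{d_A}$ to $M_{d_B}$, which matches the direction of $\Le$ in \cref{thm-dual-ppt}, so no transposition of the roles of $A$ and $B$ is needed. Third, the step that carries the real content is the separation in \cref{thm-dual-ppt} (2): the decomposable cone must be closed for the dual characterisation to hold. This is already built into the cited result, since the CP and co-CP cones are closed and their sum is closed in finite dimensions. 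Given that result, no further obstacle remains.
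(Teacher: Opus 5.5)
Your proof is correct and follows the paper's own argument. The paper proves (1)$\Rightarrow$(2) by contradiction: if every $k$-positive map were decomposable, Theorems~\ref{thm-dual-ppt} and~\ref{thm-dual-schmidt} would force $\SN(\rho_{AB})\le k$, and it calls (2)$\Rightarrow$(1) analogous; you argue both directions directly from the same two dualities, and your caveat about closedness is not needed since Theorem~\ref{thm-dual-ppt} is cited (besides, ``closed in finite dimensions'' is not a valid reason, because sums of closed convex cones need not be closed; it holds here because the CP and co-CP cones have compact bases cut out by the common functional $\Le\mapsto\Tr C_{\Le}$).
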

\begin{proof}
Let us assume $(1)$ and suppose every $k$-positive linear map is decomposable. Then Theorem \ref{thm-dual-ppt} implies that, for every PPT state $\rho_{AB}$ and a $k$-positive linear map $\Le$,
    $$\Tr(\rho_{AB}C_{\Le})\geq 0,$$
as $\Le$ is decomposable. By Theorem \ref{thm-dual-schmidt}, this implies that $\SN(\rho_{AB})\leq k$, a contradiction.

The proof of the direction $(2)\Rightarrow (1)$ is analogous.
\end{proof}

\subsection{Compact group symmetries and twirling operations}
Let us recall several symmetry notions and related results in quantum information theory and representation theroy which have been very effectively applied in the analysis of quantum entanglement and $k$-positivity \cite{VW01,PJPY24,PY24}. For an introduction to representation theory, we refer the reader to the excellent textbooks \cite{Simon95,Fol16}.

For a compact Hausdorff group $G$, consider (finite-dimensional) {\it unitary representations} $\pi:G\to U(d)$, $\pi_A:G\to  U(d_A)$, and $\pi_B:G\to  U(d_B)$ of $G$. Then we call
\begin{enumerate}
    \item a matrix $X\in M_d(\Comp)$ \textit{$\pi$-invariant} if $\pi(x)X\pi(x)^*=X$ for all $x\in G$,

    \item a linear map $\Le:M_{d_A}(\Comp)\to M_{d_B}(\Comp)$ \textit{$(\pi_A,\pi_B)$-covariant} if $\Le\circ \Ad_{\pi_A(x)} = \Ad_{\pi_B(x)} \circ \Le$ for all $x\in G$, where $\Ad_U(Z):=UZU^*$. In other words, for all $x\in G$ and $Z\in M_{d_A}(\Comp)$, we have $\Le(\pi_A(x)Z\pi_A(x)^*)=\pi_B(x)\Le(Z)\pi_B(x)^*$.
\end{enumerate}
Let us denote by ${\rm Inv}(\pi)$ (resp. ${\rm Cov}(\pi_A,\pi_B)$) the set of all $\pi$-invariant matrices (resp. $(\pi_A,\pi_B)$-covariant linear maps). We are often interested in the case that $G$ is a subgroup of a unitary group $U(d)$ and $\pi_A$ and $\pi_B$ are taken as the \textit{fundamental representation} $\iota_G$ of $G$ or its \textit{contragredient representation} $\overline{\iota_G}$, that is,
\begin{align*}
    \iota_G &: U\in G\mapsto U\in U(d),\\
    \overline{\iota_G} &: U\in G\mapsto \overline{U}\in U(d),
\end{align*}
where $\overline{U}$ denotes the entrywise complex conjugate of $U$.

$\pi$ is called \textit{irreducible} if ${\rm Inv}(\pi)=\Comp I_d$ (this coincides with the usual definition of irreducibility in representation theory, thanks to Schur's lemma \cite{Simon95}). If $\pi$ is irreducible, so is its contragredient representation $\overline{\pi}$. For unitary representations $\pi_A$ and $\pi_B$, the \textit{direct sum representation $\pi_A\oplus \pi_B:G\to U(d_A+d_B)$} (resp. {\it tensor representation} $\pi_A\otimes \pi_B:G\rightarrow U(d_Ad_B)$) is given by $(\pi_A\oplus \pi_B)(x)=\pi_A(x)\oplus \pi_B(x)$ (resp. $(\pi_A\otimes \pi_B)(x)=\pi_A(x)\otimes \pi_B(x)$) for $x\in G$. Two unitary
representations of the same dimension $\pi,\pi':G\to U(d)$ are said to be \textit{unitarily equivalent} if there exists a unitary matrix $U\in U(d)$ such that $\pi'(x)=U \pi(x)U^*$ for all $x\in G$; we write $\pi\cong \pi'$.

It is a well-known fact \cite{Simon95} that every finite dimensional unitary representation can be decomposed into irreducible subrepresentations (called \textit{irrep decomposition})
\begin{equation}\label{eq-fusion}
    \pi\cong \bigoplus_{i=1}^l \sigma_i\otimes I_{m_i},
\end{equation}
where $\sigma_i:G\to U(d_i)$ ($i=1,2,\ldots, l$) are mutually inequivalent irreducible sub-representations of $\pi$ with multiplicity $m_i$. The numbers $l$ and $m_1,\ldots, m_l$ are uniquely determined. Furthermore, when $m_i\equiv 1$, we call $\pi$ \textit{multiplicity-free}, in which case we have a rather simple description of the space
\begin{equation} \label{eq-inv-multfree}
    \Inv(\pi)={\rm span}\{\Pi_1,\ldots, \Pi_l\},
\end{equation}
where $\Pi_i$'s are mutually orthogonal projections onto the subrepresentation spaces of $\sigma_i$.

We are additionally interested in two types of symmetrizing operations: for unitary representations $\pi,\pi_A,\pi_B$ of $G$,
\begin{enumerate}
    \item \textit{$\pi$-twirling} $\mathcal{T}_{\pi}X:=\mathbb{E}_{G}[{\rm Ad}_{\pi(\cdot)}(X)]=\mathbb{E}_{G}[\pi(\cdot)X\pi(\cdot)^*]$ for a matrix $X\in M_d(\Comp)$,

    \item \textit{$(\pi_A,\pi_B)$-twirling} $\mathcal{T}_{\pi_A,\pi_B}\Le:=\mathbb{E}_G[{\rm Ad}_{\pi_B(\cdot)^*}\circ \Le\circ {\rm Ad}_{\pi_A(\cdot)}]$ for a linear map $\Le: M_{d_A}(\Comp)\to M_{d_B}(\Comp)$,
\end{enumerate}
where the expectation is defined with respect to the (normalized) Haar measure of the compact group $G$. These two operations are understood as projections (conditional expectations) onto the spaces $\Inv(\pi)$ and $\Cov(\pi_A,\pi_B)$, resp. That is, we have ${\rm Ran}\mathcal{T}_{\pi}=\Inv(\pi)$ and $\mathcal{T}_{\pi}\circ \mathcal{T}_{\pi}=\mathcal{T}_{\pi}$ so that a matrix $X$ is $\pi$-invariant if and only if $\T_{\pi}(X)=X$; the analogous statement holds true for $\mathcal{T}_{\pi_A,\pi_B}$.

We further collect several general properties of group symmetries and twirling operations which were proved and discussed in detail in \cite{VW01,PJPY24,PY24,NP25}.

\begin{proposition} [\cite{VW01,PJPY24}] \label{prop-Symm-general}
Let $\pi_A$ and $\pi_B$ be unitary representations.
\begin{enumerate}
    \item For $X\in M_{d_A}\otimes M_{d_B}$, $X\in \Inv(\pi_A\otimes \pi_B)$ if and only if $X^{\Gamma}=(\id_A\otimes \top)(X)\in \Inv(\pi_A\otimes \overline{\pi_B})$.

    \item For a linear map $\Le:M_{d_A}(\Comp)\to M_{d_B}(\Comp)$, $\Le\in \Cov(\pi_A,\pi_B)$ if and only if $\top\circ \Le\in \Cov(\pi_A,\overline{\pi_B})$.

    \item For a linear map $\Le:M_{d_A}(\Comp)\to M_{d_B}(\Comp)$, we have
        $$C_{\T_{\pi_A,\pi_B}\Le} = \T_{\overline{\pi_A}\otimes \pi_B}C_{\Le}.$$
    In particular,     $\Le\in \Cov(\pi_A,\pi_B)$ if and only if $C_{\Le}\in \Inv(\overline{\pi_A}\otimes \pi_B)$.
\end{enumerate}
\end{proposition}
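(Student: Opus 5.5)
We verify the three items directly from the definitions, using only the facts that transposition and entrywise conjugation are multiplicative up to order reversal or without it, and that the normalized Haar measure is invariant.

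\textbf{Item (1).} Write $X=\sum_k A_k\otimes B_k$ and set $U=\pi_A(x)$, $W=\pi_B(x)$. Then
$$\big((U\otimes W)X(U\otimes W)^*\big)^{\Gamma}=\sum_k UA_kU^*\otimes (WB_kW^*)^{\top}.$$
Since $W$ is unitary, $(WB_kW^*)^{\top}=\overline{W}B_k^{\top}\overline{W}^*$. Hence
$$\big((U\otimes W)X(U\otimes W)^*\big)^{\Gamma}=(U\otimes\overline W)X^{\Gamma}(U\otimes \overline W)^*.$$
The partial transpose $\Gamma$ is a linear involution, so $X$ is $\pi_A\otimes\pi_B$-invariant if and only if $X^{\Gamma}$ is $\pi_A\otimes\overline{\pi_B}$-invariant.

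\textbf{Item (2).} The same identity $(WYW^*)^{\top}=\overline W\,Y^{\top}\,\overline W^*$, applied to $Y=\Le(Z)$, gives
$$\top\circ\Ad_{\pi_B(x)}\circ\Le=\Ad_{\overline{\pi_B(x)}}\circ\top\circ\Le.$$
Composing the covariance relation $\Le\circ\Ad_{\pi_A(x)}=\Ad_{\pi_B(x)}\circ\Le$ with $\top$ on the left therefore yields the covariance relation for $\top\circ\Le$ with respect to $(\pi_A,\overline{\pi_B})$. The converse follows because $\top$ is an involution.

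\textbf{Item (3).} This is the main computation; the only real obstacle is bookkeeping of conjugates, which is handled by the transpose trick $(I\otimes M)|\Om\ra=(M^{\top}\otimes I)|\Om\ra$ for the unnormalized $|\Om\ra=\sum_j|jj\ra$.

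First consider $\Le'=\Ad_{W^*}\circ\Le\circ\Ad_U$. Its Choi matrix is
$$C_{\Le'}=(I\otimes W^*)\,(\id\otimes\Le)\big((I\otimes U)|\om\ra\la\om|(I\otimes U)^*\big)\,(I\otimes W).$$
The transpose trick gives $(I\otimes U)|\om\ra=(U^{\top}\otimes I)|\om\ra$. The left tensor factor commutes with $\id\otimes\Le$, so
$$C_{\Le'}=(U^{\top}\otimes W^*)\,C_{\Le}\,(U^{\top}\otimes W^*)^*=(\overline{U^*}\otimes W^*)\,C_{\Le}\,(\overline{U^*}\otimes W^*)^*.$$

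Now set $U=\pi_A(x)$, $W=\pi_B(x)$ and average over $G$. Since $\pi(x)^*=\pi(x^{-1})$, substituting $y=x^{-1}$, which preserves Haar measure, gives
$$C_{\T_{\pi_A,\pi_B}\Le}=\E_y\big[(\overline{\pi_A(y)}\otimes\pi_B(y))\,C_{\Le}\,(\overline{\pi_A(y)}\otimes\pi_B(y))^*\big]=\T_{\overline{\pi_A}\otimes\pi_B}C_{\Le}.$$
Linearity of $\Le\mapsto C_\Le$ justifies passing the expectation through.

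For the final statement, $\Le\mapsto C_\Le$ is injective. Hence $\T_{\pi_A,\pi_B}\Le=\Le$ if and only if $\T_{\overline{\pi_A}\otimes\pi_B}C_\Le=C_\Le$. By the characterization of $\Inv$ and $\Cov$ as the fixed points of the respective twirls, this is exactly the claimed equivalence.
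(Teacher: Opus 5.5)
Your proof is correct. The identity $(WBW^*)^{\top}=\overline{W}B^{\top}\overline{W}^*$ gives (1) and (2), and the transpose trick combined with the inversion invariance of Haar measure gives (3), matching the paper's conventions for $\T_{\pi_A,\pi_B}$ and $\T_{\pi}$. The paper gives no proof of its own here and defers to \cite{VW01,PJPY24}, where the result rests on essentially this direct computation.
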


\begin{proposition} [\cite{PJPY24,PY24}] \label{prop-twirling-preserving}
Let $\pi_A$ ad $\pi_B$ be unitary representations of $G$ with dimensions $d_A$ and $d_B$, resp.
\begin{enumerate}
    \item The $\pi_A\otimes \pi_B$-twirling $\T_{\pi_A\otimes \pi_B}$ preserves following properties of $d_A\otimes d_B$ matrices: separability, PPT property, decomposability, $k$-blockpositivity, and the membership in $\sch_k$.

    \item The $(\pi_A, \pi_B)$-twirling $\T_{\pi_A,\pi_B}$ preserves following properties of linear maps: CP, co-CP, decomposability, and $k$-positivity.
\end{enumerate}
\end{proposition}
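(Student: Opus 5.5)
The plan is to reduce both parts to averaging over the group, using unitarity of $\pi_A(x)$, $\pi_B(x)$ and invariance of the Haar measure.

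\textbf{Part (1).} For a fixed $x\in G$, conjugation by the local unitary $\pi_A(x)\otimes\pi_B(x)$ sends a product vector $v\otimes w$ to $\pi_A(x)v\otimes \pi_B(x)w$. Hence it preserves Schmidt rank, so it maps $\sch_k$ to itself and $\SEP=\sch_1$ to itself. For the PPT property, note that
$$\bigl((U\otimes W)X(U\otimes W)^*\bigr)^{\Gamma}=(U\otimes \overline{W})X^{\Gamma}(U\otimes\overline{W})^*.$$
This follows by checking it on $X=A\otimes B$, where it reduces to $(WBW^*)^{\top}=\overline{W}B^{\top}W^{\top}$. Thus local unitary conjugation preserves positivity of the partial transpose.

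For $k$-blockpositivity, we have
$$\la\xi|(U\otimes W)X(U\otimes W)^*|\xi\ra=\la\xi'|X|\xi'\ra, \qquad \xi'=(U\otimes W)^*\xi,$$
and $\SR(\xi')=\SR(\xi)$. Decomposability of a matrix means it has the form $P+Q^{\Gamma}$ with $P,Q\ge0$. This is preserved because $P$ stays positive under conjugation, and by the identity above the $Q^{\Gamma}$ term becomes $\bigl((U\otimes\overline W)Q(U\otimes \overline W)^*\bigr)^{\Gamma}$.

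Each of these sets is a closed convex cone, or a convex set of states with trace preserved. Since $\T_{\pi_A\otimes\pi_B}$ is a Haar average, which is a limit of convex combinations, of conjugates that each lie in the set, the twirl lies in the set as well. For decomposability, one can average the $P$'s and the conjugated $Q$'s separately, which keeps the decomposition explicit.

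\textbf{Part (2).} I would transfer Part (1) through Proposition \ref{prop-Symm-general}(3), which gives $C_{\T_{\pi_A,\pi_B}\Le}=\T_{\overline{\pi_A}\otimes\pi_B}C_\Le$. The relevant correspondences are:
\begin{itemize}
\item CP is equivalent to $C_\Le\ge0$, and positivity is preserved by any unitary conjugation and averaging;
\item $k$-positivity is equivalent to $k$-blockpositivity of $C_\Le$;
\item co-CP is equivalent to $C_\Le^{\Gamma}\ge0$, i.e.\ to $C_\Le$ being PPT up to normalization;
\item decomposability is equivalent to $C_\Le$ being a decomposable matrix.
\end{itemize}
Applying Part (1) with the representation $\overline{\pi_A}\otimes\pi_B$ then gives all four preservation claims.

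Alternatively, one can argue directly. The map $\Ad_{\pi_B(x)^*}\circ\Le\circ\Ad_{\pi_A(x)}$ is a composition of $\Le$ with CP (indeed unitary) maps, so it inherits $k$-positivity and CP. For co-CP one uses $\top\circ\Ad_W=\Ad_{\overline W}\circ\top$. The averaging step is then the same as in Part (1).

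The only point requiring care is the partial-transpose intertwining identity, namely that conjugation by $W$ becomes conjugation by $\overline{W}$ after transposition. I expect this to be the main, though minor, obstacle; everything else is routine convexity and closedness.
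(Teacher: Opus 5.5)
Your proposal is correct. It is the standard argument behind this result, which the paper does not prove but cites from \cite{PJPY24,PY24}: each local-unitary conjugation preserves the relevant closed convex set or cone, with your intertwining identity handling the partial-transpose conditions; the Haar average then stays in that set; and Part (2) follows either through $C_{\T_{\pi_A,\pi_B}\Le}=\T_{\overline{\pi_A}\otimes\pi_B}C_{\Le}$ or through your direct composition argument.

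One slip is only a matter of wording: co-CP means $C_{\Le}^{\Gamma}\geq 0$ alone, not that $C_{\Le}$ is PPT, which would also require $C_{\Le}\geq 0$. Your identity already shows that the cone $\{X: X^{\Gamma}\geq 0\}$ is invariant under local-unitary conjugation, so the conclusion is unaffected.
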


\begin{proposition} [\cite{NP25}] \label{prop-twirlformula}
Let $\pi$ be a multiplicity-free unitary representation having the irrep decomposition as in \cref{eq-fusion} with $m_i\equiv 1$. Then we have

\begin{equation}
    \mathcal{T}_{\pi}(X)=\sum_{i=1}^l \frac{{\rm Tr}(\Pi_i X)}{d_i}\Pi_i,
\end{equation}
where $\Pi_1,\ldots, \Pi_l$ are orthogonal projections as in \cref{eq-inv-multfree}.
\end{proposition}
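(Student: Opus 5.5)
We prove \cref{prop-twirlformula} by combining Schur's lemma with the projection property of $\T_\pi$ recalled above. The argument has two steps: first identify which combination of the $\Pi_i$ the twirl produces, then read off the coefficients by taking traces.

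\emph{Step 1: the twirl lies in the span of the $\Pi_i$.} The map $\T_\pi$ is an idempotent with range $\Inv(\pi)$. Since $\pi$ is multiplicity-free, \cref{eq-inv-multfree} gives $\Inv(\pi)=\mathrm{span}\{\Pi_1,\ldots,\Pi_l\}$. Hence
$$\T_\pi(X)=\sum_i c_i(X)\,\Pi_i$$
for some scalars $c_i(X)$. Let me recall why \cref{eq-inv-multfree} holds, in case it must be justified. An invariant $X$ commutes with every $\pi(x)$. Each compression $\Pi_i X \Pi_j$ therefore intertwines $\sigma_j$ and $\sigma_i$. By Schur's lemma this compression vanishes for $i\neq j$, because the $\sigma_i$ are inequivalent. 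For $i=j$ it is a scalar multiple of $\Pi_i$, because $\sigma_i$ is irreducible.

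\emph{Step 2: the coefficients $c_i(X)$.} Each $\Pi_i$ is itself $\pi$-invariant, so $\Pi_i$ commutes with every $\pi(x)$. Using this together with cyclicity of the trace,
$$\Tr(\Pi_i\,\pi(x)X\pi(x)^*)=\Tr(\pi(x)^*\Pi_i\pi(x)X)=\Tr(\Pi_i X)\quad\text{for all }x\in G.$$
Averaging over the Haar measure gives $\Tr(\Pi_i\T_\pi(X))=\Tr(\Pi_i X)$.

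On the other hand, the projections are mutually orthogonal, so $\Pi_i\Pi_j=\delta_{ij}\Pi_i$. Substituting the expansion from Step 1 therefore gives
$$\Tr(\Pi_i\T_\pi(X))=c_i(X)\Tr(\Pi_i)=c_i(X)\,d_i,$$
since $\Tr(\Pi_i)=\dim\sigma_i=d_i$. Comparing the two expressions yields $c_i(X)=\Tr(\Pi_iX)/d_i$, which is the claimed formula.

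The only substantive ingredient is the Schur's-lemma description of $\Inv(\pi)$, and that is already stated in \cref{eq-inv-multfree}. The remaining steps are routine trace manipulations. The multiplicity-free hypothesis is exactly what allows the coefficients to be determined by the single numbers $\Tr(\Pi_iX)$; with multiplicities one would instead obtain full matrix blocks.
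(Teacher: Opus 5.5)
Your proof is correct and complete. The paper gives no argument of its own here, only citing \cite{NP25}, and your two steps are the standard proof of this formula: Schur's lemma places $\T_\pi(X)$ in $\Inv(\pi)=\mathrm{span}\{\Pi_1,\ldots,\Pi_l\}$, and pairing with $\Pi_i$, using $\Tr(\Pi_i\T_\pi(X))=\Tr(\Pi_iX)$ and $\Pi_i\Pi_j=\delta_{ij}\Pi_i$, fixes the coefficients. The two facts you use implicitly are both immediate: $\T_\pi(X)$ is invariant by left-invariance of the Haar measure, and each $\Pi_i$ commutes with $\pi(x)$ by unitarity.
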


Finally, we emphasize that the dualities between various entanglement properties and positivity properties in \cref{thm-dual-schmidt,thm-dual-ppt} are compatible with group symmetries. Here we partially recall the results that will be of primary interest in the present work; for general results involving mapping cones, we refer to \cite[Theorem~3.3]{PY24}.

\begin{theorem} [\cite{PJPY24,PY24}] \label{thm-DualSymm}
Let $G$ be a compact group and let $\pi_A:G\to U(d_A)$ and $\pi_B:G\to U(d_B)$ be unitary representations of $G$.
\begin{enumerate}
    \item For a $\overline{\pi_A}\otimes \pi_B$ quantum state $\rho_{AB}$, the following are equivalent:
    \begin{enumerate}
        \item $\rho_{AB}\in \sch_k$;

        \item $(\id_A\otimes \Le')(\rho_{AB})\geq 0$ for all $(\pi_B,\pi_A)$-covariant $k$-positive linear maps $\Le'$;

        \item $\Tr(\rho_{AB}C_{\Le})\geq 0$ for all $(\pi_A,\pi_B)$-covariant $k$-positive linear maps $\Le$.
    \end{enumerate}

    \item For a $\overline{\pi_A}\otimes \pi_B$ quantum state $\rho_{AB}$, the following are equivalent:
    \begin{enumerate}
        \item $\rho_{AB}$ is PPT;

        \item $(\id_A\otimes \Le')(\rho_{AB})\geq 0$ for all $(\pi_B,\pi_A)$-covariant decomposable linear maps $\Le'$;

        \item $\Tr(\rho_{AB}C_{\Le})\geq 0$ for all $(\pi_A,\pi_B)$-covariant decomposable linear maps $\Le$.
    \end{enumerate}

    \item A $(\pi_A,\pi_B)$ linear map $\Le$ is decomposable if and only if $\Tr(\rho_{AB}C_{\Le})\geq 0$ for all $\overline{\pi_A}\otimes \pi_B$-invariant PPT states $\rho_{AB}$.
\end{enumerate}
\end{theorem}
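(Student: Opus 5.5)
The plan is to reduce each item to the corresponding non-symmetric duality (\cref{thm-dual-schmidt} or \cref{thm-dual-ppt}). The reduction uses the twirling maps together with the identity relating the twirl of a map to the twirl of its Choi matrix.

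The key identity comes from \cref{prop-Symm-general}(3). Let $\rho_{AB}$ be $\overline{\pi_A}\otimes\pi_B$-invariant and let $\Le$ be arbitrary. Then
$$\Tr(\rho_{AB} C_{\T_{\pi_A,\pi_B}\Le})=\Tr(\rho_{AB}\,\T_{\overline{\pi_A}\otimes\pi_B}C_{\Le})=\Tr(\T_{\overline{\pi_A}\otimes\pi_B}(\rho_{AB})\, C_{\Le})=\Tr(\rho_{AB}C_{\Le}).$$
The middle equality holds because the twirl $X\mapsto \E[U X U^*]$, with $U=\overline{\pi_A}(x)\otimes \pi_B(x)$, is self-adjoint for the Hilbert--Schmidt inner product. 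Indeed, $\Tr(\rho\, UXU^*)=\Tr(U^*\rho U\, X)$, and Haar measure is invariant under $x\mapsto x^{-1}$.

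For item (1), the implications (a)$\Rightarrow$(b),(c) are immediate from \cref{thm-dual-schmidt}, since covariant $k$-positive maps are in particular $k$-positive. For (c)$\Rightarrow$(a), take any $k$-positive $\Le$. By \cref{prop-twirling-preserving}(2), $\T_{\pi_A,\pi_B}\Le$ is $k$-positive, and it is covariant. So (c) together with the identity above gives $\Tr(\rho_{AB}C_\Le)\ge0$, and \cref{thm-dual-schmidt}(3) then yields $\rho_{AB}\in\sch_k$.

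For (b)$\Rightarrow$(a), the same argument applies to $\Le'$. Here I use that $(\id_A\otimes\Le')(\rho_{AB})\ge0$ for all $k$-positive $\Le'$ is equivalent to $\Tr(\rho_{AB} C_{\Le})\geq 0$ for all $k$-positive $\Le$. This is done by pairing with a rank-one projector $|\eta\ra\la\eta|$ and using adjoints and local operators. The more direct route is to note that $(\id_A\otimes \T_{\pi_B,\pi_A}\Le')(\rho_{AB})$ is an average of conjugates of $(\id\otimes\Le')(\rho_{AB})$. Invariance of $\rho_{AB}$ under $\overline{\pi_A}\otimes\pi_B$ lets one move the $\pi_B$ conjugation onto the $A$ side, giving
$$(\id\otimes\Le'\circ\Ad_{\pi_B(x)})(\rho)=(\Ad_{\pi_A(x)^{\top}}\otimes\Le')(\rho),$$
up to the appropriate conjugation bookkeeping. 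This is a step I would verify carefully. Consequently $(\id\otimes \T\Le')(\rho)=\E_x\big[(\Ad_{W_x}\otimes\id)(\id\otimes\Le')(\rho)\big]$ for suitable unitaries $W_x$. The reverse direction is what is needed, and it requires care: if the twirled map gives a positive output, one must deduce that the original map does too. So I would instead prove (b)$\Rightarrow$(c) via Choi pairing and reduce to (c)$\Rightarrow$(a), which was handled above.

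Item (2) is identical, with $k$-positive replaced by decomposable and \cref{thm-dual-ppt}(1) used in place of \cref{thm-dual-schmidt}. Decomposability is preserved by the twirl by \cref{prop-twirling-preserving}(2).

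For item (3), the forward direction is \cref{thm-dual-ppt}(2). For the converse, take any PPT state $\sigma$. Then $\T_{\overline{\pi_A}\otimes\pi_B}\sigma$ is invariant and PPT by \cref{prop-twirling-preserving}(1). Since $C_\Le$ is invariant, $\Tr(\sigma C_\Le)=\Tr(\T\sigma\, C_\Le)\ge0$, and \cref{thm-dual-ppt}(2) gives decomposability.

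The main obstacle is the (b)-type statements. They require the conjugation bookkeeping converting output positivity under covariant maps into the Choi-pairing form, which is where the contragredient $\overline{\pi_A}$ must match correctly.
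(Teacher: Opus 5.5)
Your arguments for (a)$\Rightarrow$(b),(c), for (c)$\Rightarrow$(a) in items (1) and (2), and for item (3) are correct. The identity $\Tr(\rho_{AB}C_{\T_{\pi_A,\pi_B}\Le})=\Tr(\rho_{AB}C_{\Le})$ for invariant $\rho_{AB}$, combined with \cref{prop-twirling-preserving}, is exactly what transfers the non-symmetric dualities to the symmetric setting. The paper does not prove this theorem itself; it quotes it from \cite{PJPY24,PY24}.

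What your proposal does not supply is the implication (b)$\Rightarrow$(c) (equivalently (b)$\Rightarrow$(a)) in items (1) and (2). You abandon the direct route and then only announce ``via Choi pairing''. You also misplace the difficulty: neither the invariance of $\rho_{AB}$ nor the contragredient plays any role in this step.

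The missing observation is that Hilbert--Schmidt adjoints reverse the direction of covariance. Since $(\Ad_U)^*=\Ad_{U^*}$, taking adjoints in $\Le\circ\Ad_{\pi_A(x)}=\Ad_{\pi_B(x)}\circ\Le$ gives $\Le^*\circ\Ad_{\pi_B(x^{-1})}=\Ad_{\pi_A(x^{-1})}\circ\Le^*$. Hence $\Le^*:M_{d_B}(\Comp)\to M_{d_A}(\Comp)$ is $(\pi_B,\pi_A)$-covariant. It is $k$-positive whenever $\Le$ is, because $(\id_k\otimes\Le)^*=\id_k\otimes\Le^*$ and a map is positive iff its adjoint is. It is decomposable whenever $\Le$ is, because CP maps have CP adjoints and $(\top\circ\Phi)^*=\top\circ(\top\circ\Phi^*\circ\top)$. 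Since $\Le$ is Hermitian-preserving,
\[
\Tr(\rho_{AB}C_{\Le})=\Tr\big(\rho_{AB}\,(\id_A\otimes\Le)(|\omega_{d_A}\ra\la\omega_{d_A}|)\big)=\la\omega_{d_A}|(\id_A\otimes\Le^{*})(\rho_{AB})|\omega_{d_A}\ra ,
\]
and this is nonnegative by (b) applied to $\Le'=\Le^*$. This explains why (b) is phrased with $(\pi_B,\pi_A)$-covariant maps. Together with your (c)$\Rightarrow$(a) and the easy (a)$\Rightarrow$(b), it closes the cycle.

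One smaller point on item (2): (a)$\Rightarrow$(b) is not contained in \cref{thm-dual-ppt}, so ``identical to item (1)'' does not quite cover it. It follows from the identity $(\id\otimes\top\circ\Phi)(\rho_{AB})=(\id\otimes\top\circ\Phi\circ\top)(\rho_{AB}^{\Gamma})$, where $\top\circ\Phi\circ\top$ is CP whenever $\Phi$ is.
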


As a consequence, the equivalence in Corollary~\ref{cor-PPTSch-equiv} carries over to the setting of group symmetries as follows. The proof is quite the same with that of Corollary~\ref{cor-PPTSch-equiv} which is left to the readers.

\begin{corollary} \label{cor-symm-PPTSch-equiv}
In the setting of Theorem~\ref{thm-DualSymm}, and for $1\leq k < \min(d_A,d_B)$, the following are equivalent:
\begin{enumerate}
    \item There exists a PPT quantum state $\rho_{AB}$ with $SN(\rho)>k$ within the class ${\rm Inv}(\overline{\pi_A}\otimes \pi_B)$;

    \item There exists a $k$-positive indecomposable map within the class ${\rm Cov}(\pi_A,\pi_B)$.
\end{enumerate}
\end{corollary}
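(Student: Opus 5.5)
We argue exactly as in Corollary~\ref{cor-PPTSch-equiv}, replacing Theorems~\ref{thm-dual-schmidt} and~\ref{thm-dual-ppt} by their symmetric versions in Theorem~\ref{thm-DualSymm}. Both implications are proved by contraposition, and in each one the pairing $\Tr(\rho_{AB}C_{\Le})$ is used with $\rho_{AB}\in\Inv(\overline{\pi_A}\otimes\pi_B)$ and $\Le\in\Cov(\pi_A,\pi_B)$.

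For $(1)\Rightarrow(2)$, let $\rho_{AB}$ be a $\overline{\pi_A}\otimes\pi_B$-invariant PPT state with $\SN(\rho_{AB})>k$. Suppose, for contradiction, that every $(\pi_A,\pi_B)$-covariant $k$-positive map $\Le$ is decomposable. Since $\rho_{AB}$ is an invariant PPT state, Theorem~\ref{thm-DualSymm}(2), direction (a)$\Rightarrow$(c), gives $\Tr(\rho_{AB}C_{\Le})\ge 0$ for every covariant decomposable $\Le$. In particular this holds for every covariant $k$-positive $\Le$. Theorem~\ref{thm-DualSymm}(1), direction (c)$\Rightarrow$(a), then gives $\rho_{AB}\in\sch_k$, contradicting $\SN(\rho_{AB})>k$.

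For $(2)\Rightarrow(1)$, let $\Le\in\Cov(\pi_A,\pi_B)$ be $k$-positive and indecomposable. By Theorem~\ref{thm-DualSymm}(3), indecomposability yields a $\overline{\pi_A}\otimes\pi_B$-invariant PPT state $\rho_{AB}$ with $\Tr(\rho_{AB}C_{\Le})<0$. Since $\Le$ is covariant and $k$-positive, Theorem~\ref{thm-DualSymm}(1), direction (a)$\Rightarrow$(c), shows that $\rho_{AB}\notin\sch_k$, i.e.\ $\SN(\rho_{AB})>k$. This $\rho_{AB}$ is the required state in the invariant class.

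The only step that needs care is checking that every application of Theorem~\ref{thm-DualSymm} stays inside the symmetric classes. In $(1)\Rightarrow(2)$, the contradiction hypothesis is only about covariant maps, and the assumed state is already invariant, so parts (1) and (2) apply as stated. In $(2)\Rightarrow(1)$, part (3) produces the witnessing PPT state directly inside $\Inv(\overline{\pi_A}\otimes\pi_B)$. The hypothesis $1\le k<\min(d_A,d_B)$ is needed only so that the statement is not vacuous; it plays no role in the argument.
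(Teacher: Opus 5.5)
Your proof is correct and follows the paper's route. The paper simply says to repeat the argument of Corollary~\ref{cor-PPTSch-equiv} with Theorems~\ref{thm-dual-schmidt} and~\ref{thm-dual-ppt} replaced by Theorem~\ref{thm-DualSymm}, which is exactly what you do; your direct argument for $(2)\Rightarrow(1)$ via part (3) is just the unwound form of the paper's ``analogous'' contradiction argument.
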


In the remainder of this work, we demonstrate the effectiveness of Theorem~\ref{thm-DualSymm} and Corollary~\ref{cor-symm-PPTSch-equiv} in the study of quantum objects exhibiting \emph{symplectic group symmetries}, which are introduced in the following section.

\subsection{Symplectic group symmetries}

Let $d \geq 2$ be an even integer and let $V$ be a skew-symmetric unitary matrix, i.e., $V\in U(d)$ and $V^{\top}=-V$. The \textit{skew-symmetric bilinear form} $\la \cdot, \cdot \ra_V: \Comp^d\times \Comp^d \to \Comp$ associated to $V$ is defined by
    $$\la x,y\ra_V:=x \cdot Vy = x^{\top} Vy,$$
where the usual \textit{dot product} above is defined without complex conjugates. This form satisfies the two elementary properties
\begin{equation} \label{eq-skew-form}
    \la x,y \ra_V = -\la y,x\ra_V, \quad \la x, x\ra_V\equiv 0
\end{equation}
for all $x,y\in \Comp^d$, thanks to the skew-symmetry of $V$. The \textit{compact symplectic group} (or \textit{unitary symplectic group}) $Sp(d; V)\subset U(d)$ associated to $V$ is then defined by
\begin{align*}
    Sp(d;V)&:=\{S\in U(d)\bigm| \la Sx,Sy\ra_{V}=\la x,y\ra_{V} \text{ for all $x,y\in \Comp^d$}\}\\
    &\,\,=\{S\in U(d)\bigm| S^{\top}V S=V\}.
\end{align*}
Each element in $Sp(d;V)$ is called a \textit{unitary symplectic matrix}. In other words, $Sp(d;V)$ is the group of $d\times d$ matrices that preserve both the Hermitian inner product and and the bilinear form $\la \cdot, \cdot\ra_V$. A standard choice of $V$ is
    $$V=\Om_d:=\begin{pmatrix} 0 & I_{d/2} \\ -I_{d/2} & 0\end{pmatrix}; \quad \la x,y\ra_{\Om}:=\sum_{i=1}^{d/2}(x_iy_{i+d/2}-x_{i+d/2}y_i),$$
in which case we simply denote $Sp(d):=Sp(d;\Om_d)$.

We remark on several points concerning these definitions. First, skew-symmetric unitary matrices (and hence symplectic groups) are defined only for \emph{even} $d$. Indeed, if $d$ is odd, then any skew-symmetric matrix $V$ is non-invertible, since
    $$\det V= - \det V^{\top} = -\det V.$$
Furthermore, a classical result in matrix theory (see e.g. \cite[Corollary 4.4.19]{TextHJ12}) states that every skew-symmetric unitary matrix $V \in U(d)$ is in fact \emph{congruent} to $\Omega_d$, i.e., there exists a unitary matrix $U_V \in U(d)$ such that
\begin{equation} \label{eq-Symp-cong}
    U_V^{\top}VU_V = \Om_d.
\end{equation}
In particular, the map $S \mapsto {\rm Ad}_{U_V}(S)=U_V S U_V^*$ defines a group isomorphism between $Sp(d)$ and $Sp(d;V)$. Consequently, all results in this paper would be independent of the particular choice of $V$. Finally, the group $Sp(d)$ is closed under complex conjugation and transposition: if $S\in Sp(d)$, then 
    $$S^{\top} \Om_d S = \Om_d \iff \Om_d= \overline{S} \Om_d S^* \iff \Om_d = S \Om_d S^{\top},$$
where we take matrix adjoint and the fact $\Om_d^*=-\Om_d$ in the second equivalence. In particular, this implies $\Om_dS = \overline{S}\Om_d$ and $\Om_d \overline{S} =S\Om_d$ for all $S\in Sp(d)$. The same argument gives in general,
\begin{equation} \label{eq-Symp-Comm}
    S\in Sp(d; V) \implies VS = \overline{S} V \quad \text{and} \quad \overline{V}\overline{S} = S \overline{V}.
\end{equation}

We are mainly interested in classes of linear maps and bipartite matrices that are symmetric with respect to the fundamental representation
    $$\iota_{Sp(d;V)}:S\in Sp(d;V)\mapsto S\in U(d).$$
For simplicity of notation, let us set
\begin{align*}
\Inv(S\otimes S) = \Inv(\iota_{Sp(d)} \otimes \iota_{Sp(d)}),
&\quad \Cov(S, S) = \Cov(\iota_{Sp(d)}, \iota_{Sp(d)}),\\
\Inv(S\otimes \overline{S}) = \Inv(\iota_{Sp(d)} \otimes \overline{\iota_{Sp(d)}}),
&\quad \Cov(S, \overline{S}) = \Cov(\iota_{Sp(d)}, \overline{\iota_{Sp(d)}}),
\end{align*}
in the case $V=\Om_d$, and refer to elements of these spaces as $S\otimes S$-invariant, $S\otimes \overline{S}$-invariant, $(S,S)$-covariant, and $(S,\overline{S})$-covariant, resp. Note that all these classes are closed under complex conjugation; for instance, $\Inv(S\otimes \overline{S})=\Inv(\overline{S}\otimes S)$ and $\Cov(S,\overline{S}) =  \Cov(\overline{S},S)$, since $Sp(d)$ is closed under the complex conjugation. Furthermore, the \emph{Schur–Weyl duality} for the compact symplectic group (see, e.g., \cite{TextWeyl,TextGoodman,Wen88}) states that, for $d\geq 4$, the algebra $\Inv(S\otimes S)$ is $3$-dimensional and is described by the action of the Brauer algebra \cite{Bra37}. More precisely, one has
    $$\Inv(S\otimes S)={\rm span}\{I_d\otimes I_d, F_d, (I_d\otimes \Om_d)|\om_d\ra\la \om_d|(I_d\otimes \Om_d)^*\},$$
where $F_d:=\sum_{i,j=1}^d |ij\ra\la ji|$ denotes the flip operator. The remaining three spaces can be described explicitly via the relations
    $$\Inv(S\otimes \overline{S})=(\id_d\otimes \top)\big(\Inv(S\otimes S)\big), \quad \Cov(S,S)= C^{-1}\big(\Inv(S\otimes \overline{S})\big), \quad \Cov(S,\overline{S})= C^{-1}\big(\Inv(S\otimes S)\big)$$
as a consequence of Proposition \ref{prop-Symm-general}. Here, $C^{-1}$ denotes the inverse of the Choi isomorphism. These characterizations are summarized in the following proposition.

\begin{proposition} \label{prop-Symp-object}
Let $d\geq 4$ be an even integer.
\begin{enumerate}
    \item A matrix $\rho\in M_d(\Comp)\otimes M_d(\Comp)$ is $S\otimes {S}$-invariant (resp. $S\otimes \overline{S}$-invariant) if and only if
    \begin{center}
        $\rho\in {\rm span}\{I_{d^2},F_d, \Ad_{I_d\otimes \Om_d}(|\om\ra\la \om|)\}$ (resp. $\rho\in {\rm span}\{I_{d^2},|\om_d\ra\la \om_d|, \Ad_{I_d\otimes \Om_d}(F_d)\}$). 
    \end{center}
    
    \item A linear map $\Le:M_d(\Comp)\to M_d(\Comp)$ is $(S,S)$-covariant (resp. $(S,\overline{S})$-covariant) if and only if 
    \begin{center}
        $\Le\in {\rm span}\{ \Delta, \id, \Ad_{\Om}\circ \top\}$ (resp. $\Le\in {\rm span}\{ \Delta, \top, \Ad_{\Om}\}$),
    \end{center}
    where $\Delta(Z):=\frac{\Tr(Z)}{d}I_d$ is the completely depolarizing map on $M_d(\Comp)$.
\end{enumerate}
\end{proposition}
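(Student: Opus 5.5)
Part (1) for $S\otimes S$ is taken from the Schur--Weyl duality quoted above: $\Inv(S\otimes S)$ is $3$-dimensional for $d\geq 4$. So it suffices to show that $I_{d^2}$, $F_d$ and $\Ad_{I_d\otimes\Om_d}(|\om_d\ra\la\om_d|)$ are $S\otimes S$-invariant and linearly independent.

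\textbf{Invariance.} $I_{d^2}$ and $F_d$ commute with every $U\otimes U$, so only the third element needs work. Since $(I\otimes\Om_d)|\om_d\ra\propto\sum_j |j\ra\otimes \Om_d|j\ra$, it is enough to show that $(S\otimes S)(I\otimes \Om_d)|\om_d\ra=(I\otimes\Om_d)|\om_d\ra$. By the transpose trick $(A\otimes B)|\om_d\ra=(I\otimes BA^{\top})|\om_d\ra$, the left side equals $(I\otimes S\Om_d S^{\top})|\om_d\ra$. The identity $S\Om_dS^{\top}=\Om_d$, derived before \eqref{eq-Symp-Comm}, then gives invariance of the vector, hence of its projection.

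\textbf{Linear independence.} The three matrices $I_{d^2}$, $F_d$ and a rank-one projection are independent for $d\geq 4$. One way to see this: the ranks and the traces against $I$ separate them. Alternatively, compare their actions on the symmetric and antisymmetric subspaces, using that $(I\otimes\Om_d)|\om_d\ra$ lies in the antisymmetric part because $\Om_d^{\top}=-\Om_d$.

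\textbf{The $S\otimes\overline{S}$ case.} Apply Proposition~\ref{prop-Symm-general}(1): $\Inv(S\otimes\overline S)=(\id\otimes\top)(\Inv(S\otimes S))$. Compute the partial transposes directly:
\begin{itemize}
\item $I^{\Gamma}=I$;
\item $F_d^{\Gamma}=d|\om_d\ra\la\om_d|$;
\item $\big((I\otimes\Om_d)|\om_d\ra\la\om_d|(I\otimes\Om_d^*)\big)^{\Gamma}=(I\otimes \overline{\Om_d})\,\tfrac1d F_d\,(I\otimes\Om_d^{\top})$, using $(I\otimes A)X(I\otimes B)$ with $X^{\Gamma}$ rules. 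Since $\Om_d$ is real and $\Om_d^{\top}=\Om_d^*$, this is $\tfrac1d\Ad_{I\otimes\Om_d}(F_d)$ up to the scalar $1/d$.
\end{itemize}

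Part (2) follows from Proposition~\ref{prop-Symm-general}(3): $\Le\in\Cov(S,S)$ if and only if $C_\Le\in\Inv(\overline S\otimes S)=\Inv(S\otimes\overline S)$, and $\Le\in\Cov(S,\overline S)$ if and only if $C_\Le\in\Inv(\overline S\otimes\overline S)=\Inv(S\otimes S)$. These equalities use closure of $Sp(d)$ under conjugation. It then remains to identify Choi matrices, using the normalized Choi map:
\begin{itemize}
\item $C_\Delta=I/d^2$;
\item $C_{\id}=|\om_d\ra\la\om_d|$;
\item $C_\top=F_d/d$;
\item $C_{\Ad_{\Om}}=\Ad_{I\otimes\Om}(|\om_d\ra\la\om_d|)$;
\item $C_{\Ad_\Om\circ\top}=\Ad_{I\otimes\Om}(F_d)/d$.
\end{itemize}
Since the Choi isomorphism is linear and bijective, the spans match the spans in (1).

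\textbf{Main obstacle.} The only real care needed is bookkeeping of conjugations and transposes of $\Om_d$ in the partial-transpose and Choi identifications, and these are harmless because $\Om_d$ is real orthogonal and skew-symmetric.
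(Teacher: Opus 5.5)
Your proposal is correct and follows essentially the paper's route. Like the paper, you take $\Inv(S\otimes S)$ from symplectic Schur--Weyl duality and transport it to $\Inv(S\otimes\overline{S})$, $\Cov(S,S)$ and $\Cov(S,\overline{S})$ via partial transpose and the Choi isomorphism (Proposition~\ref{prop-Symm-general}); your invariance check, partial-transpose identities and Choi matrices are all accurate. The one soft spot is the remark that ``ranks and traces separate'' the three matrices, which does not by itself prove linear independence; your alternative argument via the symmetric/antisymmetric decomposition (as in Corollary~\ref{cor-Symp-proj}) does establish it.
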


\begin{corollary} \label{cor-Symp-proj}
The set $\Inv(S\otimes \overline{S})$ is a commutative $*$-algebra spanned by three (mutually orthogonal) minimal projections
\begin{equation} \label{eq-SSbar-proj}
    \Pi_1^{S\overline{S}} = |\om_d\ra\la \om_d|, \quad \Pi_2^{S\overline{S}}=\frac{1}{2}(I_{d^2}+ F_d^{\Om}), \quad \Pi_3^{S\overline{S}}=\frac{1}{2}(I_{d^2}-F_d^{\Om}) -|\om_d\ra\la \om_d|,
\end{equation}
where $F_d^{\Om}:=\Ad_{I_d\otimes \Om_d}(F_d)$. Similarly, the $*$-algebra $\Inv(S\otimes S)$ is spanned by three projections
\begin{equation} \label{eq-SS-proj}
    \Pi_1^{SS} = |\om_d^{\Om}\ra\la \om_d^{\Om}|, \quad \Pi_2^{SS}=\frac{1}{2}(I_{d^2}+F_d), \quad \Pi_3^{SS}=\frac{1}{2}(I_{d^2}-F_d) -|\om_d^{\Om}\ra\la \om_d^{\Om}|,
\end{equation}
where $|\om_d^{\Om}\ra:=(I_d\otimes \Om_d)|\om_d\ra$.
\end{corollary}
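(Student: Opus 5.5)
The plan is a direct computation inside the three-dimensional spaces described in Proposition~\ref{prop-Symp-object}(1). Since $\Inv(S\otimes \overline{S})={\rm span}\{I_{d^2},|\om_d\ra\la \om_d|, F_d^{\Om}\}$, it suffices to show three things: the three operators in \eqref{eq-SSbar-proj} are nonzero, mutually orthogonal projections; they sum to $I_{d^2}$; and they span the same space. Once this holds, the span is the commutative $*$-algebra $\Comp\Pi_1\oplus\Comp\Pi_2\oplus\Comp\Pi_3\cong \Comp^3$, so it is closed under products and adjoints. Minimality of each $\Pi_i$ is automatic, because every projection in this algebra is a sum of a subset of the $\Pi_i$.

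\textbf{Step 1: $F_d^{\Om}$ is a self-adjoint unitary.} Since $\Om_d$ is a real unitary with $\Om_d^2=-I_d$, and $F_d$ is a self-adjoint unitary, the conjugate $F_d^{\Om}=(I_d\otimes\Om_d)F_d(I_d\otimes\Om_d)^*$ is again a self-adjoint unitary. Hence $\frac12(I_{d^2}\pm F_d^{\Om})$ are complementary orthogonal projections.

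\textbf{Step 2: the key eigen-relation $F_d^{\Om}|\om_d\ra=-|\om_d\ra$.} Two standard identities are used: the transpose trick $(I\otimes A)|\om_d\ra=(A^{\top}\otimes I)|\om_d\ra$, and $F_d(A\otimes B)=(B\otimes A)F_d$. With these,
\[
F_d^{\Om}|\om_d\ra=(I\otimes\Om_d)F_d(\overline{\Om_d}\otimes I)|\om_d\ra=(I\otimes \Om_d\overline{\Om_d})F_d|\om_d\ra=(I\otimes\Om_d^2)|\om_d\ra=-|\om_d\ra,
\]
using $F_d|\om_d\ra=|\om_d\ra$ and the fact that $\Om_d$ is real.

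Consequently $|\om_d\ra\la\om_d|\le \frac12(I_{d^2}-F_d^{\Om})$, so $\Pi_3^{S\overline S}$ is a projection orthogonal to $\Pi_1^{S\overline S}$, and clearly $\Pi_1+\Pi_2+\Pi_3=I_{d^2}$. Each $\Pi_i$ is nonzero. For $\Pi_2$ and $\Pi_3$ this follows from the ranks: $\frac12(I\pm F_d^{\Om})$ is unitarily conjugate to the symmetric or antisymmetric projection, so it has rank $d(d\pm1)/2$. In particular $\Pi_3$ has rank $d(d-1)/2-1>0$ for $d\ge 4$. The change of basis between $\{I,|\om_d\ra\la\om_d|,F_d^{\Om}\}$ and $\{\Pi_1,\Pi_2,\Pi_3\}$ is evidently invertible, so the spans coincide.

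\textbf{Step 3: the $S\otimes S$ case.} Here I repeat the argument with $F_d$ and $|\om_d^{\Om}\ra=(I_d\otimes\Om_d)|\om_d\ra$, a unit vector since $\Om_d$ is unitary. The required relation is
\[
F_d|\om_d^{\Om}\ra=F_d(\Om_d^{\top}\otimes I)|\om_d\ra=(I\otimes\Om_d^{\top})|\om_d\ra=(\Om_d\otimes I)|\om_d\ra=-(\Om_d^{\top}\otimes I)|\om_d\ra=-|\om_d^{\Om}\ra .
\]
The rest is identical to Steps 1 and 2, using the spanning set from Proposition~\ref{prop-Symp-object}(1).

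I do not expect a real obstacle. The only delicate point is the sign bookkeeping in the eigen-relations of Steps 2 and 3. These rely on $\Om_d$ being real and skew, so that $\Om_d^{\top}=-\Om_d$ and $\Om_d^2=-I_d$. This is exactly where the symplectic structure enters: it places $|\om_d\ra$ in the antisymmetric-type eigenspace of $F_d^{\Om}$, in contrast with the orthogonal case.
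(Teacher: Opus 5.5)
Your proof is correct and follows the paper's argument. Like you, the paper reduces everything to the fact that the relevant maximally entangled vector lies in the $-1$ eigenspace of the corresponding flip, then concludes that the three operators are mutually orthogonal projections summing to $I_{d^2}$. The paper checks this by writing $|\om_d^{\Om}\ra$ in coordinates and observing that it is antisymmetric; it treats $S\otimes S$ first and the $S\otimes\overline{S}$ case ``similarly''. Your use of the transpose trick instead of coordinates, and your explicit checks of nonvanishing and minimality, are only cosmetic differences.
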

\begin{proof}
We know that $\Inv(S\otimes S)=\{S\otimes S:S\in Sp(d)\}'$ is a three-dimensional $*$-algebra spanned by $I_{d^2}, |\om_d^{\Om}\ra\la \om_d^{\Om}|$, and $F_d$. Furthermore, the operators $\Pi_{\mathcal{S}}=\frac{1}{2}(I_{d^2}+F_d)$ and $\Pi_{\A}=\frac{1}{2}(I_{d^2}-F_d)$ are orthogonal projections onto the symmetric / antisymmetric subspaces $\Comp^d\vee \Comp^d$ and $\Comp^d\wedge \Comp^d$, respectively. On the other hand, one has
    $$|\om_d^{\Om}\ra=\frac{1}{\sqrt{d}}\sum_{i=1}^{d/2}(|i+d/2,i\ra -|i.i+d/2\ra) \in \Comp^d \wedge \Comp^d = {\rm Ran}(\Pi_{\A}),$$
which implies that $\Pi_1^{SS}$, $\Pi_2^{SS}=\Pi_{\mathcal{S}}$, and $\Pi_3^{SS}=\Pi_{\A}-\Pi_1^{SS}$ are orthogonal projections satisfying $\Pi_1^{SS}+\Pi_2^{SS}+\Pi_3^{SS}=I_{d^2}$. Therefore, these three projections are mutually orthogonal, showing the second assertion. The first assertion follows similarly.

\end{proof}

\begin{remark} \label{rmk-V-equiv}
Similar classes of linear maps and bipartite matrices with symmetries under the general symplectic group $Sp(d;V)$ can be characterized. For example, a linear map $\Le$ is $(\iota_{Sp(d;V)},\iota_{Sp(d;V)})$-covariant if and only if
\begin{center}
    $\Le = p\Delta +q\,\id_d + r\Ad_{\overline{V}}\circ \top$.
\end{center}
for some $p,q,r\in \Comp$. Here, we take the complex-conjugated matrix $\overline{V}$; the covariance property of $\Ad_{\overline{V}}\circ \top$ follows from \cref{eq-Symp-Comm}. However, by recalling the congruence relation \cref{eq-Symp-cong}, this map is unitarily equivalent to the $(S,S)$-covariant map
    $$\Le'=p\Delta+q\, \id_d+r\Ad_{\Om_d}\circ \top = \Ad_{U_V^*} \circ \Le \circ \Ad_{U_V}.$$
In particular, $\Le$ and $\Le'$ share all $k$-positivity and entanglement-related properties. An analogous conclusion holds for $(\iota_{Sp(d;V)},\iota_{Sp(d;V)})$-covariant maps, as well as for $\iota_{Sp(d;V)}\otimes \iota_{Sp(d;V)}$- and $\iota_{Sp(d;V)}\otimes \overline{\iota_{Sp(d;V)}}$-invariant matrices. Consequently, all results concerning $Sp(d)=Sp(d;\Om_d)$ symmetries established in this paper extend \emph{verbatim} to the general $Sp(d;V)$ setting upon replacing $\Om_d$ with $V$.
\end{remark}

We collect several extensively studied examples of symplectic covariant linear maps and symplectic invariant states.
\begin{example}

\begin{enumerate}
    \item The depolarizing maps and the transpose-depolarizing maps
    \begin{align*}
        \Cov(S,S) \ni \Le^{\text{dep}}_p&: Z\mapsto \frac{1-p}{d}\Tr(Z) I_d + pZ,\\
        \Cov(S,\overline{S}) \ni \Le_q^{ \top\text{-dep}}&:Z\mapsto \frac{1-q}{d} \Tr(Z) I_d + qZ^{\top},
    \end{align*}
    are characterized by $(U,U)$- and $(U,\overline{U})$-covariance ($U\in U(d)$), respectively. The general maps in $\Cov(S,S)$ and $\Cov(S,\overline{S})$ are threrfore mixtures of depolarizing and transpose-depolarizing maps, up to a “twist” by $\Ad_{\Omega}$.
    
    \item The \textit{isotropic states} \cite{HH99} and the \textit{Werner states } \cite{Wer89}
    \begin{align*}
        \Inv(S\otimes \overline{S})\ni\rho^{\rm iso}_{a} &= \frac{1-a}{d^2}I_{d^2}+ a|\om_d\ra\la \om_d|, \quad -\frac{1}{d^2-1}\leq a \leq 1, \\
        \Inv(S\otimes S) \ni \rho^{\rm Wer}_b &= \frac{1-b}{d^2}I_{d^2} + \frac{b}{d}F_d, \qquad -\frac{1}{d-1}\leq b\leq \frac{1}{d+1}, 
    \end{align*}
    are characterized by $U\otimes \overline{U}$- and $U\otimes U$-invariance ($U\in U(d)$), respectively.
\end{enumerate}
\end{example}

\begin{example} \label{ex-BH}
For a skew-symmetric unitary matrix $V\in U(d)$ and the complex conjugation map $J:v\in \Comp^d \mapsto \overline{v}$, the antiunitary operator $VJ$ is often referred to as a \emph{time-reversal operator}. Furthermore, the associated linear map
    $$(\Ad_{V}\circ \top)(Z)=V Z^{\top} V^* = (VJ)Z^* (VJ)^{-1}$$
is called the \emph{time-reversal transformation} \cite{Bre06}, and it turns out to be $(\iota_{Sp(d;\overline{V})},\iota_{Sp(d;\overline{V})})$-covariant. The class $\Cov(\iota_{Sp(d;\overline{V})},\iota_{Sp(d;\overline{V})})$ contains another important example, namely the \emph{Breuer-Hall map} \cite{Bre06,Hal06}
    $$\Le^{\rm BH}(Z):= \frac{1}{d-2}(d\Delta-\id_d- \Ad_V\circ \top)(Z)=\frac{1}{d-2}(\Tr( Z)I_d - Z - V Z^{\top} V^*),$$
where the normalization is chosen so that $\Le^{\rm BH}$ is TP. The map $\Le^{\rm BH}$ is known to be positive and indecomposable, and hence can be used to detect PPT entangled states. In addition, the Choi matrix of $\Le^{\rm BH}$ is an \emph{optimal} entanglement witness, meaning that there exists no other witness that can detect strictly more entangled states \cite{Bre06}. 
In the following, we will show that $\Le^{\rm BH}$ is \emph{not} $2$-positive; hence, although it can be used to detect PPT entanglement, it cannot be used to detect states whose Schmidt number is strictly larger than $2$.
\end{example}

\begin{example} \label{ex-PV19}
An important class of entangled states with $S \otimes \overline{S}$ symmetry is given by the \textit{P\'al--V\'ertesi states} \cite{PV19},
    $$\rho^{\rm PV}:= \frac{1}{d+2} (I_d\otimes \Om_d)|\om_d\ra\la \om_d| (I_d\otimes \Om_d)^* + \frac{1}{d(d+2)}(I_{d^2}+F_d).$$
It was shown that for all even integers $d \ge 4$, the state $\rho^{\rm PV}$ is PPT while satisfying $\SN(\rho^{\rm PV})\geq d/2$, which is, to date, the largest known Schmidt number of PPT states for general $d$.

In the following section, we will show that the inequality $\SN(\rho^{\rm PV})\geq d/2$ is in fact tight, i.e., $\SN(\rho^{\rm PV})=d/2$. Moreover, this bound cannot be improved within either of the symmetry classes $\Inv(S \otimes S)$ and $\Inv(S \otimes \overline{S})$, by determining the exact region of PPT states having Schmidt number $d/2$.

\end{example}

From now on, we shall consider the following two-parameter families of linear maps and bipartite matrices:
\begin{align}
    &\Cov(S,S)\ni\Le_{a,b}=\Le^{(d)}_{a,b}: Z\mapsto (1-a-b)\frac{\Tr(Z)}{d}I_d+aZ+b\,\Om_d Z^{\top}\Om_d^*, \label{eq-SympCov}\\ 
    &\Inv(S\otimes \overline{S})\ni \rho_{a,b}=\rho_{a,b}^{(d)}:=\frac{1-a-b}{d^2}I_d\otimes I_d+a|\om_d\ra\la \om_d|+\frac{b}{d}(I_d\otimes \Om_d){F_d} (I_d\otimes \Om_d)^{*}. \label{eq-SympInv}
\end{align}
By Proposition \ref{prop-Symp-object}, the families above characterize the \((S,S)\)-covariant trace-preserving linear maps and the bipartite \(S \otimes \overline{S}\)-invariant matrices with unit trace, resp. Furthermore, the maps $\top\circ \Le_{a,b}$ and the the matrices $\rho_{a,b}^{\Gamma}$ characterize the $(S, \overline{S})$-covariant TP maps and the trace-one $S\otimes S$-invariant matrices, resp. For example, up to equivalence, the Breuer-Hall map and the P\'al--V\'ertesi state correspond to
    $$\Le^{\rm BH}=\Le_{-\frac{1}{d-2}, -\frac{1}{d-2}}, \quad \rho^{\rm PV}=(\rho_{\frac{1}{d+2}, \frac{1}{d+2}})^{\Gamma}$$
Note that $C_{\Le_{a,b}}=\rho_{a,b}$ for every $a,b\in \Comp$, and $\rho_{a,b}$ is Hermitian (equivalently, $\Le_{a,b}$ is Hermitian preserving) if and only if $a,b \in \Real$. 

The aim of the following sections is to characterize $k$-positivity and decomposability of $\Le_{a,b}$ and $\top\circ \Le_{a,b}$, and to compute the Schmidt numbers of $\rho_{a,b}$ and $\rho_{a,b}^{\Gamma}$ (whenever they define quantum states). In fact, the following simple but useful observation allows us to restrict attention to only half of the parameter space.

\begin{proposition} \label{prop-Symp-equivalence}
For any $a,b\in \Real$, we have
\begin{align*}
    \top\circ \Le_{a,b} = \Ad_{\Om_d}\circ \Le_{b,a}, \quad 
    \rho_{a,b}^{\Gamma} = (I_d\otimes \Om_d)\rho_{b,a} (I_d\otimes \Om_d)^*.
\end{align*}
In particular, $\top\circ \Le_{a,b}$ is unitarily equivalent to $\Le_{b,a}$, and $\rho_{a,b}^{\Gamma}$ is local-unitary equivalent to $\rho_{b,a}$.
\end{proposition}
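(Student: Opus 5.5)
We verify the map identity on the three generators of the family, then deduce the state identity from it via the Choi isomorphism.

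\textbf{Map identity.} Write $\Le_{a,b}=(1-a-b)\Delta + a\,\id + b\,\Ad_{\Om_d}\circ\top$ and compose each piece with $\top$ on the left.
\begin{enumerate}
\item First, $\top\circ\Delta=\Delta$. Since $\Om_d$ is unitary, $\Ad_{\Om_d}\circ\Delta=\Delta$ as well.
\item Second, $\top\circ\id=\top$. We must write this as $\Ad_{\Om_d}\circ(\Ad_{\Om_d}\circ\top)$. Indeed, $\Om_d\Om_d Z^{\top}\Om_d^*\Om_d^*=\Om_d^2Z^{\top}(\Om_d^2)^*=Z^{\top}$, because $\Om_d^2=-I_d$ and the two signs cancel.
\item Third, $\top\circ\Ad_{\Om_d}\circ\top(Z)=(\Om_dZ^{\top}\Om_d^*)^{\top}=\overline{\Om_d}Z\Om_d^{\top}$. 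Since $\Om_d$ is real, $\Om_d^{\top}=\Om_d^*$, so this equals $\Om_dZ\Om_d^*=\Ad_{\Om_d}(Z)$.
\end{enumerate}
Summing the three pieces gives
$$\top\circ\Le_{a,b}=\Ad_{\Om_d}\circ\big((1-a-b)\Delta+b\,\id+a\,\Ad_{\Om_d}\circ\top\big)=\Ad_{\Om_d}\circ\Le_{b,a}.$$

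\textbf{State identity.} We use $C_{\Le_{a,b}}=\rho_{a,b}$, as noted before the proposition. For any maps, $C_{\top\circ\Le}=(\id\otimes\top)(C_\Le)=C_\Le^{\Gamma}$ and $C_{\Ad_U\circ\Le}=(I\otimes U)C_\Le(I\otimes U)^*$. Applying the Choi isomorphism to the map identity therefore yields
$$\rho_{a,b}^{\Gamma}=(I_d\otimes\Om_d)\rho_{b,a}(I_d\otimes\Om_d)^*.$$
This step is immediate once those two Choi identities are in hand.

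\textbf{Consequences.} Unitary equivalence of $\top\circ\Le_{a,b}$ and $\Le_{b,a}$ is the statement $\top\circ\Le_{a,b}=\Ad_{\Om_d}\circ\Le_{b,a}$ just proved. Local-unitary equivalence of $\rho_{a,b}^{\Gamma}$ and $\rho_{b,a}$ is the state identity.

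The only place needing care is the sign bookkeeping in item 2 above, where $\Om_d^2=-I_d$ must be seen to cancel under conjugation. The rest is routine.
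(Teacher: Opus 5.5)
Your proof is correct: the three generator computations (using $\Om_d^2=-I_d$ and the fact that $\Om_d$ is real) give the map identity, and the Choi identities $C_{\top\circ\Le}=C_{\Le}^{\Gamma}$ and $C_{\Ad_U\circ\Le}=(I\otimes U)C_{\Le}(I\otimes U)^*$ carry it over to the states. The paper states this proposition as a simple observation without writing out a proof, and your verification is the direct check that the paper leaves implicit.
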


\section{Positivity of symplectic covariant linear maps} \label{sec-SympkPos}

In this section, we investigate various positivity properties of the Hermitian-preserving $(S,S)$-covariant maps $\{\Le_{p,q}\}_{p,q\in \Real}$ defined in \cref{eq-SympCov}.

\subsection{Positivity, complete positivity, and decomposability}

Let us begin with a characterization of complete (co-)positivity, positivity, and decomposability of $\Le_{p,q}=\Le_{p,q}^{(d)}$ in terms of the parameters  $p$ and $q$, for arbitrary even dimensions $d\geq 4$. For convenience, we briefly illustrate the \textit{geometric} description of these four regions in Fig. \ref{fig-PosDec}, while their \textit{algebraic} characterizations are given below.

\begin{figure}
    \centering

    \includegraphics[width=0.4\linewidth]{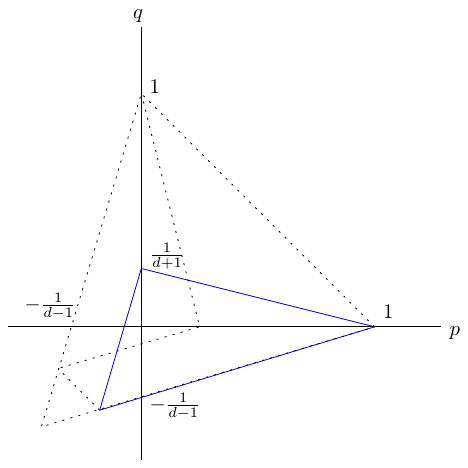} \;
    \includegraphics[width=0.4\linewidth]{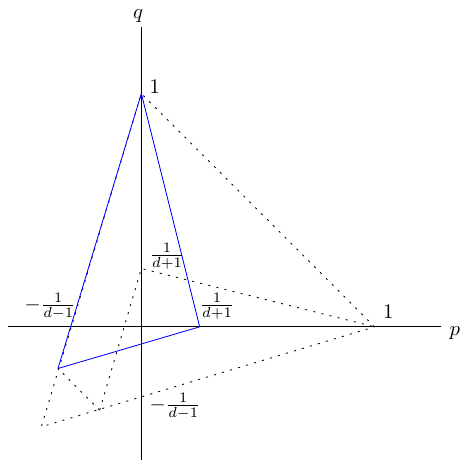} \\

    \vspace{2mm}
    \includegraphics[width=0.4\linewidth]{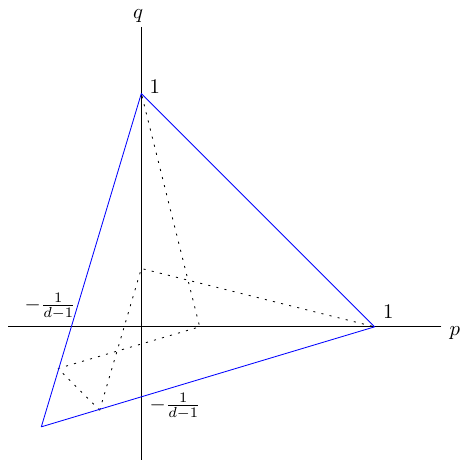}\;\;
    \includegraphics[width=0.4\linewidth]{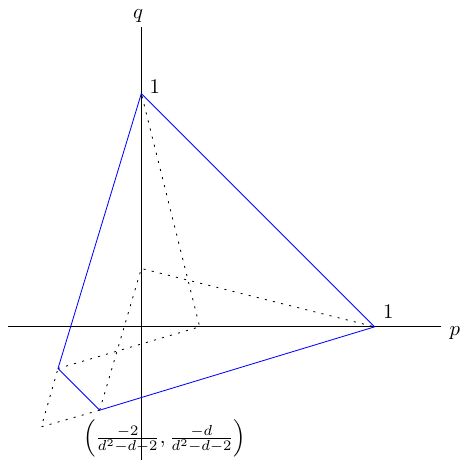}
    
    \caption{The complete positivity (top left), complete copositivity (top right), positivity (bottom left), and decomposability (bottom right) regions of $\Le_{p,q}$}
    \label{fig-PosDec}
\end{figure}

\begin{proposition} \label{prop-Symp-CP}
The linear map $\Le_{p,q}$ is completely positive if and only if $(p,q)\in \mathbb{P}_d$, where
\begin{equation} \label{eq-Symp-CP}
    \mathbb{P}_d=\mathbb{P}_d^{(d)}:= \left\{\ (p,q)\in \Real^2:\;\begin{array}{lll} p+(1-d)q\leq 1, \\ p+(1+d)q\leq 1, \\ (1-d)p+q\leq\frac{1}{d+1}\end{array} \right\}.
\end{equation}
Furthermore, $\Le_{a,b}$ is completely copositive if and only if $(b,a)\in \mathbb{P}_d$.

\end{proposition}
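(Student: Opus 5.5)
The plan is to use Choi's theorem: $\Le_{p,q}$ is completely positive if and only if its Choi matrix $C_{\Le_{p,q}}=\rho_{p,q}$ is positive semidefinite. Since $\rho_{p,q}\in\Inv(S\otimes\overline{S})$, Corollary~\ref{cor-Symp-proj} lets us expand it in the three mutually orthogonal projections $\Pi_1^{S\overline{S}},\Pi_2^{S\overline{S}},\Pi_3^{S\overline{S}}$. Positivity then reduces to the nonnegativity of three real coefficients.

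\textbf{Step 1 (expansion).} The three projections sum to $I_{d^2}$. From the definitions in \eqref{eq-SSbar-proj},
$$F_d^{\Om}=\Pi_2^{S\overline{S}}-\Pi_1^{S\overline{S}}-\Pi_3^{S\overline{S}}.$$
This is the one identity to double-check, namely that $|\om_d\ra$ lies in the $(-1)$-eigenspace of $F_d^{\Om}$. It is implicit in the statement that $\Pi_3^{S\overline{S}}$ is a projection orthogonal to $\Pi_1^{S\overline{S}}$, and it can also be verified directly by computing $F_d^{\Om}|\om_d\ra$ with $\Om_d^{\top}=-\Om_d$. Substituting into \eqref{eq-SympInv} gives
$$\rho_{p,q}=\Big(\tfrac{1-p-q}{d^2}+p-\tfrac{q}{d}\Big)\Pi_1^{S\overline{S}}+\Big(\tfrac{1-p-q}{d^2}+\tfrac{q}{d}\Big)\Pi_2^{S\overline{S}}+\Big(\tfrac{1-p-q}{d^2}-\tfrac{q}{d}\Big)\Pi_3^{S\overline{S}}.$$

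\textbf{Step 2 (positivity conditions).} Because the projections are mutually orthogonal and nonzero, $\rho_{p,q}\geq 0$ if and only if each coefficient is nonnegative.
\begin{itemize}
\item Multiplying the $\Pi_2^{S\overline{S}}$ coefficient by $d^2$ gives $1-p+(d-1)q\geq 0$, i.e.\ $p+(1-d)q\leq 1$.
\item The $\Pi_3^{S\overline{S}}$ coefficient gives $1-p-(d+1)q\geq 0$, i.e.\ $p+(1+d)q\leq 1$.
\item The $\Pi_1^{S\overline{S}}$ coefficient gives $1+(d^2-1)p-(d+1)q\geq 0$. Dividing by $d+1$ yields $(1-d)p+q\leq \frac{1}{d+1}$.
\end{itemize}
These are exactly the three inequalities defining $\mathbb{P}_d$.

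\textbf{Step 3 (complete copositivity).} By Proposition~\ref{prop-Symp-equivalence}, $\top\circ\Le_{a,b}=\Ad_{\Om_d}\circ\Le_{b,a}$. Composing with the unitary conjugation $\Ad_{\Om_d}$ preserves complete positivity in both directions. Hence $\Le_{a,b}$ is completely copositive if and only if $\Le_{b,a}$ is completely positive, which by Step 2 holds if and only if $(b,a)\in\mathbb{P}_d$.

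The only potential obstacle is the sign bookkeeping in Step 1, namely correctly identifying the action of $F_d^{\Om}$ on $|\om_d\ra$. Everything else is routine.
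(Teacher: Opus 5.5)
Your proof is correct and follows the paper's own argument essentially verbatim: expand the Choi matrix $\rho_{p,q}$ in the three mutually orthogonal projections of Corollary~\ref{cor-Symp-proj}, read off the three coefficient inequalities, and deduce the copositivity claim from Proposition~\ref{prop-Symp-equivalence}, using that $\Ad_{\Om_d}$ preserves complete positivity. One small point: the identity $F_d^{\Om}=\Pi_2^{S\overline{S}}-\Pi_1^{S\overline{S}}-\Pi_3^{S\overline{S}}$ already follows algebraically from the definitions, whereas $F_d^{\Om}|\om_d\ra=-|\om_d\ra$ is what makes $\Pi_3^{S\overline{S}}$ a projection orthogonal to $\Pi_1^{S\overline{S}}$, and you correctly note that the corollary supplies this.
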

\begin{proof}
The complete positivity of $\Le_{a,b}$ is equivalent to the positivity of $C_{\Le_{p,q}}=\rho_{p,q}$. Furthermore, it is straightforward to check that
    $$\rho_{p,q} = \Big(\frac{1-p-q}{d^2}+p-\frac{q}{d}\Big)\Pi_1^{S\overline{S}}+\Big(\frac{1-p-q}{d^2}+\frac{q}{d}\Big)\Pi_2^{S\overline{S}} + \Big(\frac{1-p-q}{d^2} - \frac{q}{d}\Big)\Pi_3^{S\overline{S}},$$
where $\Pi_{1,2,3}^{S\overline{S}}$ are mutually orthogonal projections defined in Corollary \ref{cor-Symp-proj}.
Therefore, $\rho_{p,q}\geq 0$ if and only if the three coefficients above are nonnegative, which is equivalent to $(p,q)\in \mathbb{P}_d$. The second statement follows from Proposition \ref{prop-Symp-equivalence}: $\Le_{p,q}$ is co-CP if and only if $\top\circ \Le_{p,q}\cong \Le_{q,p}$ is CP if and only if $(q,p)\in \mathbb{P}_d$.
\end{proof}

\begin{corollary} \label{cor-Symp-psd}
An $S\otimes \overline{S}$-invariant matrix $\rho_{a,b}=C_{\Le_{a,b}}$ as in \cref{eq-SympInv} defines a quantum state if and only if $(a,b)\in \mathbb{P}_d$. 
\end{corollary}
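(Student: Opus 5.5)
This is essentially immediate from Proposition \ref{prop-Symp-CP} together with the Choi correspondence. By definition $\rho_{a,b}$ has unit trace for every real pair $(a,b)$. To see this, note that $\Tr(I_{d^2})/d^2=1$ and $\Tr|\om_d\ra\la\om_d|=1$. Also $\Tr\big((I_d\otimes\Om_d)F_d(I_d\otimes\Om_d)^*\big)=\Tr F_d=d$, since the conjugation is by a unitary. The total trace is therefore $(1-a-b)+a+b=1$. So $\rho_{a,b}$ is a quantum state exactly when it is positive semidefinite. We restrict to real $(a,b)$, since Hermiticity is equivalent to $a,b\in\Real$, as remarked after \cref{eq-SympInv}.

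Next I would use the identity $C_{\Le_{a,b}}=\rho_{a,b}$ noted in the text, together with Choi's theorem. These give $\rho_{a,b}\geq 0$ if and only if $\Le_{a,b}$ is completely positive. By Proposition \ref{prop-Symp-CP}, this holds if and only if $(a,b)\in\mathbb{P}_d$.

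For a self-contained check, one can instead read the eigenvalues directly from the spectral decomposition in the proof of Proposition \ref{prop-Symp-CP}. That decomposition is taken with respect to the orthogonal projections $\Pi_i^{S\overline{S}}$ of Corollary \ref{cor-Symp-proj}. The eigenvalues are
\begin{equation*}
\tfrac{1-a-b}{d^2}+a-\tfrac{b}{d},\qquad \tfrac{1-a-b}{d^2}+\tfrac{b}{d},\qquad \tfrac{1-a-b}{d^2}-\tfrac{b}{d}.
\end{equation*}
Multiplying each nonnegativity condition by $d^2$ gives, respectively, $(d^2-1)a-(d+1)b\geq -1$, $a+(1-d)b\leq 1$, and $a+(1+d)b\leq 1$. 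The last two are the first two defining inequalities of $\mathbb{P}_d$.

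The only step needing care is matching the first eigenvalue condition with the third inequality as printed in \cref{eq-Symp-CP}. Rearranging it gives $(1-d)a+\frac{b}{d-1}\leq \frac{1}{d+1}$ after dividing by $d^2-1$, so the normalization should be double-checked against the printed form. The conclusion itself is simply that the state condition coincides with the CP region $\mathbb{P}_d$ of Proposition \ref{prop-Symp-CP}.
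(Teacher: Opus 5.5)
Your argument is correct and is the paper's own: $\rho_{a,b}$ has unit trace, $\rho_{a,b}=C_{\Le_{a,b}}$, and positivity is read off from the spectral decomposition in the proof of Proposition~\ref{prop-Symp-CP}. The normalization worry at the end comes from an arithmetic slip on your side: dividing $(d^2-1)a-(d+1)b\geq -1$ by $d+1$ (not by $d^2-1$) gives $(d-1)a-b\geq -\frac{1}{d+1}$, i.e.\ $(1-d)a+b\leq \frac{1}{d+1}$, which is exactly the third inequality in \eqref{eq-Symp-CP}, so there is no discrepancy to double-check.
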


Interestingly, the positivity and decomposability of $\Le_{p,q}$ are also simple to characterize and turn out to be distinct (Fig. \ref{fig-PosDec}) although both to properties are in general highly nontrivial to characterize.

\begin{theorem} \label{thm-Symp-PosDec}
The positivity and decomposability regions of $\Le_{p,q}$ is completely described by the sets $\mathbb{P}_1=\mathbb{P}_1^{(d)}$ and $\mathbb{D}=\mathbb{D}^{(d)}$, respectivelty, where
\begin{align} 
    \mathbb{P}_1&:=\left\{(p,q)\in \Real^2: \;\begin{array}{lll} p+q\leq 1, \\ p+(1-d)q\leq 1, \\ (1-d)p+q\leq 1\end{array}\right\}, \label{eq-Symp-pos} \\ \mathbb{D}&:=\left\{(p,q)\in \Real^2: \;\begin{array}{lll} -\frac{2+d}{d^2-d-2} \leq p+q\leq 1, \\ p+(1-d)q\leq 1, \\ (1-d)p+q\leq 1\end{array}\right\}. \label{eq-Symp-dec}
\end{align}
In particular, $\Le_{p,q}$ is positive and indecomposable in the parameter region $\begin{cases} p+q<-\frac{2+d}{d^2-d-2} , \\ p+(1-d)q\leq 1, \\ (1-d)p+q\leq 1.\end{cases}$
\end{theorem}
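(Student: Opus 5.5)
We treat positivity directly, and decomposability through the symmetric duality of Theorem~\ref{thm-DualSymm}(3).

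\textbf{Positivity.} Since the positive cone is generated by rank-one projections, $\Le_{p,q}$ is positive if and only if $\Le_{p,q}(|x\ra\la x|)\geq 0$ for every unit vector $x\in\Comp^d$. We compute
$$\Le_{p,q}(|x\ra\la x|)=\frac{1-p-q}{d}I_d+p|x\ra\la x|+q\,|\Om_d\overline{x}\ra\la \Om_d\overline{x}|.$$
The key observation is that $\Om_d\overline{x}$ is a unit vector orthogonal to $x$. Indeed, $\la x|\Om_d\overline{x}\ra=x^*\Om_d\overline{x}=\overline{x^{\top}\Om_d x}=0$, using that $\Om_d$ is real and \cref{eq-skew-form}. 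Hence, for $d\geq 4$, the operator above has exactly three eigenvalues, independent of $x$:
$$\tfrac{1-p-q}{d}+p,\qquad \tfrac{1-p-q}{d}+q,\qquad \tfrac{1-p-q}{d}.$$
Requiring all three to be nonnegative gives precisely $(1-d)p+q\leq 1$, $p+(1-d)q\leq 1$ and $p+q\leq 1$, i.e.\ $\mathbb{P}_1$. This part is routine.

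\textbf{Decomposability, strategy.} By Theorem~\ref{thm-DualSymm}(3) with $\pi_A=\pi_B=\iota_{Sp(d)}$, the map $\Le_{p,q}$ is decomposable if and only if $\Tr(\rho_{a,b}\,C_{\Le_{p,q}})\geq 0$ for every $S\otimes\overline{S}$-invariant PPT state $\rho_{a,b}$. By Corollary~\ref{cor-Symp-psd} and Proposition~\ref{prop-Symp-CP} (through Proposition~\ref{prop-Symp-equivalence}), the PPT states in this class are those with $(a,b)\in \mathbb{P}_d\cap\{(a,b):(b,a)\in\mathbb{P}_d\}$. This is a compact polygon, symmetric under $a\leftrightarrow b$.

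\textbf{Decomposability, computation.} Since $C_{\Le_{p,q}}=\rho_{p,q}$, we use the spectral decomposition from the proof of Proposition~\ref{prop-Symp-CP}. With $\alpha_i(a,b)$ denoting the coefficient of $\Pi_i^{S\overline{S}}$ in $\rho_{a,b}$, we get
$$\Tr(\rho_{a,b}\rho_{p,q})=\sum_i \operatorname{rank}(\Pi_i^{S\overline{S}})\,\alpha_i(a,b)\,\alpha_i(p,q),$$
with ranks $1$, $d(d+1)/2$ and $d(d-1)/2-1$. This expression is affine in $(p,q)$ for fixed $(a,b)$. It therefore suffices to enumerate the vertices of the PPT polygon, intersecting the six lines from the two copies of \cref{eq-Symp-CP}, and to write down one affine inequality in $(p,q)$ per vertex.

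I expect the following correspondence. The symmetric vertex, which should be $a=b=\frac{1}{d+2}$ (the P\'al--V\'ertesi point), yields $p+q\geq -\frac{2+d}{d^2-d-2}$. The remaining vertices (for instance the maximally entangled vertex, and vertices on the axes) yield the inequalities $p+q\leq 1$, $p+(1-d)q\leq 1$ and $(1-d)p+q\leq 1$, or ones implied by them. As a consistency check, decomposability implies positivity, so $\mathbb{D}\subseteq\mathbb{P}_1$ must come out. Alternatively, for sufficiency one can verify directly that each vertex of $\mathbb{D}$ equals $\Le_{p_1,q_1}+\top\circ\Le'$ with $(p_1,q_1)\in\mathbb{P}_d$ CP and $\top\circ\Le'$ co-CP (via Proposition~\ref{prop-Symp-equivalence}). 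Convexity then gives all of $\mathbb{D}$.

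\textbf{Conclusion and main obstacle.} The final claim, positive and indecomposable on $\mathbb{P}_1\setminus\mathbb{D}$, follows immediately, and that region is nonempty because $-\frac{2+d}{d^2-d-2}$ lies strictly above the minimum of $p+q$ on $\mathbb{P}_1$. The main obstacle is the bookkeeping in the decomposability step. One must correctly find all vertices of the PPT polygon and confirm that the vertex inequalities reduce exactly to the four constraints defining $\mathbb{D}$, with none missing and no extra binding constraint. I would first compute the symmetric vertex, to confirm the constant $\frac{2+d}{d^2-d-2}$, and then check the other vertices one by one.
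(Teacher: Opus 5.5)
Your proof is correct. The positivity part is the same as the paper's. For decomposability you argue on the dual side, while the paper argues on the primal side.

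\textbf{The paper's route.} It takes an arbitrary decomposition $\Le_{p,q}=\Le'+\Le''$ and applies the $(S,S)$-twirl, which preserves CP and co-CP. Trace preservation then writes $\Le_{p,q}$ as a convex combination of a covariant CP map and a covariant co-CP map. Hence $\mathbb{D}=\mathrm{conv}(\mathbb{P}_d\cup\mathbb{P}_d^{\top})$. This hull has vertices $(1,0)$, $(0,1)$, $\bigl(-\frac{2}{d^2-d-2},-\frac{d}{d^2-d-2}\bigr)$ and the reflection of the latter; the paper asserts the hull equals $\mathbb{D}$ without writing this out.

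\textbf{Your route.} You get the facets of $\mathbb{D}$ directly, and your bookkeeping closes. The PPT region $\mathbb{T}$ is the quadrilateral with vertices $\bigl(\frac{1}{d+2},\frac{1}{d+2}\bigr)$, $\bigl(0,\frac{1}{d+1}\bigr)$, $\bigl(\frac{1}{d+1},0\bigr)$ and $\bigl(-\frac{1}{d^2-d-2},-\frac{1}{d^2-d-2}\bigr)$. The pairing is $d^2\Tr(\rho_{a,b}\rho_{p,q})=1+(d+1)\bigl[(d-1)(ap+bq)-(aq+bp)\bigr]$. Evaluated at these four vertices in order, it gives exactly $p+q\geq-\frac{d+2}{d^2-d-2}$, $p+(1-d)q\leq 1$, $(1-d)p+q\leq 1$ and $p+q\leq 1$. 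No constraint is missing and none is extra.

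\textbf{Two small corrections.} First, the maximally entangled point $(a,b)=(1,0)$ is not PPT, so it is not a vertex of $\mathbb{T}$. The facet $p+q\leq 1$ comes instead from the vertex $\bigl(-\frac{1}{d^2-d-2},-\frac{1}{d^2-d-2}\bigr)$. Second, reducing to the vertices of $\mathbb{T}$ is justified because the pairing is affine in $(a,b)$. Its affinity in $(p,q)$ is what makes each vertex yield a half-plane, which is a separate point.

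\textbf{Comparison.} The paper's argument is more elementary: it needs only twirling invariance and the CP triangle $\mathbb{P}_d$, not Theorem~\ref{thm-DualSymm}(3) or the PPT region. It does, however, leave the convex-hull computation implicit. Your argument relies on the symmetric duality theorem but produces the defining inequalities with no hull computation. It also shows that the new boundary $p+q=-\frac{d+2}{d^2-d-2}$ is cut out by the symmetric PPT vertex, i.e.\ the P\'al--V\'ertesi-type state. The two arguments are polar-dual descriptions of the same polygon.
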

\begin{proof}
By convexity, the positivity of $\Le_{p,q}$ is equivalent to the condition 
    $$\Le_{p,q}(|v\ra \la v|) = \frac{1-p-q}{d} I_{d^2} + p|v\ra\la v| + q|\Om_d\overline{v}\ra\la \Om_d\overline{v}| \geq 0$$
for all unit vectors $v\in \Comp^d$. However, we have $v\perp \Om_d\overline{v}$ for all $v$ by Eq. \eqref{eq-skew-form}:
    $$\la v|\Om_d\overline{v}\ra = \la \overline{v},\overline{v}\ra_{\Om}=0.$$
Consequently, $\Le_{p,q}(|v\ra\la v|)\geq 0$ if and only if the three inequalities
    $$\frac{1-p-q}{d}\geq 0, \quad \frac{1-p-q}{d}+p\geq 0, \quad \frac{1-p-q}{d}+q\geq 0,$$
which is further equivalent to $(p,q)\in \mathbb{P}_1$. Since the condition above is independent of the choice of $v$, this precisely characterizes the positivity condition for $\Le_{p,q}$.

Now suppose $\Le_{p,q}$ is decomposable, and write $\Le_{p,q} = \Le^{(1)}+\Le^{(2)}$ where $\Le^{'}$ is CP and $\Le^{''}$ is co-CP. By applying the $(S,S)$-twirling operation $\mathcal{T}_{S,S}:=\mathcal{T}_{\iota_{Sp(d)},\iota_{Sp(d)}}$, we have $\Le_{p,q}=\T_{S,S}\Le^{'}+\mathcal{T}_{S,S}{\Le''}$ where $\T_{S,S}\Le^{'}$ and $\T_{S,S}\Le^{''}$ are $(S,S)$-covariant CP and is co-CP maps, resp, by Proposition \ref{prop-twirling-preserving}. Therefore, the trace-preserving property of $\Le_{p,q}$, combined with Proposition \ref{prop-Symp-CP}, implies that
    $$\Le_{p,q}=\lambda \Le_{p_1,q_1}+ (1-\lambda)\Le_{p_2,q_2} = \Le_{\lambda p_1+(1-\lambda)p_2, \lambda q_1+(1-\lambda)q_2}$$
for some $(p_1,q_1)\in \mathbb{P}_d$, $(q_2,p_2)\in \mathbb{P}_d$, and $\lambda\in [0,1]$. Consequently, $\Le_{p,q}$ is decomposable if and only if $(p,q)$ is in the convex hull of $\mathbb{P}_d$ and its reflection $\mathbb{P}_d^{\top}=\{(q,p):(p,q)\in \mathbb{P}_d\}$, that is,
    $$(p,q)\in {\rm conv}(\mathbb{P}_d\cup \mathbb{P}_d^{\top})=\mathbb{D}.$$
\end{proof}

\subsection{$k$-Positivity} \label{sec-application}

We now establish a complete characterization of $k$-positivity of $\Le_{p,q}$, which is more involved than positivity or decomposability in the previous section, and the geometry of the regions highly depends on the values $1\leq k\leq d$. Let us first set the notation
\begin{equation} \label{eq-kposset}
    \mathbb{P}_k=\mathbb{P}_k^{(d)}:=\set{(p,q)\in \Real^2:\Le_{p,q}^{(d)} \text{ is $k$-positive}},\;1\leq k\leq d.
\end{equation}
In the previous section, we already obtained the positivity region $\mathbb{P}_1$ and the complete positivity region $\mathbb{P}_d$, so we may consider only the cases $1<k<d$.

The main result of this section is as follows (we also refer to Fig. \ref{fig-kPos} for brief illustrations).

\begin{figure}
    \centering

    \includegraphics[width=0.4\linewidth]{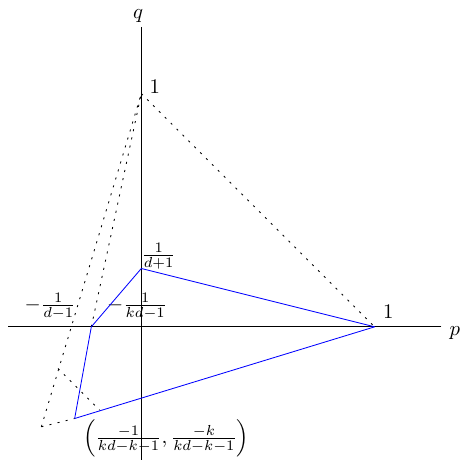} \;
    \includegraphics[width=0.4\linewidth]{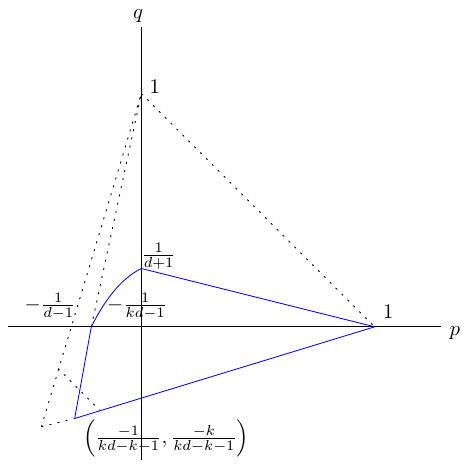} \\

    \vspace{2mm}
    \includegraphics[width=0.4\linewidth]{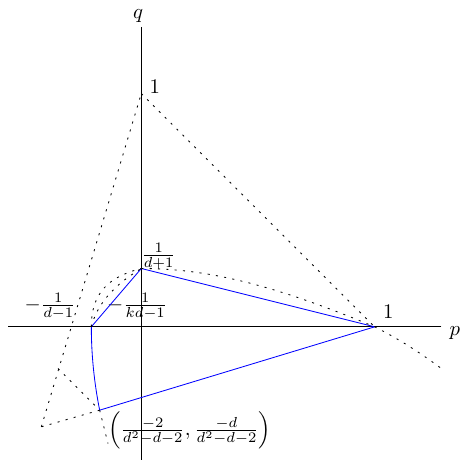}\;\;
    \includegraphics[width=0.4\linewidth]{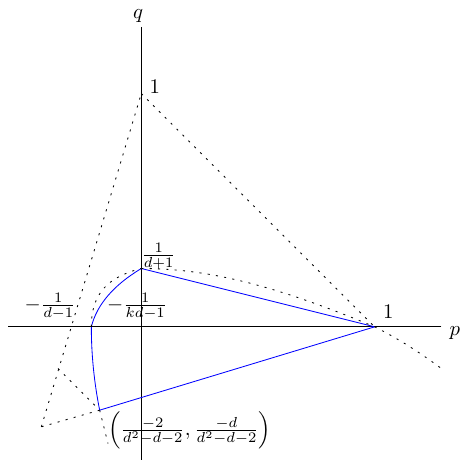}
    
    \caption{The $k$-Positivity regions $\mathbb{P}_k$ of $\Le_{p,q}$ in the case: (1) top left, (2) top right, (3) bottom left, and (4) bottom right.}
    \label{fig-kPos}
\end{figure}

\begin{theorem} \label{thm-SympkPos}
For a real number $u\geq 0$, define a quadratic polynomial 
\begin{equation} \label{eq-conic}
    f_u^{(d,k)}(x,y):= \big(1-x-(1+d)y \big) \big(1-(1-kd)x-y \big)+d^2uxy.
\end{equation}
Then for $d\geq 4$ even and $1<k<d$, the $k$-positivity region $\mathbb{P}_k\subset \Real^2$ consists of all points $(p,q)$ satisfying the following inequalities, depending on the value and parity of $k$:
\begin{enumerate}[leftmargin=*]
    \item The case $1<k\leq \frac{d}{2}$ and $k$ even:
    \begin{equation} \label{eq-Symp-kpos1e}
        p+(1+d)q\leq 1,  \quad  (1-kd)p+(1+d)q\leq 1, \quad (1-kd)p+q\leq 1, \quad  p+(1-d)q\leq 1.
    \end{equation}

    \item The case $1<k\leq \frac{d}{2}$ and $k$ odd:
    \begin{equation} \label{eq-Symp-kpos1o}
        p+(1+d)q\leq 1,  \quad  f_{k-1}^{(d,k)}(p,q)\geq 0, \quad (1-kd)p+q\leq 1, \quad  p+(1-d)q\leq 1.
    \end{equation}

    \item The case $\frac{d}{2}< k <d$ and $k$ even:
    \begin{equation} \label{eq-Symp-kpos2e}
        p+(1+d)q\leq 1, \quad  (1-kd)p+(1+d)q\leq 1, \quad f^{(d,k)}_{2k-d}(p,q)\geq 0, \quad  p+(1-d)q\leq 1.
    \end{equation}

    \item The case $\frac{d}{2}< k <d$ and $k$ odd:
    \begin{equation} \label{eq-Symp-kpos2o}
        p+(1+d)q\leq 1, \quad f_{k-1}^{(d,k)}(p,q)\geq 0, \quad  f^{(d,k)}_{2k-d}(p,q)\geq 0, \quad p+(1-d)q\leq 1.
    \end{equation}
\end{enumerate}
\end{theorem}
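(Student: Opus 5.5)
The plan is to work directly with the $k$-blockpositivity of the Choi matrix $C_{\Le_{p,q}}=\rho_{p,q}$ (Theorem~\ref{thm-dual-schmidt}). Every vector of Schmidt rank at most $k$ can be written as $|\xi\ra=(I_d\otimes X)|\om_d\ra\cdot\sqrt d$ with $X\in M_d(\Comp)$, $\operatorname{rank}X\le k$ and $\Tr(X^*X)=1$. For such $\xi$ I will compute
\[
\la\xi|I_{d^2}|\xi\ra=1,\qquad \la \xi|\om_d\ra\la\om_d|\xi\ra=\tfrac{1}{d}|\Tr X|^2,\qquad \la \xi|F_d^{\Om}|\xi\ra=\Tr\big(X^*\,\Om_dX^{\top}\Om_d^*\big)\cdot(\pm\text{ a convention sign}).
\]
The map $\theta:X\mapsto \Om_dX^{\top}\Om_d^*$ is a real-linear, unitary involution on $M_d(\Comp)$ with the Hilbert--Schmidt inner product. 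So I will split $X=X_++X_-$ with $\theta(X_\pm)=\pm X_\pm$. Equivalently, $X_+\Om_d$ is antisymmetric and $X_-\Om_d$ is symmetric, up to the sign coming from $\Om_d^{\top}=-\Om_d$. Since $\theta(I_d)=I_d$, only $X_+$ contributes to the trace. Setting $s=\|X_+\|_2^2$ and $t=\|X_+\|_2^2-\|X_-\|_2^2$, the condition becomes
\[
\tfrac{1-p-q}{d^2}+\tfrac{p}{d^2}\,|\Tr X|^2+\tfrac{q}{d}\,t\;\ge 0
\]
for all admissible $X$. Hence $\mathbb{P}_k$ is the dual cone of the planar convex set
\[
K_k=\operatorname{conv}\{(|\Tr X|^2,\,t):\operatorname{rank}X\le k,\ \|X\|_2=1\}.
\]
The whole theorem reduces to describing $K_k$ exactly.

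To describe $K_k$, I will first identify its extreme points in the obvious directions.
\begin{itemize}
\item $t=-1$ is attained by $X=X_-$; this forces $\Tr X=0$ and gives the vertex $(0,-1)$, which produces $p+(1+d)q\le 1$.
\item $t=1$ together with the largest trace means maximising $|\Tr X_+|^2$ over matrices $X_+=A\Om_d^*$ with $A$ antisymmetric of rank $\le k$. Since antisymmetric matrices have even rank, this gives $|\Tr X|^2\le 2\lfloor k/2\rfloor$ at $t=1$. Here the parity of $k$ enters.
\item Unconstrained in $t$, the maximum $|\Tr X|^2=k$ is attained by a multiple of a rank-$k$ projection $P$. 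Its value of $t$ is computed from $\|P\|_2$ and $\Tr(P\,\Om_dP^{\top}\Om_d^*)$, and the optimal choice of $P$ (how much of $P$ can be made $\theta$-invariant) depends on whether $k\le d/2$. This is the source of the threshold $d/2$ and of the parameter $u=2k-d$.
\end{itemize}

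Next, for fixed $t$, I will maximise $|\Tr X|^2$ subject to $\operatorname{rank}X\le k$. Using Schur-type normal forms for the symplectic action (by Proposition~\ref{prop-twirling-preserving}, we may conjugate $X$ by $S\in Sp(d)$), I expect to reduce to block-diagonal $X$ built from $2\times2$ symplectic pairs, plus at most one unpaired direction. The unpaired direction occurs when $k$ is odd, or when $k>d/2$ forces overlaps between $\operatorname{Ran}X$ and its symplectic partner. A Lagrange-multiplier computation on the resulting few real parameters should then show the following.
\begin{itemize}
\item When $k$ is even and $k\le d/2$, the boundary of $K_k$ is polygonal, with vertices giving exactly the four linear inequalities in \eqref{eq-Symp-kpos1e}.
\item In the remaining cases, the boundary contains an arc of a conic. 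Dualising that arc will give the conditions $f^{(d,k)}_{k-1}\ge0$ (from the odd unpaired direction) and/or $f^{(d,k)}_{2k-d}\ge0$ (from the forced overlap when $k>d/2$), with the line $p+(1-d)q\le 1$ surviving from positivity.
\end{itemize}
In each case I will check that the quadratic conditions appear only on the relevant branch (where $pq$ has the appropriate sign), so that the stated inequalities are both necessary and sufficient.

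The main obstacle is the exact computation of the curved part of $\partial K_k$, that is, proving that the rank-constrained maximum of $|\Tr X|^2$ at fixed $t$ is attained by the normal forms above and yields precisely the conic $f_u^{(d,k)}=0$. My first attempt will be to show that an optimal $X$ can be taken positive on its support after a symplectic change of basis, reducing the problem to choosing $k$ orthonormal vectors and their $\Om_d$-partners. Then $t$ becomes $-1+2\sum_i|\la v_i|\Om_d\overline{v_j}\ra|^2$-type quantities, which I will optimise explicitly. Sufficiency of the listed inequalities then follows because $K_k$ is the convex hull of the computed extreme curve and vertices, and the dual of a region bounded by segments and one conic arc is cut out by the corresponding linear and quadratic inequalities.
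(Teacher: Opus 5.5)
You set up the problem correctly: $\Le_{p,q}$ is $k$-positive iff $1-p-q+dp\,T-dq\,t\ge 0$ on $K_k$. You also correctly locate where parity and the threshold $d/2$ enter: the even rank of antisymmetric matrices, and the range of $\frac1k\sum_{j,j'}|\la v_j|\Om_d|\overline{v_{j'}}\ra|^2$ for $X\propto P$. That range is the content of the paper's Lemma~\ref{lem-Symp-optimization}, which you assert but still have to prove. The maximum comes from the skew-symmetric normal form of $B=(\la v_i|\Om_d|\overline{v_j}\ra)$ with $\|B\|_\infty\le 1$; the minimum comes from $\dim(\operatorname{Ran}P_v\cap\Om_d\operatorname{Ran}\overline{P_v})\ge 2k-d$.

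There are two slips in your functional that make the vertex bookkeeping come out wrong.
\begin{itemize}
\item The coefficient of $|\Tr X|^2$ is $p/d$, not $p/d^2$.
\item Since $\Om_d^{\top}=-\Om_d$, $F_d^{\Om}$ acts on $X$ as $-\theta$. With $t=\|X_+\|_2^2-\|X_-\|_2^2$ the term is therefore $-\frac{q}{d}t$.
\end{itemize}
Consequently $(0,-1)$ gives $p+(1-d)q\le 1$. The inequality $p+(1+d)q\le1$ comes from $(0,1)$, realised e.g.\ by $X\propto|e_1\ra\la e_2|+|e_{2+d/2}\ra\la e_{1+d/2}|$. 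The point $(k,1)$ (for $k$ even) gives $(1-kd)p+(1+d)q\le 1$, and $(k,0)$ (for $k\le d/2$) gives $(1-kd)p+q\le 1$.

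The genuine gap is the step you yourself flag: proving that no $X$ of rank $\le k$ lies beyond the conjectured curved boundary of $K_k$. The tools you propose do not do this.
\begin{itemize}
\item Conjugation by $Sp(d)$ has no Schur-type normal form that block-diagonalises a general (or even positive) $X$ along symplectic pairs.
\item Suppose an optimal $X=\sum_j\lambda_j|v_j\ra\la v_j|$ were positive. Then $t=\sum_{j,j'}\lambda_j\lambda_{j'}|\la v_j|\Om_d|\overline{v_{j'}}\ra|^2$ is a weighted quantity. Its joint optimisation with $(\sum_j\lambda_j)^2$ is not controlled by the extreme values of the unweighted sum.
\item Lagrange multipliers over frames would only give critical points, not global optimality.
\end{itemize}

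The missing idea, which the paper uses, is to fix the support first (Tomiyama's criterion, Proposition~\ref{prop-kpos}) and solve the quadratic problem on it exactly. For a fixed orthonormal frame the form is $A\,I+kp|\om_k^v\ra\la\om_k^v|+qF_k^{v,\Om}$, with $F_k^{v,\Om}=\Pi_{\mathcal S}^{v,\Om}-\Pi_{\A}^{v,\Om}$ and $\om_k^v\perp\operatorname{Ran}\Pi_{\mathcal S}^{v,\Om}$. So only a $2\times 2$ block on $\operatorname{span}\{\xi_1,\xi_2\}$, with $\xi_1=\Pi_{\A}^{v,\Om}\om_k^v$ and $\xi_2=\om_k^v-\xi_1$, is nontrivial.

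In your coordinates, let a support have parameter $u=\sum_{j,j'}|\la v_j|\Om_d|\overline{v_{j'}}\ra|^2$. Its upper boundary is exactly the arc $T=\big(\sqrt{ut}+\sqrt{(k-u)(1-t)}\big)^2$, $0\le t\le 1$. The three $2\times2$ positivity conditions (two diagonal entries and the determinant $(A-q)(A+kp)+upq$) are affine in $u$. Hence only the endpoints $u\in\{2\lfloor k/2\rfloor,\max(2k-d,0)\}$ matter, and intermediate supports need no separate treatment. Dualising these two arcs yields $f^{(d,k)}_{k-1}\ge 0$ and $f^{(d,k)}_{2k-d}\ge 0$, or linear inequalities when an arc degenerates to a segment.
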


\begin{remark} \label{rmk-kPos}
Together with Fig. \ref{fig-kPos}, we provide additional details on Theorem \ref{thm-SympkPos}, supplementing the geometric description of the regions $\mathbb{P}_k$ for $k=2,\ldots, d-1$.
\begin{itemize}
    \item For cases (1) and (2), $\mathbb{P}_k$ has four extreme points
        $$(1,0), \quad \Big(0,\frac{1}{d+1}\Big), \quad \Big(\frac{-1}{kd-1},0\Big), \quad \Big(\frac{-1}{kd-k-1},\frac{-k}{kd-k-1}\Big)$$
    In case (1), the vertices are simply connected by straight line segments. In case (2), the points $\Big(0,\frac{1}{d+1}\Big)$ and $\Big(\frac{-1}{kd-1},0\Big)$ are connected along the curve $f_{k-1}^{(d,k)}(x,y)=0$, which is always a hyperbola (Lemma \ref{lem-Conic}).

    \item For cases (3) and (4), $\mathbb{P}_k$ has four extreme points
        $$(1,0), \quad \Big(0,\frac{1}{d+1}\Big), \quad \Big(\frac{-1}{kd-1},0\Big), \quad \Big(\frac{-2}{d^2-d-2},\frac{-d}{d^2-d-2}\Big).$$
    In case (3), the points $\Big(\frac{-1}{kd-1},0\Big)$ and $\Big(\frac{-2}{d^2-d-2},\frac{-d}{d^2-d-2}\Big)$ are connected along the hyperbola $f_{2k-d}^{(d,k)}(x,y)=0$ (Lemma \ref{lem-Conic}). In case (4), in addition to this, the points $\Big(0,\frac{1}{d+1}\Big)$ and $\Big(\frac{-1}{kd-1},0\Big)$ are connected along the hyperbola $f_{k-1}^{(d,k)}(x,y)=0$.

    \item The two hyperbolas $f_{k-1}^{(d,k)}(p,q)=0$ and $f_{2k-d}^{(d,k)}(p,q)=0$ coincide precisely when $k=d-1$. This explains the regions shown in Fig.~\ref{fig-4kPos}, where the boundary is described by a single hyperbola in the case $k=3$.
\end{itemize}
\end{remark}

The proof of Theorem \ref{thm-SympkPos} requires several additional steps, and we postpone it to \cref{app-kPos}. For the remainder, we discuss several byproducts of the above results. Most importantly, we can directly compare the $k$-positivity region $\mathbb{P}_k$ with the decomposability region $\mathbb{D}$, which reveals an explicit construction of a broad family of $(\frac{d}{2}-1)$-positive indecomposable linear maps. Note that this is the largest known value of $k$ for which $k$-positive indecomposable maps are proven to exist. Furthermore, to the best of our knowledge, this is the \textit{first explicit construction} attaining this maximal $k$, whereas previously such existence results were obtained only indirectly via duality with quantum entanglement.

\begin{corollary} \label{cor-Symp-kposIndec}
For every even integer $d\geq 4$, the set $\mathbb{P}_k^{(d)}\setminus \mathbb{D}^{(d)}$ is nonempty for all $k=1,\ldots, \frac{d}{2}-1$, whereas $\mathbb{P}_{d/2}^{(d)}\subset \mathbb{D}^{(d)}$. In particular, the family of linear maps
    $$\{\Le_{p,q}^{(d)}:(p,q)\in \mathbb{P}_{d/2-1}^{(d)}\setminus \ \mathbb{D}^{(d)}\} = \left\{\Le_{p,q}^{(d)}:\;\begin{array}{lll} p+q < -\frac{2+d}{d^2-d-2},\\ p+(1-d)q\leq 1, \\ (1-(d/2-1)d)p+q\leq 1 \end{array} \right\}$$
provides the explicit examples of $(\frac{d}{2}-1)$-positive indecomposable linear maps on $M_d(\Comp)$ while every $\frac{d}{2}$-positive $(S,S)$-covariant linear maps are decomposable.
\end{corollary}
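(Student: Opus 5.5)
The proof is a direct comparison of the explicit regions in Theorem~\ref{thm-SympkPos} (with the extreme points listed in Remark~\ref{rmk-kPos}) and Theorem~\ref{thm-Symp-PosDec}. Since $\mathbb{P}_k\subset\mathbb{P}_1$ and $\mathbb{D}=\mathbb{P}_1\cap\{p+q\geq c_d\}$ with $c_d:=-\frac{d+2}{d^2-d-2}$, a point of $\mathbb{P}_k$ lies outside $\mathbb{D}$ if and only if $p+q<c_d$. So everything reduces to computing $\min\{p+q:(p,q)\in\mathbb{P}_k\}$ for $1\leq k\leq d/2$.

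\textbf{Step 1: locating the minimum.} For $k\leq d/2$, in both cases (1) and (2), and also for $k=1$ via \eqref{eq-Symp-pos}, $\mathbb{P}_k$ lies in the wedge
$$W_k:=\{(1-kd)p+q\leq 1,\ p+(1-d)q\leq 1\}.$$
The apex of $W_k$ is $v_k:=\big(\frac{-1}{kd-k-1},\frac{-k}{kd-k-1}\big)$, which is the fourth extreme point in Remark~\ref{rmk-kPos}, so $v_k\in\mathbb{P}_k$. To see that $p+q$ is minimized over $W_k$ at $v_k$, write
$$(1,1)=-\alpha(1-kd,1)-\beta(1,1-d).$$
Solving gives
$$\alpha=\frac{d}{(kd-1)(d-1)-1}>0,\qquad \beta=\frac{1+\alpha}{d-1}>0.$$
Since both multipliers are positive, the minimum of $p+q$ over $W_k$ is attained at the apex. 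Hence
$$\min_{\mathbb{P}_k}(p+q)=g(k):=-\frac{1+k}{(d-1)k-1}.$$

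\textbf{Step 2: comparison with $c_d$.} The function $h(k):=\frac{1+k}{(d-1)k-1}$ has derivative with numerator $((d-1)k-1)-(1+k)(d-1)=-d<0$, so $h$ is strictly decreasing. Moreover,
$$h(d/2)=\frac{(d+2)/2}{(d^2-d-2)/2}=-c_d.$$
Therefore $g(d/2)=c_d$, which gives $\mathbb{P}_{d/2}\subset\mathbb{D}$. For $k<d/2$ we get $g(k)<c_d$, so $v_k\in\mathbb{P}_k\setminus\mathbb{D}$.

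\textbf{Step 3: the explicit description for $k=d/2-1$.} We must show that
$$\mathbb{P}_k\cap\{p+q<c_d\}=W_k\cap\{p+q<c_d\}=:T,$$
that is, the remaining constraints defining $\mathbb{P}_k$ are automatic on $T$. The set $T$ is a triangle (minus one edge) whose vertices are $v_k$ and the two intersections of $p+q=c_d$ with the boundary lines of $W_k$.

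\emph{Linear constraints.} For any remaining linear constraint ($p+(1+d)q\leq 1$, and $(1-kd)p+(1+d)q\leq1$ when $k$ is even), it suffices by convexity to check it at these three vertices.

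\emph{Quadratic constraint.} When $k$ is odd, the constraint $f^{(d,k)}_{k-1}\geq 0$ is quadratic, and I would argue directly rather than by convexity. First verify at the vertices that $p<0$ and $q<0$ on $T$, so $pq\geq 0$. Next, $1-p-(1+d)q>0$ on $T$ by the previous linear check, and $1-(1-kd)p-q\geq 0$ on $W_k$. Hence both factors in \eqref{eq-conic} are nonnegative and $d^2(k-1)pq\geq0$, so $f^{(d,k)}_{k-1}\geq0$ on $T$.

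\textbf{Main obstacle.} I expect Step 3 to be the main obstacle: the routine but fiddly vertex computations with parameter $k=d/2-1$, in particular confirming the signs $p,q<0$ on $T$. The case $d=4$ (so $k=1$) is degenerate; there the description follows directly from $\mathbb{P}_1$ in Theorem~\ref{thm-Symp-PosDec}.
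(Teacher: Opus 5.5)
Your proof is correct, and it organizes the key comparison differently from the paper.

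\emph{The paper's route.} The paper checks only that the single point $P=v_{d/2-1}$ satisfies the displayed inequalities. It then gets nonemptiness for smaller $k$ from the nesting $\mathbb{P}_{d/2-1}\subseteq\mathbb{P}_k$. For $\mathbb{P}_{d/2}\subset\mathbb{D}$, it reads off from the shape of $\mathbb{P}_{d/2}$ that the boundary segment from $v_{d/2}$ to $\big(-\frac{2}{d^2-2},0\big)$ stays in the half-plane $p+q\geq c_d$.

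\emph{What your route buys.} Your certificate $p+q=-\alpha\big((1-kd)p+q\big)-\beta\big(p+(1-d)q\big)\geq-\alpha-\beta$ on $W_k$ replaces that geometric reading by a one-line computation. It uses only the two linear constraints common to all $\mathbb{P}_k$ with $k\leq d/2$, so the hyperbolic boundary arcs (present when $d/2$ is odd) never enter. It also gives the exact value $\min_{\mathbb{P}_k}(p+q)=-\frac{1+k}{(d-1)k-1}$ for every $k\leq d/2$, and shows directly that each $k$-Breuer--Hall point $v_k$ lies in $\mathbb{P}_k\setminus\mathbb{D}$.

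\emph{Step 3.} This step proves the displayed set equality, which the paper's proof does not spell out, and your check goes through. The vertices of the closure of $T$ are $v_k$, then $v_{d/2}=\big(\frac{-2}{d^2-d-2},\frac{-d}{d^2-d-2}\big)$ (the intersection of $p+q=c_d$ with $p+(1-d)q=1$), and $\big(\frac{-2d}{(d-2)^2(d+1)},\frac{-d^2+2d+4}{(d-2)^2(d+1)}\big)$. All three have negative coordinates. Hence on $T$ you get $p+(1+d)q<1$ and $(1-kd)p+(1+d)q=(1-kd)p+q+dq<1$, and $f^{(d,k)}_{k-1}\geq 0$ follows exactly as you argue.
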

\begin{proof}
It is straightforward to verify that the point $P=(-\frac{2}{d^2-3d}, -\frac{d-2}{d^2-3d})$ satisfies the inequalities above, and hence $P\in \mathbb{P}_{d/2-1}\setminus \mathbb{D}$. On the other hand, the boundary line segment of $\mathbb{P}_{d/2}$ connecting the vertices $(-\frac{2}{d^2-d-2},-\frac{d}{d^2-d-2})$ and $(-\frac{2}{d^2-2},0)$ lies entirely within the half-plane $x+y\geq -\frac{2+d}{d^2-d-2}$. This implies $\mathbb{P}_{d/2}\subset \mathbb{D}$.
\end{proof}

As a concrete example, the linear map
\begin{equation} \label{eq-(d/2-1)-BH}
    \Le_P:Z\in M_d(\Comp)\mapsto \frac{1}{d^2-3d}\big((d-2)\Tr( Z)I_d -2Z-(d-2)\Om_d Z^{\top}\Om_d^*\big)
\end{equation}
considered in the proof of Corollary \ref{cor-Symp-kposIndec} is $(\frac{d}{2}-1)$-positive and indecomposable. 
Note that the point $P$ corresponds to the unique extreme point of $\mathbb{P}_{d/2-1}$ that is not contained in $\mathbb{D}$. For each $1\leq k < d/2$, the similar property holds for the point $(\frac{-1}{kd-k-1},\frac{-k}{kd-k-1})\in \mathbb{P}_k$, We refer to the associated linear map as the \emph{$k$-Breuer-Hall map}, defined by:
\begin{equation} \label{eq-k-BH}
   \Le^{\rm BH}_k:=\Le_{\frac{-1}{kd-k-1},\frac{-k}{kd-k-1}}: Z\mapsto \frac{1}{kd-k-1} \big(k \Tr(Z)I_d - Z - k\Om_d Z^{\top}\Om_d^* \big).
\end{equation}
Note that $\Le^{\rm BH}_1 = \Le^{\rm BH}$ as introduced in Example \ref{ex-BH}. These maps can be used to detect quantum states that are PPT but have Schmidt number larger than $k$. In the following, we establish a much stronger property, namely the optimality of the (Choi matrix of the)
$k$-Breuer-Hall map in the sense of \cite{LKCH00,SBL01,Bre06}. Recall that a $k$-positive (and not $(k+1)$-positive) indecomposable map $\Le$ is called \textit{optimal}, as an \textit{indecomposable $(k+1)$-Schmidt number witness}, if there is no other $k$-positive indecomposable linear map $\Le'$ that detects strictly more PPT states with Schmidt number larger than $k$:
    $$\{\rho\in \PPT: \Tr(\rho C_{\Le})<0\} \subseteq \{\rho\in \PPT: \Tr(\rho C_{\Le'})<0\} \implies \Le'=c\Le \text{ for some $c>0$}.$$
It is further shown (we refer to \cite[Theorem 1(b)]{LKCH00} and \cite{BCH+02}) that $\Le$ is optimal if and only if there is no nonzero decomposable map $\Le''$ such that the map $\Le'=\Le-\Le''$ is $k$-positive. The construction of such linear maps has remained largely beyond reach, except in the case $k=1$, but here the group symmetry method again leads to a substantial simplification.

\begin{theorem} \label{thm-Symp-opt}
For any even $d\geq 4$ and for any $k=1,\ldots, \frac{d}{2}-1$, the $k$-Breuer-Hall map $\Le_k^{\rm BH}$ is an optimal $k$-positive indecomposable linear map.
\end{theorem}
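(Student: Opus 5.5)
The plan is to use the characterization of optimality recalled just before the statement (\cite{LKCH00,BCH+02}). $\Le:=\Le_k^{\rm BH}$ is optimal if and only if there is no nonzero decomposable map $\Le''$ such that $\Le-\Le''$ is $k$-positive. We already know from Corollary \ref{cor-Symp-kposIndec} and the surrounding discussion that $\Le$ is $k$-positive and indecomposable. Its parameter point is $v=\big(\tfrac{-1}{kd-k-1},\tfrac{-k}{kd-k-1}\big)$, which is a vertex of $\mathbb{P}_k$ lying outside $\mathbb{D}$.

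\textbf{Step 1 (reduction to covariant maps).} Suppose $\Le''=\Le'_1+\Le'_2$ is nonzero, with $\Le'_1$ CP and $\Le'_2$ co-CP, and suppose $\Le-\Le''$ is $k$-positive. Apply the $(S,S)$-twirl $\T_{S,S}$. Since $\Le$ is covariant, $\T_{S,S}\Le=\Le$. By Proposition \ref{prop-twirling-preserving}, $\T_{S,S}(\Le-\Le'')$ is still $k$-positive and $\T_{S,S}\Le''$ is still decomposable.

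The twirl does not kill $\Le''$. By Proposition \ref{prop-Symm-general}(3), it preserves the trace of the Choi matrix. Moreover, $\Tr C_{\Le''}=\Tr C_{\Le'_1}+\Tr C_{\Le'_2}$, where both terms are $\ge 0$ (the second is the trace of the Choi matrix of the CP map $\top\circ\Le'_2$), and at least one is $>0$ unless $\Le''=0$.

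Writing covariant Hermitian-preserving maps as multiples of trace-preserving $\Le_{p,q}$, we get $\T_{S,S}\Le''=t\,\Le_y$ with $t>0$ and $y\in\mathbb{D}$ (Theorem \ref{thm-Symp-PosDec}). We also get $\Le-t\Le_y=(1-t)\Le_x$ with $x\in\mathbb{P}_k$, provided $1-t>0$.

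The cases $t\ge 1$ are excluded as follows. A nonzero positive map has $\Tr\Le(I)>0$, so $(1-t)\Le_x$ with $1-t\le 0$ is positive only if it vanishes. If $t=1$, then $\Le=\Le_y$ is decomposable, a contradiction. If $t>1$, then $(1-t)\Le_x$ is a nonzero multiple of a trace-preserving map with negative normalization, which cannot be positive.

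Hence $v=(1-t)x+ty$ with $t\in(0,1)$, $x\in\mathbb{P}_k$ and $y\in\mathbb{D}$.

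\textbf{Step 2 (a separating functional).} It suffices to find a linear functional $\ell$ with
$$\ell\le\ell(v)\ \text{on }\mathbb{P}_k \quad\text{and}\quad \sup_{\mathbb{D}}\ell<\ell(v).$$
Indeed, then $\ell(v)=(1-t)\ell(x)+t\ell(y)<\ell(v)$, which is a contradiction. I take $\ell(p,q)=-(p+q)$.

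On $\mathbb{D}$, $\sup\ell=\frac{d+2}{d^2-d-2}$ by \eqref{eq-Symp-dec}, while $\ell(v)=\frac{k+1}{kd-k-1}$. Direct expansion gives
$$(k+1)(d^2-d-2)-(d+2)(kd-k-1)=d(d-2k)>0$$
for $k<d/2$. So the strict inequality on $\mathbb{D}$ holds exactly in the claimed range $k\le d/2-1$; this also explains why the statement stops at $d/2-1$.

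\textbf{Step 3 (main obstacle).} It remains to show $p+q\ge -\frac{k+1}{kd-k-1}$ on all of $\mathbb{P}_k$, using Theorem \ref{thm-SympkPos}, cases (1) and (2). The polygonal part is checked on the vertices listed in Remark \ref{rmk-kPos}:
\begin{itemize}
\item at $(1,0)$ and $(0,\tfrac{1}{d+1})$ the value of $p+q$ is positive;
\item at $(\tfrac{-1}{kd-1},0)$ we have $\tfrac{1}{kd-1}\le\tfrac{k+1}{kd-k-1}$;
\item at $v$ there is equality.
\end{itemize}
For odd $k$, the boundary also contains the hyperbolic arc $f^{(d,k)}_{k-1}=0$ joining $(0,\tfrac1{d+1})$ and $(\tfrac{-1}{kd-1},0)$. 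There I would argue that the arc stays in the region $p\ge\tfrac{-1}{kd-1}$, $q\ge 0$, using convexity of $\mathbb{P}_k$ and the fact that the arc lies on the correct side of the chord between its endpoints. Then $p+q\ge\tfrac{-1}{kd-1}$ along the arc, which suffices.

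This check for the curved boundary in the odd case is the step I expect to need the most care. If the quadrant argument proves awkward, the fallback is to minimize $p+q$ directly on the conic $f^{(d,k)}_{k-1}=0$ via a Lagrange-multiplier computation.
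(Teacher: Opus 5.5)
Your proposal is correct and follows the same route as the paper: twirl with $\T_{S,S}$ to land in the two-parameter family, then show that the vertex $v$ cannot be a nontrivial convex combination of a point of $\mathbb{P}_k$ and a point of $\mathbb{D}$. Your separating functional $-(p+q)$ makes explicit the geometric step the paper reads off its figures, and your Choi-trace argument is the paper's last step run in the contrapositive.

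Step~3 is easier than you expect. For every $1\le k\le \frac d2-1$ the two linear constraints $(1-kd)p+q\le 1$ and $p+(1-d)q\le 1$ hold on all of $\mathbb{P}_k$, and
$p+q=\frac{1}{kd-k-1}\big[(kd-1)p-q\big]+\frac{k}{kd-k-1}\big[(d-1)q-p\big]\ge -\frac{k+1}{kd-k-1}$.
So the hyperbolic arc needs no separate analysis, and neither does the case $k=1$, which your vertex check omitted.
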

\begin{proof}
Suppose $\Le_k^{\rm BH}=\Le'+\Le''$ for some $k$-positive map $\Le'$ and a decomposable map $\Le''$. As in the proof of Theorem \ref{thm-Symp-PosDec}, we take the $(S,S)$-twirling $\T_{S,S}$ to have 
\begin{align*}
    \Le_k^{\rm BH} &= \T_{S,S}\Le' + \T_{S,S}\Le''\\
    &=\lambda \Le_{p_1,q_1}+(1-\lambda)\Le_{p_2,q_2} = \Le_{\lambda p_1+(1-\lambda)p_2, \lambda q_1+(1-\lambda)q_2}
\end{align*}
for some $(p_1,q_1)\in \mathbb{P}_k$, $(p_2,q_2)\in \mathbb{D}$, and $\lambda\in [0,1]$. By recalling the geometry of $\mathbb{P}_k$ and $\mathbb{D}$ as in Figs. \ref{fig-PosDec} and \ref{fig-kPos}, the point $\big(\frac{-1}{kd-k-1},\frac{-k}{kd-k-1} \big)$ is a convex combination of $(p_1,q_1)$ and $(p_2,q_2)$ only if $\lambda=1$ and $(p_1,q_1) = \big(\frac{-1}{kd-k-1},\frac{-k}{kd-k-1} \big)$. In other words, we have $\T_{S,S}\Le'=\Le_k^{\rm BH}$ and $\T_{S,S}\Le''=0$. On the other hand, the trace-preserving property of $\mathcal{T}_{\overline{S}\otimes S}$ and Proposition \ref{prop-Symm-general} (3) imply that
    $$\Tr(C_{\Le''})=\Tr(\T_{\overline{S}\otimes S}C_{\Le''})=\Tr(C_{\T_{S,S}\Le''}) = 0.$$
Since $C_{\Le''}$ is block-positive, we conclude that $\Le''=0$ and therefore $\Le'=\Le_k^{\rm BH}$ (see \cite[Comment [30]{LKCH00}).
\end{proof}

\begin{remark} \label{rmk-SympOptAtomic}
We recover the result of~\cite{Bre06} in the case $k=1$ in a concise manner by exploiting symplectic group symmetry. In fact, the same method used in the proof of Theorem~\ref{thm-Symp-opt} enables us to obtain further characterizations of linear maps of interest, as summarized below.
\begin{enumerate}
    \item Without imposing the \emph{indecomposability} condition, the optimal $k$-positive linear maps within $\Cov(S,S)$ corresponds precisely to, up to positive multiplicative constant,
    \begin{align*}
        \textstyle\{\Le_{p,q}: (1-d)p+q=1, \;\; \frac{-1}{d-2}\leq p\leq 0\} & \quad \text {for $k=1$,}\\
        \textstyle\{\Le_{p,q}: (1-kd)p+q=1, \;\; \frac{-1}{kd-k-1}\leq p\leq \frac{-1}{kd-1}\} & \quad \text {for $1<k<d$ and $k$ even,}\\
        \{\Le_k^{\rm BH}\} \cup\{\Le_{p,q}: (p,q)\in \partial\, \mathbb{P}_k\setminus \mathbb{P}_d,\;\; p+(1-d)q\neq 1\} & \quad \text {for $1<k<d$ and $k$ odd,}
    \end{align*}
    where $\partial\, \mathbb{P}_k$ denotes the boundary of $\mathbb{P}_k$. For example, the \textit{$k$-reduction map}
        $$\Le_k^{\rm red}:= \Le_{\frac{-1}{kd-1},0}:Z\mapsto \frac{1}{kd-1} (k \Tr(Z) I_d - Z)$$
    is \emph{optimal} in the sense that there exists no nonzero CP map $\Phi$ such that $\Le'=\Le_k^{\rm red}-\Phi$ remains $k$-positive, as also established in \cite{SBL01}.

    \item For $k=1,\ldots, d/2-1$, the map $\Le_{p,q}$ is $k$-positive but cannot be written in the form $\Le'+\top\circ \Le''$ for any $(k+1)$-positive linear maps $\Le'$ and $\Le''$ if and only if 
        $$p+q<-\frac{k+2}{(k+1)d-k-2}, \quad p+(1-d)q\leq 1, \quad (1-kd)p+q\leq 1,$$
    or equivalently $(p,q)\in \mathbb{P}_k\setminus {\rm conv}(\mathbb{P}_{k+1}\cup \mathbb{P}_{k+1}^{\top})$, where $\mathbb{P}_{k+1}^{\top}=\{(p,q):(q,p)\in \mathbb{P}_{k+1}\}$ denotes the reflection of $\mathbb{P}_{k+1}$ across the line $y=x$. Note that the $k$-Breuer-Hall map $\Le_k^{\rm BH}$ satisfies this property, and the set $\mathbb{P}_k\setminus {\rm conv}(\mathbb{P}_{k+1}\cup \mathbb{P}_{k+1}^{\top})$ has nonempty interior. When $k=1$, this property is known as \emph{atomicity}. In particular, the Breuer–Hall map is atomic, recovering the result of \cite{CK08}.
\end{enumerate}
\end{remark}

\begin{remark}
Motivated by properties of finite-dimensional operator systems, a recent work \cite{aubrun2024completely} introduced the quantities
\begin{align*}
    r_k(M_d)&:=\sup\{\|\phi\|_{cb}:\phi:\mathcal{S}\to M_d(\Comp) \text{ unital $k$-positive}\},\\
    d_k(M_d)&:=\sup\{\|\phi\|_{cb}:\phi: M_d(\Comp)\to \mathcal{S} \text{ unital $k$-positive}\}
\end{align*}
for $1\leq k\leq d$, where the supremum ranges over all operator systems $\mathcal{S}$ and $\|\phi\|_{cb}=\sup_{n\geq 1} \|\id_n\otimes \phi\|$ denotes the \emph{completely bounded norm}. The value of $r_k(M_d)$ is computed exactly as $\frac{2d-k}{k}$, while $d_k(M_d)$ admits the bounds
    $$\max\left( 1+\frac{2(d-k)}{d(kd-1)} , c\sqrt{\frac{d}{k}} \right) \leq d_k(M_d) \leq \frac{2d-k}{k}$$
for a universal constant $c>0$. Moreover, the explicit lower bound is attained by the $k$-reduction map $\Le_k^{\rm red}=\Le_{\frac{-1}{kd-1},0}$. It is therefore natural to ask whether our family $\{\Le_{a,b}\}_{(a,b)\in \mathbb{P}_k}$ yields an improved lower bound. Although we do not include the details here, one finds that the supremum $\sup_{(a,b)\in \mathbb{P}_k} \|\Le_{a,b}\|_{cb}$ is still attained by $\Le_k^{\rm red}$. In particular, it is computed that
    $$\|\Le_k^{\rm BH}\|_{cb} = 1+\frac{2(d-2k)}{d(kd-k-1)} < 1+ \frac{2(d-k)}{d(kd-1)} = \|\Le_k^{\rm red}\|_{cb}$$
for $1\leq k\leq d/2$. This shows that extremality in $k$-positive indecomposability in general does not lead to an improvement of the completely bounded norm.
\end{remark}

Finally, thanks to the unitary equivalence between $(S,S)$- and $(S,\overline{S})$-covariant linear maps (Proposition \ref{prop-Symp-equivalence}), all our results carry over to the class of $(S,\overline{S})$-covariant linear maps. We summarize the corresponding results as follows.

\begin{theorem}
Let $d\geq 4$ be an even integer, and for $(p,q)\in \Real^2$, consider the linear map
    $$\Le_{p,q}^{\top}:=\top\circ \Le_{p,q}:Z\in M_d(\Comp)\mapsto (1-p-q)\frac{ \Tr(Z)}{d}I_d +pZ^{\top}+ q\Om_d Z \Om_d^*.$$
Then
\begin{enumerate}
    \item $\Le_{p,q}^{\top}$ is $k$-positive ($1\leq k\leq d$) if and only if $(q,p)\in \mathbb{P}_k^{(d)}$;

    \item $\Le_{p,q}^{\top}$ is decomposable if and only if $(q,p)\in \mathbb{D}^{(d)}$ (if and only if $(p,q)\in \mathbb{D}^{(d)}$).
\end{enumerate}
In particular, $(\frac{d}{2}-1)$-positive indecomposable linear maps can be constructed within the class $\Cov(S,\overline{S})$ while every $\frac{d}{2}$-positive $(S,\overline{S})$-covariant linear map is decomposable.
\end{theorem}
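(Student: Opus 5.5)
The statement is a direct transfer of \cref{thm-SympkPos}, \cref{thm-Symp-PosDec} and \cref{cor-Symp-kposIndec} through the unitary equivalence of Proposition~\ref{prop-Symp-equivalence}. The plan has three steps: establish the unitary equivalence, transfer $k$-positivity, then transfer decomposability.

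\textbf{Step 1: the unitary equivalence.} By Proposition~\ref{prop-Symp-equivalence},
$$\Le_{p,q}^{\top}=\top\circ\Le_{p,q}=\Ad_{\Om_d}\circ\Le_{q,p}.$$
One can also check this directly from
$$\Om_d\big(\Om_d Z^{\top}\Om_d^*\big)\Om_d^*=Z^{\top},$$
which uses $\Om_d^2=-I_d$.

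\textbf{Step 2: $k$-positivity, part (1).} Post-composition with $\Ad_{\Om_d}$ preserves $k$-positivity in both directions, because
$$\id_k\otimes\Ad_{\Om_d}=\Ad_{I_k\otimes\Om_d}$$
is a unitary conjugation and is invertible. Hence $\Le_{p,q}^{\top}$ is $k$-positive if and only if $\Le_{q,p}$ is, that is, if and only if $(q,p)\in\mathbb{P}_k^{(d)}$, by the definition \eqref{eq-kposset} together with \cref{thm-SympkPos} and \cref{thm-Symp-PosDec}.

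\textbf{Step 3: decomposability, part (2).} Post-composition with $\Ad_{\Om_d}$ maps CP maps to CP maps. It also maps co-CP maps to co-CP maps: if $\Phi=\top\circ\Psi$ with $\Psi$ CP, then
$$\Ad_{\Om_d}\circ\top\circ\Psi=\top\circ\Ad_{\overline{\Om_d}}\circ\Psi,$$
and $\Ad_{\overline{\Om_d}}\circ\Psi$ is CP. Therefore decomposability is preserved in both directions, and $\Le^{\top}_{p,q}$ is decomposable if and only if $(q,p)\in\mathbb{D}$. The explicit inequalities \eqref{eq-Symp-dec} are invariant under swapping $p$ and $q$: the condition on $p+q$ is symmetric, and the other two inequalities are exchanged. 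So this is also equivalent to $(p,q)\in\mathbb{D}$.

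\textbf{The final assertion.} Choose $(q,p)\in\mathbb{P}^{(d)}_{d/2-1}\setminus\mathbb{D}^{(d)}$, which is possible by \cref{cor-Symp-kposIndec}. Then $\Le^{\top}_{p,q}$ is $(\frac d2-1)$-positive and indecomposable.

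For the converse, every trace-preserving $(S,\overline S)$-covariant map has the form $\Le^{\top}_{p,q}$ (Proposition~\ref{prop-Symp-object} together with Proposition~\ref{prop-Symm-general}(2)). If $\Le^{\top}_{p,q}$ is $\frac d2$-positive, then $(q,p)\in\mathbb{P}_{d/2}\subset\mathbb{D}$, so it is decomposable.

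A general $(S,\overline S)$-covariant map need not be trace-preserving. This is handled by rescaling: a $k$-positive covariant map $\Le$ with $\Tr\circ\Le\neq0$ is a positive multiple of some $\Le^{\top}_{p,q}$. The only step needing care is the degenerate case $\Tr\circ\Le=0$. Covariance forces $\Tr\circ\Le=c\Tr$, and $c\geq 0$ by positivity. If $c=0$, then positivity forces $\Le(|v\ra\la v|)=0$ for every $v$, hence $\Le=0$, which is trivially decomposable.
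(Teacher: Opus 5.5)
Your proposal is correct and follows the paper's route. The paper obtains this theorem by transporting \cref{thm-SympkPos}, \cref{thm-Symp-PosDec} and \cref{cor-Symp-kposIndec} through the unitary equivalence $\top\circ\Le_{p,q}=\Ad_{\Om_d}\circ\Le_{q,p}$ of Proposition~\ref{prop-Symp-equivalence}. Your Steps 1--3 supply the verifications the paper leaves unstated: that $\Ad_{\Om_d}$ preserves $k$-positivity and decomposability, and that $\mathbb{D}$ is symmetric. Your rescaling of non-trace-preserving covariant maps goes slightly beyond the paper. There, add only that positivity forces the coefficients of $\Delta$, $\top$, $\Ad_{\Om_d}$ to be real, since positive maps are Hermitian-preserving; so the rescaled map is $\Le^{\top}_{p,q}$ with $(p,q)\in\Real^2$.
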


Consequently, by Corollary~\ref{cor-symm-PPTSch-equiv}, we conclude that there exist PPT states with Schmidt number $d/2$ within both classes $\Inv(S\otimes \overline{S})$ and $\Inv(S\otimes S)$ while every PPT state in these classes has Schmidt number \emph{at most} $d/2$. In the following section, we provide the complete \textit{analytic description} the Schmidt number regions of these classes.

% Schmidt number------

\section{PPT and Schmidt number of quantum states} \label{sec-SympSch}

\subsection{Symplectic invariant states}

Now we are ready to proceed with the computation of the Schmidt number of all $S\otimes \overline{S}$-invariant quantum states $\{\rho_{a,b}\}_{a,b\in \mathbb{P}_d}$ as introduced and discussed in \cref{eq-SympInv} and Corollary \ref{cor-Symp-psd}). As before, let us denote by 
    $$\mathbb{S}_k=\mathbb{S}_k^{(d)}:=\{(a,b)\in \Real^2:\rho_{a,b}\in \sch_k\},$$
and we aim to describe the geometric and algebraic structures of $\mathbb{S}_1\subset \mathbb{S}_2\subset \cdots \subset \mathbb{S}_d=\mathbb{P}_d$. Thanks to \cref{thm-DualSymm} and the convexity, it suffices to utilize the $k$-positive $(S,S)$-covariant linear maps $\{\Le_{p,q}\}_{(p,q)\in {\rm ext}(\mathbb{P}_k)}$ as Schmidt number witnesses for $\rho_{a,b}$, where ${\rm ext}(\mathbb{P}_k)$ denotes the set of extreme points of $\mathbb{P}_k$. Specifically, we have
\begin{align}
    &{\rm SN}(\rho_{a,b})\leq k \nonumber \\
    & \iff \Tr(\rho_{a,b}C_{\Le_{p,q}}) = \Tr(\rho_{a,b}\,\rho_{p,q}) \geq 0 \quad\quad\;\;\;  \text{ for all }(p,q)\in {\rm ext}(\mathbb{P}_k) \nonumber\\
    & \iff \begin{pmatrix} p & q \end{pmatrix}\begin{pmatrix}d-1 & -1 \\ -1 & d-1\end{pmatrix}\binom{a}{b}\geq -\frac{1}{d+1} \text{ for all }(p,q)\in {\rm ext}(\mathbb{P}_k) \label{eq-Symp-paring}. 
\end{align}
Let us begin with the comparison between separability and PPT property, which are rather simple to characterize (We refer to Fig. \ref{fig-SepPPT} for their illustration).

\begin{theorem} \label{thm-SympSepPPT}
The separability region $\mathbb{S}_1=\mathbb{S}_1^{(d)}$ and the PPT region $\mathbb{T} = \mathbb{T}^{(d)}$ of $\rho_{a,b}$ is characterized as
\begin{align} 
    \mathbb{S}_1&=\left\{(a,b)\in \Real^2: \;\begin{array}{lll} (1-d)a+b\leq \frac{1}{d+1}, \\ a+ (1-d)b\leq \frac{1}{d+1},\end{array} \quad a+b\leq \frac{1}{d+1}\right\}, \label{eq-Symp-sep} \\ \mathbb{T}&=\left\{(a,b)\in \Real^2: \;\begin{array}{lll} (1-d)a+b\leq \frac{1}{d+1}, \quad a+(1+d)b\leq 1,\\ a+ (1-d)b\leq \frac{1}{d+1}, \quad (1+d)a+b\leq 1\end{array}\right\}, \label{eq-Symp-PPT}
\end{align}
respectively. In particular, $\rho_{a,b}$ is \emph{PPT entangled} in the parameter region  $\begin{cases}
    a+b > \frac{1}{d+1},\\
    a+(1+d)b\leq 1,\\
    (1+d)a+b\leq 1.
\end{cases}$
\end{theorem}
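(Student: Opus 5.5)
The plan is to treat the PPT region and the separability region separately. Both reduce to finitely many linear inequalities in $(a,b)$, using results already established.

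\textbf{PPT region.} By Proposition \ref{prop-Symp-equivalence}, $\rho_{a,b}^{\Gamma}=(I_d\otimes\Om_d)\rho_{b,a}(I_d\otimes\Om_d)^*$. Hence $\rho_{a,b}^{\Gamma}\geq 0$ if and only if $\rho_{b,a}\geq 0$, which by Corollary \ref{cor-Symp-psd} means $(b,a)\in\mathbb{P}_d$. So $\rho_{a,b}$ is a PPT state if and only if both $(a,b)\in\mathbb{P}_d$ and $(b,a)\in\mathbb{P}_d$. Writing out the six inequalities from \eqref{eq-Symp-CP}, four of them are exactly the inequalities listed in \eqref{eq-Symp-PPT}. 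The remaining two are $a+(1-d)b\leq 1$ and $(1-d)a+b\leq 1$, and they are weaker than $a+(1-d)b\leq\frac{1}{d+1}$ and $(1-d)a+b\leq\frac1{d+1}$. This gives $\mathbb{T}$.

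\textbf{Separability region.} I use \eqref{eq-Symp-paring} with $k=1$. This pairing is justified by Theorem \ref{thm-DualSymm}(1), since $\rho_{a,b}$ is $S\otimes\overline{S}$-invariant and the relevant witnesses are the $(S,S)$-covariant positive maps. Those maps are positive multiples of the trace-preserving maps $\Le_{p,q}$ with $(p,q)\in\mathbb{P}_1$, plus the trivially harmless CP directions. By convexity it is enough to test the extreme points of $\mathbb{P}_1$. From \eqref{eq-Symp-pos}, $\mathbb{P}_1$ is a triangle, and intersecting its bounding lines pairwise gives the vertices $(1,0)$, $(0,1)$ and $\big(-\frac{1}{d-2},-\frac{1}{d-2}\big)$. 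Substituting these into $\begin{pmatrix}p&q\end{pmatrix}\begin{pmatrix}d-1&-1\\-1&d-1\end{pmatrix}\binom ab\geq -\frac1{d+1}$ gives the following.
\begin{itemize}
\item The vertex $(1,0)$ gives $(d-1)a-b\geq-\frac1{d+1}$, that is, $(1-d)a+b\leq\frac1{d+1}$.
\item The vertex $(0,1)$ gives $a+(1-d)b\leq\frac1{d+1}$.
\item The vertex $\big(-\frac{1}{d-2},-\frac{1}{d-2}\big)$ gives $-\frac{d-2}{d-2}(a+b)\geq-\frac1{d+1}$, that is, $a+b\leq\frac1{d+1}$.
\end{itemize}
These are precisely the three inequalities in \eqref{eq-Symp-sep}.

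\textbf{Consistency checks.} It remains to confirm that this triangle lies inside the state region $\mathbb{P}_d$, so that no extra positivity constraints are needed. For this I check the three vertices of $\mathbb{S}_1$, namely $(0,\frac1{d+1})$, $(\frac1{d+1},0)$ and $(-\frac1{d^2-1},-\frac1{d^2-1})$. For example, at $(0,\frac1{d+1})$ one has $a+(1+d)b=1$, so it lies on the boundary of $\mathbb{P}_d$ as required. Checking $\mathbb{S}_1\subseteq\mathbb{T}$ in the same way is a sanity check. The final PPT-entangled statement then follows by taking $\mathbb{T}\setminus\mathbb{S}_1$: the only defining inequality of $\mathbb{S}_1$ absent from $\mathbb{T}$ is $a+b\leq\frac1{d+1}$.

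\textbf{Main obstacle.} The one delicate point is the reduction to extreme points of $\mathbb{P}_1$. It needs $\Le_{p,q}\mapsto\Tr(\rho_{a,b}\rho_{p,q})$ to be affine in $(p,q)$, which is clear from the formula, and it needs every $(S,S)$-covariant positive map to be a nonnegative multiple of a trace-preserving $\Le_{p,q}$. The latter holds because a nonzero positive map has positive trace of its Choi matrix, so it can be normalized. The vertex computations themselves are routine.
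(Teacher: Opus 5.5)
Your proof is correct and is essentially the paper's argument: the PPT region comes from Proposition~\ref{prop-Symp-equivalence}, which reduces it to $(a,b),(b,a)\in\mathbb{P}_d$, and the separability region comes from evaluating \eqref{eq-Symp-paring} at the three vertices $(1,0)$, $(0,1)$, $\big(-\frac{1}{d-2},-\frac{1}{d-2}\big)$ of $\mathbb{P}_1$. One small slip in your consistency check: the third vertex of $\mathbb{S}_1$ is $\big(-\frac{1}{d^2-d-2},-\frac{1}{d^2-d-2}\big)$, not $\big(-\frac{1}{d^2-1},-\frac{1}{d^2-1}\big)$; the check is unnecessary anyway, because $\mathbb{P}_d\subseteq\mathbb{P}_1$ means the witnesses you test already include the CP ones, and these force $\rho_{a,b}\geq 0$.
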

\begin{proof}
By Theorem \ref{thm-Symp-PosDec}, the set $\mathbb{P}_1$ has three extreme points $(1,0)$, $(0,1)$, and $(-\frac{1}{d-2},-\frac{1}{d-2})$. Therefore, we have Eq. \eqref{eq-Symp-sep} from Eq. \eqref{eq-Symp-paring} with these three points. On the other hand, since $\rho_{a,b}^{\Gamma}=C_{\top\circ \Le_{a,b}}$, $\rho_{a,b}$ is PPT if and only if $(a,b)\in \mathbb{P}_d$ and $(b,a)\in \mathbb{P}_d$ which is equivalent to Eq. \eqref{eq-Symp-PPT}.
\end{proof}

\begin{figure}
    \centering

    \includegraphics[width=0.31\linewidth]{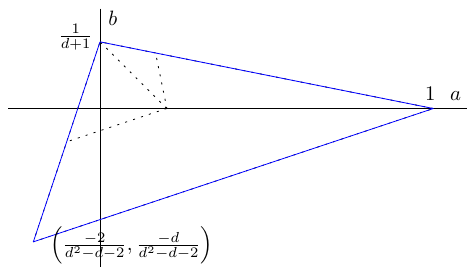} \;
    \includegraphics[width=0.31\linewidth]{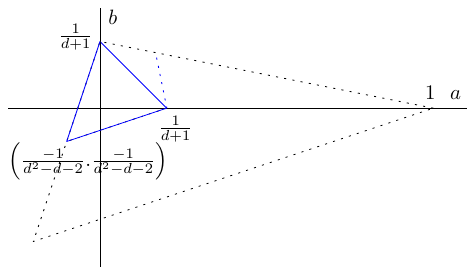} \;
    \includegraphics[width=0.31\linewidth]{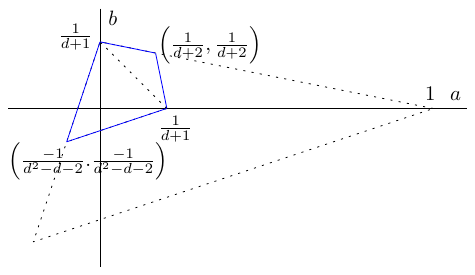}\;\;

    \caption{The regions where $\rho_{a,b}$ is a quantum state (left), separable (middle), and PPT (right).}
    \label{fig-SepPPT}
\end{figure}

Although the remaining sets $\mathbb{P}_k$ ($1<k<d$) are also described explicitly in \cref{thm-SympkPos}, a difficulty remains in solving the condition \eqref{eq-Symp-paring}: except for the cases $k=1$ and $k$ even with $1<k\leq \frac{d}{2}$, the set
$\mathbb{P}_k$ has infinitely many extreme points. Nevertheless, we overcome this obstacle by exploiting the notion of \textit{polarity} from projective geometry \cite{Cox03,TextBK}. The detailed method along these lines is deferred to Appendix~\ref{app-SN-hard}. 

We now state the main result of this section and discuss the consequences relevant to our work. We also provide a brief geometric depiction in Fig. \ref{fig-Sch}.

\begin{figure}
    \centering

    \includegraphics[width=0.4\linewidth]{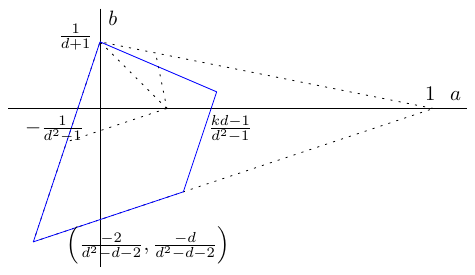} \;
    \includegraphics[width=0.4\linewidth]{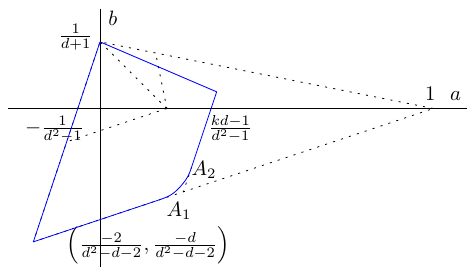} \\

    \vspace{2mm}
    \includegraphics[width=0.4\linewidth]{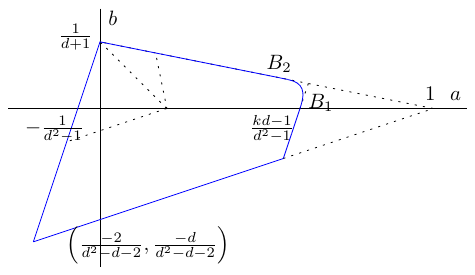}\;\;
    \includegraphics[width=0.4\linewidth]{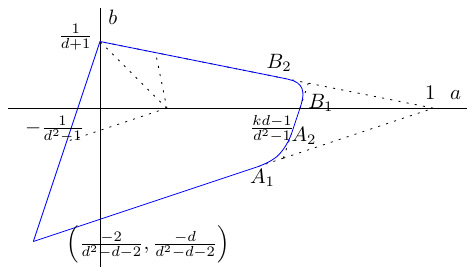}
    
    \caption{The Schimidt number regions $\mathbb{S}_k$ of $\rho_{a,b}$ in the case: (1) top left, (2) top right, (3) bottom left, and (4) bottom right.}
    \label{fig-Sch}
\end{figure}

\begin{theorem} \label{thm-SympSch}
Let $d\geq 4$ be even and $1<k<d$. Then there exist two linear polynomials $l_j(x,y)=l_j^{(d,k)}(x,y)$ ($j=1,2$) and two quadratic polynomials $g_j(x,y)=g_j^{(d,k)}(x,y)$ ($j=1,2$) such that the following are satisfied.
\begin{itemize}
    \item $l_1$ and $l_2$ are explicitly given by
    \begin{align}
        l_1(x,y) & = (k+d-1)x - (kd-k+1)y  -\frac{2kd+k-d-1}{d+1}, \label{eq-Symp-line1}\\
        l_2(x,y) &= (3d-k-3)x+(kd-k-3)y - \frac{d^2+kd-k-3}
        {d+1}. \label{eq-Symp-line2} 
    \end{align}

    \item The regions $g_1(x,y)\leq 0$ and $g_2(x,y)\leq 0$ are both filled ellipses inscribed in the same parellelogram determined by four lines
    {\small
        $$x+(1-d)y=\frac{1}{d+1}, \quad x+(1-d)y=1, \quad (1-d)x+y=\frac{1}{d+1}, \quad (1-d)x+y = -\frac{kd-1}{d+1}.$$
    Furthermore, $g_1=g_2$ when $k=d-1$.
    }

    \item The line $l_1(x,y)=0$ intersects the ellipse $g_1(x,y)=0$ in the fourth quadrant at two points
    \begin{center}
        $A_1=\left(\frac{kd-d-k-1}{d^2-d-2}, \frac{-d+k-1}{d^2-d-2}\right), \quad A_2=\left(\frac{k^2d-2k-k^2+1}{k(d^2-d-2)}, \frac{(k-1)(k-d+1)}{k(d^2-d-2)}\right),$
    \end{center}
    and $l_2(x,y)=0$ and $g_2(x,y)=0$ intersect in the first quadrant at two points
    \begin{center}
        $B_1=\left(\frac{k^2d-3k-k^2+d}{k(d^2-d-2)}, \frac{(k-d)(k-d+1)}{k(d^2-d-2)}\right), \quad B_2=\left(\frac{d}{3d-2k}, \frac{2d-2k}{(d+1)(3d-2k)}\right).$
    \end{center}

    \item The Schmidt number region $\mathbb{S}_k\subset \Real^2$ consists of all points $(a,b)$ satisfying the following inequalities, depending on the value and parity of $k$:
\end{itemize}
\begin{enumerate}[leftmargin=*]
    \item The case $1<k\leq \frac{d}{2}$ and $k$ even:
    \begin{equation} \label{eq-Symp-SN1e}
        -\frac{kd-1}{d+1}\leq (1-d)a+b\leq \frac{1}{d+1}, \quad  a+(1-d)b\le 1, \quad  \frac{d-k-1}{kd-k-1}a+b\leq \frac{1}{d+1}.
    \end{equation}

    \item The case $1<k\leq \frac{d}{2}$ and $k$ odd:
    \begin{equation} \label{eq-Symp-SN1o}
        \text{All the inequalities in Eq. \eqref{eq-Symp-SN1e} and } \left[ l_1(x,y)\leq 0 \text{ or } \begin{cases}
            l_1(x,y)\geq 0, \\ g_1(x,y)\leq 0.
        \end{cases}\right]
    \end{equation}

    \item The case $\frac{d}{2}< k <d$ and $k$ even:
    \begin{equation} \label{eq-Symp-SN2e}
        -\frac{kd-1}{d+1}\leq (1-d)a+b \leq \frac{1}{d+1}, \quad a + (1 \pm d)b \leq 1, \quad \left[ l_2(x,y)\leq 0 \text{ or } \begin{cases}
            l_2(x,y)\geq 0, \\ g_2(x,y)\leq 0.
        \end{cases}\right]
    \end{equation}

    \item The case $\frac{d}{2}< k <d$ and $k$ odd:
    \begin{equation} \label{eq-Symp-SN2o}
        \text{All the inequalities in Eq. \eqref{eq-Symp-SN2e} and } \left[ l_1(x,y)\leq 0 \text{ or } \begin{cases}
            l_1(x,y)\geq 0, \\ g_1(x,y)\leq 0.
        \end{cases}\right]
    \end{equation}
\end{enumerate}
\end{theorem}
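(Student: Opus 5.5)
The plan is to compute $\mathbb{S}_k$ as the dual of $\mathbb{P}_k$ under the pairing \eqref{eq-Symp-paring}. By Theorem~\ref{thm-DualSymm} and convexity, $(a,b)\in\mathbb{S}_k$ if and only if $(a,b)\in\mathbb{P}_d$ and the affine functional
$$(p,q)\mapsto \begin{pmatrix} p & q\end{pmatrix}\begin{pmatrix} d-1 & -1\\ -1 & d-1\end{pmatrix}\binom{a}{b}+\frac{1}{d+1}$$
is nonnegative on $\mathbb{P}_k$. Since $\mathbb{P}_k$ is described in Theorem~\ref{thm-SympkPos} by at most four linear or conic inequalities, I split the boundary of $\mathbb{P}_k$ into its pieces and compute the dual constraint contributed by each.

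\textbf{Vertices.} For each extreme point listed in Remark~\ref{rmk-kPos}, namely $(1,0)$, $(0,\frac{1}{d+1})$, $(\frac{-1}{kd-1},0)$, and either $(\frac{-1}{kd-k-1},\frac{-k}{kd-k-1})$ or $(\frac{-2}{d^2-d-2},\frac{-d}{d^2-d-2})$, I substitute into \eqref{eq-Symp-paring}. This produces the linear inequalities in \eqref{eq-Symp-SN1e} and \eqref{eq-Symp-SN2e}. For example:
\begin{itemize}
\item $(1,0)$ gives $(d-1)a-b\ge -\frac{1}{d+1}$;
\item $(\frac{-1}{kd-1},0)$ gives $(1-d)a+b\le\frac{kd-1}{d+1}$ after sign bookkeeping;
\item the $k$-Breuer--Hall vertex gives $\frac{d-k-1}{kd-k-1}a+b\le\frac1{d+1}$.
\end{itemize}
I will check each resulting line against the stated forms. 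In case (1) every edge of $\mathbb{P}_k$ is straight, so these vertex inequalities suffice and the proof of that case is complete.

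\textbf{Conic edges.} In the remaining cases, one or two edges of $\mathbb{P}_k$ are arcs of the hyperbolas $f^{(d,k)}_{k-1}=0$ or $f^{(d,k)}_{2k-d}=0$, with the endpoints given in Remark~\ref{rmk-kPos}. For a point $(a,b)$, the linear functional either attains its minimum over the arc at an endpoint, which is already covered by the vertex inequalities, or at an interior tangency point. I will use projective polarity: the set of lines tangent to the conic $f_u=0$ is the dual conic, and the condition ``the line with coefficients depending on $(a,b)$ supports the conic from the correct side'' is a quadratic inequality in $(a,b)$. This defines $g_1$ from $f_{k-1}$ and $g_2$ from $f_{2k-d}$. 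Concretely, I write $f_u(x,y)=X^\top M_u X$ in homogeneous coordinates $X=(x,y,1)$. The functional is a line $\ell=(\alpha,\beta,\gamma)$ with $\alpha,\beta,\gamma$ affine in $(a,b)$, and tangency is $\ell^\top M_u^{-1}\ell=0$. Then $g_j(a,b)$ is $\ell^\top M_u^{-1}\ell$ up to sign and normalization.

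Only tangency points lying on the actual arc between its endpoints matter. The directions of the lines tangent to the arc at its two endpoints correspond, via the pairing, to two points of the dual conic. In the $(a,b)$-plane these are $A_1,A_2$, respectively $B_1,B_2$. The line $l_j=0$ through them separates the $(a,b)$ whose minimizer lies at an endpoint ($l_j\le0$, so the vertex inequalities suffice) from those whose minimizer lies on the arc ($l_j\ge 0$, which additionally requires $g_j\le 0$). This yields the disjunctions in \eqref{eq-Symp-SN1o}--\eqref{eq-Symp-SN2o}. The structural claims then follow:
\begin{itemize}
\item Each dual conic is an ellipse, since the polar of a hyperbola whose center lies outside the relevant region is an ellipse. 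It is inscribed in the stated parallelogram because the asymptote directions and the tangent lines at the vertices of $\mathbb{P}_k$ dualize to the four bounding lines.
\item $g_1=g_2$ when $k=d-1$, because the two hyperbolas then coincide (Remark~\ref{rmk-kPos}).
\end{itemize}

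The main obstacle is the explicit algebra: inverting $M_u$ for $u=k-1$ and $u=2k-d$, identifying the correct sign and branch of the hyperbola, and verifying that the endpoints dualize exactly to $A_1,A_2,B_1,B_2$ with $l_1,l_2$ as in \eqref{eq-Symp-line1}--\eqref{eq-Symp-line2}. To keep this tractable, I would first apply the linear change of variables $(x,y)\mapsto(1-x-(1+d)y,\,1-(1-kd)x-y)$. This turns $f_u$ into $st+c\,(\text{linear})(\text{linear})$, so the polarity can be computed in those coordinates and then transformed back through the matrix $\begin{pmatrix} d-1&-1\\-1&d-1\end{pmatrix}$.
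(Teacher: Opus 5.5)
Your overall route is the paper's: dualize $\mathbb{P}_k$ through the pairing \eqref{eq-Symp-paring}, read off linear constraints at the extreme points, and handle the hyperbolic arcs by pole--polar duality. Computing the dual conic as $\ell^{\top}M_u^{-1}\ell$ is a perfectly good, more direct replacement for the paper's interpolation through four points with prescribed tangents.

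\textbf{The disjunction step.} The justification here is incorrect. The line $l_j=0$ does \emph{not} separate the $(a,b)$ whose minimizer over the arc is an endpoint from those whose minimizer is interior. The minimizer depends only on the ray through $(a,b)$. In the coordinates $\alpha(a,b)$ the outward normal at $\gamma(t)$ is $\tilde\gamma(t)$. So the locus of interior minimizers is the cone with apex at the origin spanned by $A_1,A_2$ (resp.\ $B_1,B_2$), not the half-plane $\{l_j\ge 0\}$. For example, points inside the triangle $0A_1A_2$ have $l_1<0$ yet interior minimizers.

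The stated description is still right, but for a different reason, which you need to supply.
\begin{itemize}
\item The two endpoint constraint lines are tangent to the dual ellipse at $A_1,A_2$ and meet at a point $Q$ beyond the chord.
\item Every intermediate constraint line is tangent to the ellipse arc between $A_1$ and $A_2$. Hence it cuts off only part of the triangle $A_1QA_2\subset\{l_j\ge 0\}$. This uses that the dual arc turns by less than $\pi$, which holds because each arc of $\partial\mathbb{P}_k$ is seen from the origin under an angle at most $\pi/2$.
\item Therefore the arc constraints are automatically satisfied on the part of the wedge with $l_j\le 0$, and on $\{l_j\ge 0\}$ exactly the cap $\{g_j\le 0\}$ survives.
\end{itemize}
This is the content of Lemma~\ref{lem-curve}~(2) in the paper.

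\textbf{The structural claims.} Your justification for the ellipse and the parallelogram also fails. Lines tangent to the hyperbola dualize to \emph{points}, and asymptotic directions dualize to tangent lines through the origin. Neither produces a side of the parallelogram, and in any case $(0,1)\notin\mathbb{P}_k$ is not a vertex.

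The four sides are the duals of the four base points $(1,0),(0,1),(0,\frac{1}{d+1}),(-\frac{1}{kd-1},0)$, which lie on every conic $f_u^{(d,k)}=0$ (Lemma~\ref{lem-Conic}~(1)). A point of a conic dualizes to a tangent line of its dual, which is why all the dual conics are inscribed in the same parallelogram.

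The dual is an ellipse because the center of polarity (the origin of the $(p,q)$-plane) lies in the convex region bounded by the branch carrying the arc. This holds since $\mathbb{P}_k$ is convex, contains $0$, and lies in $\{f_u\ge 0\}$. It does not follow from where the hyperbola's own center lies. Alternatively, check as in Remark~\ref{rmk-ellipse} that the four tangency points lie on the sides themselves.

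\textbf{A sign slip.} The vertex $(-\frac{1}{kd-1},0)$ yields $(1-d)a+b\ge-\frac{kd-1}{d+1}$, not $(1-d)a+b\le\frac{kd-1}{d+1}$.
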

\begin{proof}[Proof of (1)]
The proof in this case is analogous to the proof of Theorem \ref{thm-SympSepPPT} as the set $\mathbb{P}_k$ in this case has precisely four extreme points
    $$(1,0), \quad \Big(0,\frac{1}{d+1}\Big), \quad \Big( -\frac{1}{kd-1},0\Big), \quad \Big(-\frac{1}{kd-k-1},-\frac{k}{kd-k-1}\Big),$$
as shown in \cref{thm-SympkPos} (1) and Remark \ref{rmk-kPos}. The remainder of the proof is considerably more involved, and we therefore defer it to Appendix~\ref{app-SN-hard}.
\end{proof}

One of the most important consequences of Theorem \ref{thm-SympSch} is that, combined with Theorem \ref{thm-SympSepPPT}, PPT states with high entanglement-dimensionality, that is, ones with high Schmidt numbers are broadly found and even characterized within the class of $S\otimes \overline{S}$-invariant quantum states. Furthermore, the same results holds within the class with $S\otimes S$-symmetry.

\begin{corollary} \label{cor-Symp-PPTSch}
For every even $d\geq 4$, the set $\mathbb{T}^{(d)}\setminus \mathbb{S}_k$ is nonempty for all $k=1,\ldots, \frac{d}{2}-1$ while $\mathbb{T}^{(d)}\subset \mathbb{S}_{d/2}^{(d)}$. Consequently, the family of quantum states
    $$\{\rho_{a,b}^{(d)}:(a,b)\in \mathbb{T}^{(d)}\setminus \mathbb{S}_{d/2-1}^{(d)}\} = \left\{\rho_{a,b}^{(d)}:\;\begin{array}{lll} \frac{1}{d-3} a+b >\frac{1}{d+1} ,\\ a+(1+d)b\leq 1, \\ (1+d)a+b\leq 1 \end{array} \right\}$$
provides the exact regions of $S\otimes \overline{S}$-invariant PPT states with Schmidt number \textit{precisely} $d/2$. Furthermore, every PPT $S\otimes \overline{S}$-invariant state has Schmidt number at most $d/2$.
\end{corollary}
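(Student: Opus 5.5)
We would prove Corollary~\ref{cor-Symp-PPTSch} by comparing the PPT region $\mathbb{T}$ from Theorem~\ref{thm-SympSepPPT} with the Schmidt number regions $\mathbb{S}_k$ from Theorem~\ref{thm-SympSch}. Only the even case $k=d/2-1$ or $k=d/2$ with the simple linear description (1) is really needed, together with nestedness $\mathbb{S}_1\subset\mathbb{S}_2\subset\cdots$.

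\emph{Step 1: $\mathbb{T}\subset\mathbb{S}_{d/2}$.} The cleanest route avoids the case split on the parity of $d/2$. By Corollary~\ref{cor-Symp-kposIndec}, every $\frac d2$-positive $(S,S)$-covariant map is decomposable, i.e.\ $\mathbb{P}_{d/2}\subset\mathbb{D}$. Take any PPT state $\rho_{a,b}$ and any $(p,q)\in\mathbb{P}_{d/2}$. Then $\Le_{p,q}$ is decomposable, so $\Tr(\rho_{a,b}C_{\Le_{p,q}})\ge0$ by Theorem~\ref{thm-dual-ppt}. Theorem~\ref{thm-DualSymm}(1) then gives $\SN(\rho_{a,b})\le d/2$. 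This also yields the final sentence of the corollary.

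\emph{Step 2: identify $\mathbb{T}\setminus\mathbb{S}_{d/2-1}$.} Set $k=d/2-1$. If $k$ is even, case (1) of Theorem~\ref{thm-SympSch} describes $\mathbb{S}_k$ by the following constraints:
\begin{itemize}
\item $-\frac{kd-1}{d+1}\le(1-d)a+b\le\frac1{d+1}$;
\item $a+(1-d)b\le1$;
\item $\frac{d-k-1}{kd-k-1}a+b\le\frac1{d+1}$.
\end{itemize}
With $k=d/2-1$ the last coefficient is
\[
\frac{d/2}{(d/2-1)(d-1)-1}=\frac{d}{d^2-3d}=\frac1{d-3},
\]
which matches the stated inequality. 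We then check that on $\mathbb{T}$ all other constraints of $\mathbb{S}_k$ hold automatically.
\begin{itemize}
\item $(1-d)a+b\le\frac1{d+1}$ is itself a defining inequality of $\mathbb{T}$.
\item $a+(1-d)b\le\frac1{d+1}\le1$ on $\mathbb{T}$.
\item For the lower bound $(1-d)a+b\ge-\frac{kd-1}{d+1}$, evaluate $(1-d)a+b$ at the vertices of the quadrilateral $\mathbb{T}$, obtained by intersecting its four lines. The minimum is at the vertex with the largest $a$, roughly $a\approx\frac{1}{d+1}$. A direct computation should show this minimum exceeds $-\frac{kd-1}{d+1}$ for $k\ge1$.
\end{itemize}
Hence $\mathbb{T}\setminus\mathbb{S}_{d/2-1}$ is exactly $\mathbb{T}\cap\{\frac1{d-3}a+b>\frac1{d+1}\}$. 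Among the inequalities of $\mathbb{T}$, the two constraints $(1-d)a+b\le\frac1{d+1}$ and $a+(1-d)b\le\frac1{d+1}$ should be implied by the other three in this region and so may be dropped. This needs a small vertex check of the resulting triangle.

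If $k=d/2-1$ is odd, we must also handle the extra disjunction involving $l_1$ and $g_1$ in case (2). The main obstacle is to show that inside $\mathbb{T}$ we always have $l_1\le0$, so the ellipse condition is vacuous there. We would try evaluating $l_1$ at the vertices of $\mathbb{T}$, since $l_1$ is linear and $\mathbb{T}$ is a convex polygon, and verify negativity for $k=d/2-1$.

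\emph{Step 3: nonemptiness and Schmidt number exactly $d/2$.} The point $(a,b)=(\frac1{d+2},\frac1{d+2})$ lies in $\mathbb{T}$ and satisfies
\[
\frac1{d+2}\Bigl(\frac1{d-3}+1\Bigr)=\frac{d-2}{(d+2)(d-3)}>\frac1{d+1},
\]
so the region is nonempty for $k=d/2-1$. By nestedness, $\mathbb{T}\setminus\mathbb{S}_k\supseteq\mathbb{T}\setminus\mathbb{S}_{d/2-1}$ is nonempty for every $k\le d/2-1$. Combining this with Step 1, the states in $\mathbb{T}\setminus\mathbb{S}_{d/2-1}$ have Schmidt number precisely $d/2$.
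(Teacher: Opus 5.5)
Your proposal is correct and is essentially the paper's argument: $\mathbb{T}\subset\mathbb{S}_{d/2}$ is the duality consequence of $\mathbb{P}_{d/2}\subset\mathbb{D}$ that the paper itself records (via Corollary~\ref{cor-symm-PPTSch-equiv}) at the end of Section~\ref{sec-SympkPos}, and the description of $\mathbb{T}\setminus\mathbb{S}_{d/2-1}$ comes from comparing Theorem~\ref{thm-SympSepPPT} with Theorem~\ref{thm-SympSch}, where only the $k$-Breuer--Hall inequality $\frac{1}{d-3}a+b\le\frac{1}{d+1}$ actually cuts $\mathbb{T}$. The checks you deferred do go through: at the vertices $(-\frac{1}{d^2-d-2},-\frac{1}{d^2-d-2})$, $(\frac{1}{d+1},0)$, $(0,\frac{1}{d+1})$, $(\frac{1}{d+2},\frac{1}{d+2})$ of $\mathbb{T}$ one gets $l_1=\frac{d(1-2k)}{d+1}$, $\frac{2d(1-k)}{d+1}$, $-\frac{d(3k-1)}{d+1}$, $-\frac{(k-1)(d-2)}{d+2}-\frac{2kd+k-d-1}{d+1}$, all negative for $k\ge 2$; and for $d=4$, where $k=1$ lies outside the range of Theorem~\ref{thm-SympSch}, you should cite Theorem~\ref{thm-SympSepPPT} instead, whose condition $a+b\le\frac{1}{5}$ agrees with $\frac{1}{d-3}=1$.
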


\begin{theorem} \label{thm-SSInv-Sch}
For $d\geq 4$ even, consider the quantum states
    $$\rho_{a,b}^{\Gamma} = \frac{1-a-b}{d^2}I_{d^2}+\frac{a}{d}F_d+ b(I_d\otimes \Om_d)|\om_d\ra\la \om_d|(I_d\otimes \Om_d)^* \in \Inv(S\otimes S), \quad(b,a)\in \mathbb{P}_d.$$
(see Proposition \ref{prop-Symp-CP}). Then one has $\SN(\rho_{a,b}^{\Gamma}) = \SN(\rho_{b,a})$, and therefore,
\begin{enumerate}
    \item $\rho_{a,b}^{\Gamma}$ is PPT if and only if $(b,a)\in \mathbb{T}$ (if and only if $(a,b)\in \mathbb{T}$);
    
    \item $\SN(\rho_{a,b}^{\Gamma})\leq k$ if and only if $(b,a)\in \mathbb{S}_k$.
\end{enumerate}
In particular, PPT quantum states with Schmidt number $d/2$ can be constructed within the class $\Inv(S\otimes S)$ while every $S\otimes S$-invariant PPT state has Schmidt number at most $d/2$.
\end{theorem}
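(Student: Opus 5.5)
The whole statement follows once $\rho_{a,b}^{\Gamma}$ is shown to be local-unitary equivalent to $\rho_{b,a}$. Proposition~\ref{prop-Symp-equivalence} gives exactly this:
$$\rho_{a,b}^{\Gamma}=(I_d\otimes \Om_d)\,\rho_{b,a}\,(I_d\otimes \Om_d)^*.$$
The Schmidt rank of a vector is invariant under local unitaries $I_d\otimes U$, since a Schmidt decomposition is mapped to a Schmidt decomposition. Hence the convex sets $\sch_k$ are invariant under $\Ad_{I_d\otimes U}$, and so $\SN(\rho_{a,b}^{\Gamma})=\SN(\rho_{b,a})$. The same identity shows that $\rho_{a,b}^{\Gamma}$ is a quantum state if and only if $\rho_{b,a}$ is. By Corollary~\ref{cor-Symp-psd}, this happens exactly when $(b,a)\in\mathbb{P}_d$, which explains the parameter range in the statement.

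Before using the identity, I will confirm that the displayed formula for $\rho_{a,b}^{\Gamma}$ matches the partial transpose of \cref{eq-SympInv}. Two routine facts are needed. First, $(\id\otimes\top)(|\om_d\ra\la\om_d|)=F_d/d$. Second, $(\id\otimes\top)$ applied to $\frac1d(I\otimes\Om_d)F_d(I\otimes\Om_d)^*$ gives $(I\otimes\Om_d)|\om_d\ra\la\om_d|(I\otimes\Om_d)^*$. The second follows from $(\id\otimes\top)\big((I\otimes X)Y(I\otimes X^*)\big)=(I\otimes\overline{X})\,Y^{\Gamma}\,(I\otimes X^{\top})$ together with $\overline{\Om_d}=\Om_d$. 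This is bookkeeping already done in the paper.

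\emph{Item (1).} Partial transposition is an involution, so $(\rho_{a,b}^{\Gamma})^{\Gamma}=\rho_{a,b}\geq0$ holds exactly when $(a,b)\in\mathbb{P}_d$. Together with $(b,a)\in\mathbb{P}_d$, this says $\rho_{a,b}^{\Gamma}$ is PPT iff $\rho_{a,b}$ is PPT. By the proof of Theorem~\ref{thm-SympSepPPT}, that happens iff $(a,b)\in\mathbb{T}$. Alternatively, local unitary equivalence to $\rho_{b,a}$ gives the criterion $(b,a)\in\mathbb{T}$. The description \eqref{eq-Symp-PPT} is visibly symmetric under $(a,b)\leftrightarrow(b,a)$, so the two criteria agree.

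\emph{Item (2).} This follows immediately from $\SN(\rho_{a,b}^{\Gamma})=\SN(\rho_{b,a})$ and the definition of $\mathbb{S}_k$.

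\emph{Final assertion.} Take $(b,a)\in\mathbb{T}\setminus\mathbb{S}_{d/2-1}$, which is nonempty by Corollary~\ref{cor-Symp-PPTSch}. Then $\rho_{a,b}^{\Gamma}$ is a PPT $S\otimes S$-invariant state with Schmidt number exactly $d/2$. Conversely, every trace-one $S\otimes S$-invariant state has the form $\rho_{a,b}^{\Gamma}$ by Proposition~\ref{prop-Symp-object}. If it is PPT, then $(b,a)\in\mathbb{T}\subset\mathbb{S}_{d/2}$, so its Schmidt number is at most $d/2$.

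There is no real obstacle here. The only care needed is in the sign and conjugation conventions in the partial-transpose computation and in correctly matching the parameters $(a,b)$ versus $(b,a)$.
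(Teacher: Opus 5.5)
Your proposal is correct and follows the paper's own argument. The paper likewise derives the statement from the local-unitary identity $\rho_{a,b}^{\Gamma}=(I_d\otimes\Om_d)\rho_{b,a}(I_d\otimes\Om_d)^*$ of Proposition~\ref{prop-Symp-equivalence}, together with the Schmidt-number regions of Theorem~\ref{thm-SympSch}, which enter through Corollary~\ref{cor-Symp-PPTSch}. Your added checks of the partial-transpose formula and of the symmetry of $\mathbb{T}$ are accurate.
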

\begin{proof}
This follows from Theorem \ref{thm-SympSch} and the local-unitary equivalence property (Proposition \ref{prop-Symp-equivalence}).
\end{proof}

When $(a,b)=(\frac{1}{d+2},\frac{1}{d+2})\in \mathbb{T}\setminus \mathbb{S}_{d/2}$, we recover the result of \cite{PV19} and, moreover, obtain that $\SN(\rho^{\rm PV})=d/2$, as advertised in Example \ref{ex-PV19}. Furthermore, our results allow us to identify additional interesting classes of quantum states by comparing the Schmidt numbers of a PPT state $\rho$ and its partial transpose $\rho^{\Gamma}$. We discuss two such classes below.

\begin{corollary} \label{cor-PPT-SNDiff}
Let $d\geq 4$ be an even integer. Then
\begin{enumerate}
    \item There exists a $d\otimes d$ quantum state $\rho$ (both within $\Inv(S\otimes \overline{S})$ and $\Inv(S\otimes S)$) which is PPT and satisfies
        $$\SN(\rho)-\SN(\rho^{\Gamma}) = d/2-2.$$

    \item There exists a $d\otimes d$ quantum state $\rho$ (both within $\Inv(S\otimes \overline{S})$ and $\Inv(S\otimes S)$) whose partial transpose $\rho^{\Gamma}$ is local-unitary equivalent to $\rho$ while $\SN(\rho) = d/2$.
\end{enumerate}
\end{corollary}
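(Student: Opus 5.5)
The plan is to read both statements off the explicit regions $\mathbb{T}$ and $\mathbb{S}_k$ (Theorem \ref{thm-SympSepPPT}, Theorem \ref{thm-SympSch}, Corollary \ref{cor-Symp-PPTSch}). These are combined with the local-unitary equivalence $\rho_{a,b}^{\Gamma}\cong\rho_{b,a}$ from Proposition \ref{prop-Symp-equivalence}, which gives $\SN(\rho_{a,b}^{\Gamma})=\SN(\rho_{b,a})$. Throughout we take $\rho=\rho_{a,b}\in\Inv(S\otimes\overline{S})$. The $\Inv(S\otimes S)$ versions then follow by using $\rho_{a,b}^{\Gamma}$ in place of $\rho_{a,b}$: its partial transpose is $\rho_{a,b}$ itself, and Theorem \ref{thm-SSInv-Sch} applies.

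For (1), we need a point $(a,b)$ with three properties. First, $(a,b)\in\mathbb{T}\setminus\mathbb{S}_{d/2-1}$, so that $\SN(\rho_{a,b})=d/2$ exactly by Corollary \ref{cor-Symp-PPTSch}. Second, $(b,a)\in\mathbb{S}_2$, which requires the inequalities of Theorem \ref{thm-SympSch}(1) with $k=2$. Third, $(b,a)\notin\mathbb{S}_1$, i.e.\ $a+b>\frac1{d+1}$. Together these give $\SN(\rho^{\Gamma})=2$ and hence a gap of $d/2-2$.

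I would look for such a point on the PPT boundary line $a+(1+d)b=1$, taking $b=\frac{1-a}{d+1}$ with $a>0$ small. Then the conditions check as follows.
\begin{itemize}
\item The condition $\frac{a}{d-3}+b>\frac1{d+1}$ reduces to $a\big(\frac1{d-3}-\frac1{d+1}\big)>0$, which holds.
\item The condition $a+b>\frac1{d+1}$ holds trivially.
\item The $\mathbb{S}_2$ constraint $\frac{d-3}{2d-3}\,b+a\le\frac1{d+1}$ becomes $a\big(1-\frac{d-3}{(2d-3)(d+1)}\big)\le\frac{d}{(2d-3)(d+1)}$, which is true for small $a>0$.
\item The remaining linear constraints of $\mathbb{T}$ and of $\mathbb{S}_2$ are strict at $(a,b)=(0,\frac1{d+1})$, so they persist for small $a>0$ by continuity.
\end{itemize}
Thus the whole of part (1) reduces to verifying a handful of explicit linear inequalities.

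For (2), take the diagonal $a=b$. Then $\rho_{a,a}^{\Gamma}=(I\otimes\Om_d)\rho_{a,a}(I\otimes\Om_d)^*$ by Proposition \ref{prop-Symp-equivalence}. Choose $a=b=\frac1{d+2}$, which is the Pál--Vértesi point.
\begin{itemize}
\item PPT: $(1+d)a+b=1\le1$, and the other constraints of $\mathbb{T}$ are easily checked.
\item Not in $\mathbb{S}_{d/2-1}$: we need $\frac{d-2}{(d-3)(d+2)}>\frac1{d+1}$. This is equivalent to $(d-2)(d+1)>(d+2)(d-3)$, i.e.\ $d^2-d-2>d^2-d-6$, which holds.
\end{itemize}
Hence $\SN(\rho)=d/2$ by Corollary \ref{cor-Symp-PPTSch}. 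Any $a$ slightly below $\frac1{d+2}$ works as well, giving an infinite family.

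The main obstacle is purely bookkeeping. The correct ingredient is the $k=2$ line $\frac{d-3}{2d-3}a+b\le\frac1{d+1}$ from Eq.~\eqref{eq-Symp-SN1e}, which must be applied with the coordinates swapped to $(b,a)$. One must also confirm that the chosen point lies inside $\mathbb{P}_d$ (automatic since $\mathbb{T}\subset\mathbb{P}_d$). Finally, for small $d$ (e.g.\ $d=4$, where $d/2-2=0$) the statement is trivial, so we may take $d\ge 6$ when the Schmidt number $2$ is genuinely needed.
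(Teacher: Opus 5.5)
Your route is essentially the paper's: read everything off the explicit regions $\mathbb{T}$ and $\mathbb{S}_k$, and use $\rho_{a,b}^{\Gamma}\cong\rho_{b,a}$. In fact the paper's point $\bigl(\frac{1}{2d-2},\frac{2d-3}{2d^2-2}\bigr)$ lies on your line $a+(1+d)b=1$. Your $\mathbb{S}_2$ inequality is equivalent to $a\le\frac{1}{2d-2}$, so the paper's point is exactly the endpoint of your admissible interval. For (2), the paper takes the whole range $\frac{d-3}{d^2-d-2}<a\le\frac{1}{d+2}$ rather than only the P\'al--V\'ertesi point.

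Two small slips need fixing.

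\emph{The vertex at $a=0$.} The point $(0,\frac{1}{d+1})$ is a vertex of $\mathbb{T}$. There the constraint $(1-d)a+b\le\frac{1}{d+1}$ holds with equality, not strictly, so your continuity argument does not cover it. Check it directly: on $b=\frac{1-a}{d+1}$ one has $(1-d)a+b=\frac{1}{d+1}-a\bigl(d-1+\frac{1}{d+1}\bigr)$, which is at most $\frac{1}{d+1}$ for $a\ge 0$.

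\emph{The transfer to $\Inv(S\otimes S)$ in part (1).} As written, this step is backwards. With your $(a,b)$, the state $\rho_{a,b}^{\Gamma}$ has Schmidt number $2$, while its partial transpose $\rho_{a,b}$ has Schmidt number $d/2$. That gives a gap of $2-d/2$, not $d/2-2$. Use instead
\[
\rho_{b,a}^{\Gamma}=(I_d\otimes\Om_d)\rho_{a,b}(I_d\otimes\Om_d)^*.
\]
This state is $S\otimes S$-invariant, PPT, and has Schmidt number $d/2$. Its partial transpose $\rho_{b,a}$ has Schmidt number $2$. The issue does not arise in part (2), since there $a=b$.

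With these corrections your argument is complete.
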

\begin{proof}
Let us first consider an $S\otimes \overline{S}$ invariant quantum state
    $$\rho = \rho_{\frac{1}{2d-2},\frac{2d-3}{2d^2-2}} = \frac{1}{2d(d^2-1)} \big((2d-3)I_{d^2} + d(d+1)|\om_d\ra\la \om_d| + (2d-3)(I_d\otimes \Om_d)F_d (I_d\otimes \Om_d) \big).$$
Then it is simple to check that $\rho$ is PPT, $\SN(\rho)=d/2$, and $\SN(\rho^{\Gamma})=2$ from \cref{thm-SympSch,thm-SSInv-Sch}, which establishes the first assertion. In fact, the parameter region of $(a,b)$ for which $\rho_{a,b}$ has these properties has a nonempty interior.

For the second assertion, we may consider a family of $S\otimes \overline{S}$-invariant states
    $$\rho=\rho_{a,a} = \frac{1-2a}{d^2}I_{d^2}+a|\om_d\ra\la \om_d|+\frac{a}{d}(I_d\otimes \Om_d)F_d (I_d\otimes \Om_d)^*.$$
Then Proposition \ref{prop-Symp-equivalence} implies that $\rho_{a,a}^{\Gamma}$ is local-unitary equivalent to $\rho_{a,a}$. Furthermore, by Corollary \ref{cor-Symp-PPTSch}, we have $\SN(\rho_{a,a})=d/2$ in the range $\frac{d-3}{d^2-d-2}<a\leq \frac{1}{d+2}$.
\end{proof}

\subsection{General Schmidt number criteria}

The detection of quantum states with high Schmidt number is not limited to states possessing symplectic group symmetry. First, our $k$-positivity classifications in Theorems \ref{thm-Symp-PosDec} and \ref{thm-SympkPos} imply the general necessary conditions for quantum states belonging to the set $\sch_k$.

\begin{proposition} \label{prop-Sch-general}
Let $d\geq 4$ be an even integer and let $\rho=\rho_{AB}$ be a $d\otimes d$ bipartite quantum state. 
\begin{enumerate}

    \item If $1\leq k\leq \frac{d}{2}$ and $\SN(\rho)\leq k$, then for every skew-symmetric unitary matrix $V\in U(d)$,
    \begin{equation} \label{eq-kBre}
        \rho_{AB}+k(I_d\otimes V)\rho_{AB}^{\Gamma}(I_d\otimes V)^*\leq k (\rho_A\otimes I_d),
    \end{equation}
    where $\rho_A:=(\id_d\otimes \Tr)(\rho_{AB})$.

    \item For a skew-symmetric unitary matrix $V$, define two real parameters
    \begin{align}
        a=a(V) &= \frac{1}{d^2-d-2}(-1 +  \Tr(F_d^{V}\rho) + (d^2-d) \la \om_d|\rho|\om_d\ra), \label{eq-para1}\\
        b = b(V) &=  \frac{1}{d^2-d-2}(-1 +(d-1) \Tr(F_d^{V}\rho) + d \la \om_d|\rho|\om_d\ra), \label{eq-para2}
    \end{align}
    where $F_d^V=\Ad_{I_d\otimes V}(F_d)$. If $1\leq k\leq d$ and  $\SN(\rho)\leq k$, then $(a,b)\in \mathbb{S}_k$.
\end{enumerate}
\end{proposition}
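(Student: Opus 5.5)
Both parts follow from the duality of Theorem~\ref{thm-dual-schmidt}, fed by the $k$-positive maps of Theorem~\ref{thm-SympkPos}. Part (1) uses a single witness, the $k$-Breuer--Hall map. Part (2) uses twirling to reduce to the symmetric class.

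\textbf{Part (1).} Fix $1\le k\le d/2$, so $(-\frac{1}{kd-k-1},-\frac{k}{kd-k-1})$ is an extreme point of $\mathbb{P}_k$ in both cases (1),(2) of Remark~\ref{rmk-kPos}. By Remark~\ref{rmk-V-equiv}, the $V$-version
$$\Phi(Z)=k\Tr(Z)I_d-Z-kVZ^{\top}V^*$$
is $k$-positive for every skew-symmetric unitary $V$. The map $\Phi$ is a positive multiple of $\Le_k^{\rm BH}$ with $\Om_d$ replaced by $V$. If $\SN(\rho)\le k$, Theorem~\ref{thm-dual-schmidt}(2) gives $(\id_d\otimes\Phi)(\rho)\ge 0$. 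Here we use the form with $\Le'$ acting on the $B$ factor. Since $d_A=d_B$, the roles of the maps in (2) are symmetric.

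We then expand the three terms:
\begin{itemize}
\item $(\id\otimes k\Tr(\cdot)I_d)(\rho)=k\,\rho_A\otimes I_d$;
\item $(\id\otimes\id)(\rho)=\rho$;
\item $(\id\otimes \Ad_V\circ\top)(\rho)=(I\otimes V)\rho^{\Gamma}(I\otimes V)^*$.
\end{itemize}
Nonnegativity of the sum is exactly \eqref{eq-kBre}. This step is routine.

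\textbf{Part (2).} Let $U_V$ be as in \eqref{eq-Symp-cong}. Replacing $\rho$ by $(U_V^{\top}\otimes U_V^*)\,\rho\,(U_V^{\top}\otimes U_V^*)^*$ or a similar local-unitary conjugate converts $F_d^V$ into $F_d^{\Om}$. Because $\overline{U_V}\otimes U_V$ fixes $\ket{\om_d}$, this conjugation preserves both $\la\om_d|\rho|\om_d\ra$ and the Schmidt number. We choose the precise local unitary by checking $U_V^{\top}VU_V=\Om_d$, so we may assume $V=\Om_d$.

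Next we apply the $S\otimes\overline S$-twirl. By Proposition~\ref{prop-twirling-preserving}, $\T_{S\otimes\overline S}(\rho)\in\sch_k$. By Proposition~\ref{prop-Symp-object} and unit trace, $\T_{S\otimes\overline S}(\rho)=\rho_{a,b}$ for some real $a,b$. The twirl is a trace-preserving conditional expectation onto $\Inv(S\otimes\overline S)$ and is self-adjoint for the Hilbert--Schmidt pairing. Hence
$$\Tr(\rho X)=\Tr(\rho_{a,b}X)\quad\text{for } X\in\{|\om_d\ra\la\om_d|,\ F_d^{\Om}\}.$$

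It remains to compute these two expectations for $\rho_{a,b}$, which gives a $2\times 2$ linear system, and to invert it. Using $\la\om_d|F_d^{\Om}|\om_d\ra=-1/d$ (up to sign conventions) and $\Tr(F_d^{\Om})=\Tr(F_d)=d$, $(F_d^\Om)^2=I$, we get
$$\la\om_d|\rho_{a,b}|\om_d\ra=\tfrac{1-a-b}{d^2}+a-\tfrac{b}{d^2},\qquad \Tr(F_d^\Om\rho_{a,b})=\tfrac{1-a-b}{d}-\tfrac{a}{d}+b\cdot d\cdot\tfrac1d.$$
The exact constants will be fixed by the projection decomposition in the proof of Proposition~\ref{prop-Symp-CP}. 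Solving the system should reproduce \eqref{eq-para1}–\eqref{eq-para2}, whose determinant denominator is $d^2-d-2=(d-2)(d+1)$. Then $\rho_{a,b}\in\sch_k$ means $(a,b)\in\mathbb{S}_k$ by definition.

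The main obstacle is only bookkeeping: getting the sign of $\la\om_d|F_d^{\Om}|\om_d\ra$ and the correct local unitary for general $V$ right, so that the inverted formulas match \eqref{eq-para1}–\eqref{eq-para2} exactly.
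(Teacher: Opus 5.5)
Your proposal follows the paper's proof. Part (1) is $(\id_d\otimes\Le_k^{\rm BH})(\rho)\geq 0$ expanded term by term. Part (2) is the $S\otimes\overline{S}$-twirl followed by identification of $(a,b)$. Matching the two moments $\la\om_d|\cdot|\om_d\ra$ and $\Tr(F_d^{\Om}\,\cdot\,)$ is equivalent to the paper's expansion via the projections $\Pi_j^{S\overline{S}}$ and Proposition~\ref{prop-twirlformula}. Part (1) is complete as written.

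In part (2), however, the constants you wrote down are wrong. With them the linear system does not invert to \eqref{eq-para1}--\eqref{eq-para2}, and the discrepancy is not a sign convention.

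\begin{itemize}
\item Since $|\om_d\ra\la\om_d|=\Pi_1^{S\overline{S}}\leq\frac12(I_{d^2}-F_d^{\Om})$ (Corollary~\ref{cor-Symp-proj}), one has $F_d^{\Om}|\om_d\ra=-|\om_d\ra$. Hence $\la\om_d|F_d^{\Om}|\om_d\ra=-1$, not $-1/d$.
\item $\Tr\big(F_d^{\Om}\cdot\tfrac{b}{d}F_d^{\Om}\big)=\tfrac{b}{d}\Tr(I_{d^2})=bd$, not $b$.
\end{itemize}

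The correct moments are therefore
\[
\la\om_d|\rho_{a,b}|\om_d\ra=\frac{1-a-b}{d^2}+a-\frac{b}{d},\qquad \Tr(F_d^{\Om}\rho_{a,b})=\frac{1-a-b}{d}-a+db.
\]
This is the system
\[
(d-1)a-b=\frac{d^2\la\om_d|\rho|\om_d\ra-1}{d+1},\qquad -a+(d-1)b=\frac{d\Tr(F_d^{\Om}\rho)-1}{d+1}.
\]
Inverting $\begin{pmatrix} d-1&-1\\-1&d-1\end{pmatrix}$ (determinant $d(d-2)$) gives exactly \eqref{eq-para1}--\eqref{eq-para2}. With this correction your argument is complete.

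A minor point on the local unitary: the one fixing $|\om_d\ra$ that carries $\Tr(F_d^{V}\rho)$ to $\Tr(F_d^{\Om}\rho')$ is $U_V^*\otimes U_V^{\top}$. Your $U_V^{\top}\otimes U_V^*$ yields $F_d^{\overline{V}}$ instead. This is harmless, because $\overline{V}$ is again a skew-symmetric unitary.
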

\begin{proof}
By Remark \ref{rmk-V-equiv}, it suffices to check only for $V=\Om_d$. First,
(1) is equivalent to $(\id_d\otimes \Le_k^{\rm BH})(\rho)\geq 0$ where $\Le_k^{\rm BH}$ is the $k$-Breuer-Hall map defined in Eq. \eqref{eq-k-BH}. For the assertion (2), it suffices to show that $\T_{S\otimes \overline{S}}(\rho) = \E_{S\in Sp(d)}\big[ (S\otimes S)\rho (S\otimes S)^* \big] = \rho_{a,b}$ since the $S\otimes \overline{S}$-twirling operation preserves the membership in $\sch_k$ (Proposition \ref{prop-twirling-preserving}). Indeed, let us recall the three mutually orthogonal projections $\Pi_j^{S\overline{S}}$ ($j=1,2,3$) defined in Eq. \eqref{eq-SSbar-proj}. By denoting $d_j:={\rm rank}(\Pi_j^{SS})$ and $\la X\ra_{\rho}:=\Tr(X\rho)$ for a Hermitian matrix $X$, Proposition \ref{prop-twirlformula} implies that
{\small
\begin{align*}
    &\T_{S\otimes \overline{S}}(\rho) = \sum_{j=1}^3 \frac{1}{d_j} \la \Pi_j^{S\overline{S}}\ra_{\rho}\, \Pi_j^{S\overline{S}}\\
    &= \frac{d^2}{2}\left(\frac{\la \Pi_2^{S\overline{S}}\ra_{\rho}}{d_2}+\frac{\la \Pi_3^{S\overline{S}}\ra_{\rho}}{d_3} \right) (I_{d^2}/d^2) + \left(\frac{\la \Pi_1^{S\overline{S}}\ra_{\rho}}{d_1}-\frac{\la \Pi_3^{S\overline{S}}\ra_{\rho}}{d_3} \right) |\om_d\ra\la \om_d| + \frac{d}{2} \left(\frac{\la \Pi_2^{S\overline{S}}\ra_{\rho}}{d_2}-\frac{\la \Pi_3^{S\overline{S}}\ra_{\rho}}{d_3} \right) (F_d^{\Om}/d).
\end{align*} \normalsize}
From $d_1=1$, $d_2=\frac{d^2+d}{2}$, and $d_3=\frac{d^2-d-2}{2}$ it is straightforward to check that $a=\frac{\la \Pi_1^{S\overline{S}}\ra_{\rho}}{d_1}-\frac{\la \Pi_3^{S\overline{S}}\ra_{\rho}}{d_3}$ and $b=\frac{d}{2} \Big(\frac{\la \Pi_2^{S\overline{S}}\ra_{\rho}}{d_2}-\frac{\la \Pi_3^{S\overline{S}}\ra_{\rho}}{d_3} \Big)$ coincide with  Eqs. \eqref{eq-para1} and \eqref{eq-para2}.
\end{proof}

\begin{remark}
When $\rho$ is a PPT state, then the condition Eq. \eqref{eq-kBre} is stronger than the \textit{$k$-reduction criterion} \cite{HH99,TH00}:
\begin{equation} \label{eq-kRed}
    \SN(\rho_{AB})\leq k \implies \rho_{AB}\leq k(\rho_A\otimes I_d),
\end{equation}
which follows by applying $k$-reduction map $\Le_k^{\rm red}=\Le_{\frac{-1}{kd-1},0}:Z\mapsto \frac{1}{kd-1}(k\Tr(Z) I_d-Z).$ Furthermore, when $k=1$, the condition reduces to the entanglement criterion from \cite{Bre06}:
    $$\rho_{AB}\in \SEP \implies \rho_{AB}+(I_d\otimes V)\rho_{AB}^{\Gamma}(I_d\otimes V)^*\leq (\rho_A\otimes I_d).$$
\end{remark}

Most importantly, Proposition \ref{prop-Sch-general} allows us to construct much wider class of PPT states with high Schmidt number. 

\begin{corollary} \label{cor-PPTSchOpt1}
Suppose $(a,b)\in \mathbb{T}\setminus \mathbb{S}_{d/2-1}$ as in Corollary \ref{cor-Symp-PPTSch}. Then for any unit vectors $v_1,\ldots, v_m\in \Comp^d$ and probability mass $(p_0,p_1,\ldots, p_m)$ (i.e., $p_j\geq 0$ and $\sum_{j=0}^m p_j=1$) with $p_0>0$, the $d\otimes d$ quantum state
    $$\rho_{AB}=p_0 \rho_{a,b}+\sum_{j=1}^m p_j |v_j\ra\la v_j|\otimes |\Om_d v_j\ra\la \Om_d v_j|$$
is PPT and $\SN(\rho_{AB})=d/2$.
\end{corollary}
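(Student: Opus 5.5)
Three things have to be shown for $\rho_{AB}$: it is PPT, $\SN(\rho_{AB})\leq d/2$, and $\SN(\rho_{AB})\geq d/2$. I would treat each in turn.

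\emph{PPT.} The state $\rho_{a,b}$ is PPT because $(a,b)\in\mathbb{T}$ (Theorem \ref{thm-SympSepPPT}). Each product term $|v_j\ra\la v_j|\otimes|\Om_d v_j\ra\la\Om_d v_j|$ is separable, hence PPT. The PPT cone is convex, so $\rho_{AB}$ is PPT.

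\emph{Upper bound.} By Corollary \ref{cor-Symp-PPTSch} we have $\rho_{a,b}\in\sch_{d/2}$. Product states lie in $\sch_1\subset\sch_{d/2}$. Since $\sch_{d/2}$ is convex, $\SN(\rho_{AB})\leq d/2$.

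\emph{Lower bound.} The plan is to find a $(d/2-1)$-positive witness $\Le$ with $\Tr(C_{\Le}\rho_{AB})<0$. Since $(a,b)\notin\mathbb{S}_{d/2-1}$, Theorem \ref{thm-DualSymm} gives some $(S,S)$-covariant $(d/2-1)$-positive map $\Le=\Le_{p,q}$ with $\Tr(\rho_{a,b}\rho_{p,q})<0$. It is enough to show that $\Tr(\rho_{p,q}\,\sigma_j)\leq 0$ for every added term $\sigma_j:=|v_j\ra\la v_j|\otimes|\Om_d v_j\ra\la\Om_d v_j|$. Then, because $p_0>0$,
\[
\Tr(\rho_{p,q}\rho_{AB})=p_0\Tr(\rho_{p,q}\rho_{a,b})+\sum_{j\geq1} p_j\Tr(\rho_{p,q}\sigma_j)<0,
\]
which forces $\SN(\rho_{AB})>d/2-1$.

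To evaluate $\Tr(\rho_{p,q}\sigma_j)$, note that by $S\otimes\overline{S}$-invariance of $\rho_{p,q}$ we may replace $\sigma_j$ by its twirl. Proposition \ref{prop-Sch-general}(2) then gives this trace as the pairing in \eqref{eq-Symp-paring} with the parameters $(a',b')$ of $\sigma_j$ computed from \eqref{eq-para1}–\eqref{eq-para2}. For the computation I would write $w=\Om_d v$ and use $\overline{w}=\overline{\Om_d}\,\overline{v}=\Om_d\overline{v}$ (as $\Om_d$ is real). This gives
\[
\la\om_d|\sigma_j|\om_d\ra=\tfrac1d|\la\overline{v}|w\ra|^2=\tfrac1d|v^{\top}\Om_d v|^2=0
\]
by the skew-symmetry in \eqref{eq-skew-form}. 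For $\Tr(F_d^{\Om}\sigma_j)$, the partial-transposed flip twisted by $\Om_d$ has trace against $|v\ra\la v|\otimes|w\ra\la w|$ equal to $|\la \Om_d\overline{v}|w\ra|^2$ up to conjugation conventions, and this should equal $1$. Taking these values, $\la\om|\sigma|\om\ra=0$ and $\Tr(F^{\Om}\sigma)=1$, we get $a'=0$ and $b'=\tfrac{d-2}{d^2-d-2}=\tfrac1{d+1}$. So $\sigma_j$ twirls to $\rho_{0,\frac1{d+1}}$, which sits at the vertex $(0,\tfrac1{d+1})$ of $\mathbb{P}_d$ and is separable. I expect this identification to be the main bookkeeping obstacle. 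The sign conventions for $F^{\Om}$ and for $\Om$ versus $\overline{\Om}$ must be checked carefully.

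Given that, \eqref{eq-Symp-paring} shows $\Tr(\rho_{p,q}\rho_{0,\frac1{d+1}})$ is proportional to
\[
\tfrac{1}{d+1}\bigl(-p+(d-1)q\bigr)+\tfrac1{d+1}=\tfrac{1-p+(d-1)q}{d+1}.
\]
This vanishes exactly when $p+(1-d)q=1$, which is one of the defining constraints of $\mathbb{P}_k$, so it is not automatically negative. I would therefore pick the witness concretely: $(p,q)=P$ from Corollary \ref{cor-Symp-kposIndec}, or any point with $p+(1-d)q=1$ on the relevant boundary edge. On that edge the product terms contribute exactly $0$. The remaining step is to check that the points $(p,q)$ on this edge already exclude all of $\mathbb{T}\setminus\mathbb{S}_{d/2-1}$. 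This is plausible because the separating line $\tfrac1{d-3}a+b=\tfrac1{d+1}$ passes through $(0,\tfrac1{d+1})$. That fact indicates the witness defining that face of $\mathbb{S}_{d/2-1}$ annihilates $\rho_{0,\frac1{d+1}}$, which is exactly what is needed. Hence $\SN(\rho_{AB})=d/2$.
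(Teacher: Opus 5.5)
Your proposal is correct and is essentially the paper's argument. The witness $P$ you settle on is exactly the $(d/2-1)$-Breuer--Hall map $\Le_{d/2-1}^{\rm BH}$. Pairing it with $\rho_{a,b}$ via \eqref{eq-Symp-paring} gives $\Tr(C_{\Le_{d/2-1}^{\rm BH}}\rho_{a,b})<0$ if and only if $\frac{1}{d-3}a+b>\frac{1}{d+1}$, which is the defining inequality of $\mathbb{T}\setminus\mathbb{S}_{d/2-1}$ in Corollary \ref{cor-Symp-PPTSch}. This closes your ``plausible'' step. Since $P$ lies on the edge $p+(1-d)q=1$, the product terms contribute zero, as you argued.

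One correction to the bookkeeping you flagged: $F_d^{\Om}=\Ad_{I_d\otimes\Om_d}(F_d)$ involves no partial transpose. For $\sigma=|v\ra\la v|\otimes|w\ra\la w|$ one has $\Tr(F_d^{\Om}\sigma)=|\la w|\Om_d v\ra|^2$, which equals $1$ for $w=\Om_d v$. This confirms your value $(a',b')=(0,\frac{1}{d+1})$.
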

\begin{proof}
The facts that $\rho_{AB}$ is PPT and $\SN(\rho_{AB})\leq d/2$ are clear. We claim that $\rho$ does not satisfies the condition Eq. \eqref{eq-kBre} with $k=d/2-1$. Indeed, for each $k$, one has
\begin{align*}
    &\la \om_d| k (\rho_A\otimes I_d) -\rho_{AB}-k(I_d\otimes \Om_d)\rho_{AB}^{\Gamma}(I_d\otimes \Om_d)^* |\om_d\ra \\
    &= (kd-k-1) \Tr\big( |\om_d\ra\la \om_d|\, (\id_d\otimes \Le_{k}^{\rm BH})(\rho_{AB})\big)\\
    &= (kd-k-1) \Tr\big( (\id_d\otimes \Le_{k}^{\rm BH})(|\om_d\ra\la \om_d|)\,\rho_{AB}\big)\\
    &= (kd-k-1) \Tr\big( C_{\Le_{k}^{\rm BH}}\,\rho_{AB}\big)
\end{align*}
where the second equality follows from the observation that $(\Le_{a,b})^*=\Le_{a,b}$ for all $a,b\in \Real$, where $\Le^*$ denotes the dual map of $\Le$. Note that the condition $\frac{1}{d-3}a+b>\frac{1}{d+1}$ implies that
    $$\Tr(C_{\Le_{d/2-1}^{\rm BH}} \rho_{a,b}) = \Tr(\rho_{\frac{-2}{d^2-3d},\frac{-d+2}{d^2-3d}}\;\rho_{a,b})<0,$$
by comparison with Eq. \eqref{eq-Symp-paring}. Furthermore, for all unit vector $v\in \Comp^d$ and for any $k$, we have
    $$\Tr(C_{\Le_k^{\rm BH}}\, |v\otimes \Om_dv\ra\la v\otimes \Om_d v|) = \frac{1}{d(kd-k-1)}(k- \big|\la \overline{v}|\Om_dv\ra \big|^2 - k|\la v|v\ra|^2) = 0.$$
Putting everything together, we conclude that $\Tr\big( C_{\Le_{k}^{\rm BH}}\,\rho_{AB}\big)<0$ and therefore $\SN(\rho_{AB})\geq d/2$.
\end{proof}

We conclude this section by suggesting further examples of PPT states that may improve the current best-known bounds on the Schmidt number.

\begin{corollary} \label{cor-PPTSchOpt2}
Suppose $(a,b)$ belongs to the interior of the set $\mathbb{T}\setminus  \mathbb{S}_{d/2-1}$. Then for sufficiently small $\eps>0$, the state
    $$\rho_{AB} = (1-\eps) \rho_{a,b} + \eps |\om_d^{\Om}\ra\la \om_d^{\Om}| = (1-\eps) \rho_{a,b} + \eps (I_d\otimes \Om_d)|\om_d\ra\la \om_d| (I_d\otimes \Om_d)^*$$
is PPT with $\SN(\rho_{AB})\geq d/2$.
\end{corollary}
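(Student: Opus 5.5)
Two things have to be shown: that $\rho_{AB}$ is PPT for small $\eps$, and that it is detected by the $(d/2-1)$-Breuer--Hall witness. Set $k=d/2-1$ and $W:=C_{\Le_{k}^{\rm BH}}$.

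\emph{Schmidt number.} I use the witness inequality $\Tr(W\rho_{AB})<0$, which by \cref{thm-dual-schmidt} (since $\Le_k^{\rm BH}$ is $k$-positive by \cref{thm-SympkPos}) gives $\SN(\rho_{AB})\geq k+1=d/2$. By linearity,
$$\Tr(W\rho_{AB})=(1-\eps)\Tr(W\rho_{a,b})+\eps\,\Tr\big(W|\om_d^{\Om}\ra\la \om_d^{\Om}|\big).$$
From the proof of Corollary \ref{cor-PPTSchOpt1}, the condition $\frac{1}{d-3}a+b>\frac{1}{d+1}$ (which holds because $(a,b)\notin\mathbb{S}_{d/2-1}$) gives $\Tr(W\rho_{a,b})=:-\delta<0$. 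The second term is a fixed finite number $c$; it can be computed explicitly from \cref{eq-Symp-paring} by twirling, since $|\om_d^{\Om}\ra\la\om_d^{\Om}|$ twirls to some $\rho_{a',b'}$, but only its finiteness matters. Hence $\Tr(W\rho_{AB})\le -(1-\eps)\delta+\eps|c|<0$ for $\eps<\delta/(\delta+|c|)$.

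\emph{PPT property.} Positivity is clear, since $\rho_{AB}$ is a convex combination of states. For the partial transpose, $\rho_{AB}^{\Gamma}=(1-\eps)\rho_{a,b}^{\Gamma}+\eps(|\om_d^{\Om}\ra\la\om_d^{\Om}|)^{\Gamma}$, and the second summand is a fixed Hermitian matrix of bounded norm. It therefore suffices that $\rho_{a,b}^{\Gamma}$ is \emph{strictly} positive definite. I would check this from the interior assumption. By Proposition \ref{prop-Symp-equivalence}, $\rho_{a,b}^{\Gamma}$ is unitarily equivalent to $\rho_{b,a}$. From the spectral decomposition in the proof of Proposition \ref{prop-Symp-CP}, the eigenvalues of $\rho_{b,a}$ are the three affine expressions whose nonnegativity defines $(b,a)\in\mathbb{P}_d$, i.e. two of the four defining inequalities of $\mathbb{T}$ in \cref{eq-Symp-PPT}. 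Since $(a,b)$ lies in the interior of $\mathbb{T}\setminus\mathbb{S}_{d/2-1}$, and hence in the interior of $\mathbb{T}$, these inequalities are strict and $\lambda_{\min}(\rho_{a,b}^{\Gamma})=:\mu>0$.

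The one point to verify is that all three eigenvalue expressions for $\rho_{b,a}$ appear strictly in $\mathbb{T}$. The third constraint of $\mathbb{P}_d$ is implied by the other two via convexity, so interiority still gives strictness. Weyl's inequality then yields
$$\lambda_{\min}(\rho_{AB}^{\Gamma})\ge (1-\eps)\mu-\eps\big\|(|\om_d^{\Om}\ra\la\om_d^{\Om}|)^{\Gamma}\big\|\ge (1-\eps)\mu-\eps,$$
using that the partial transpose of a rank-one projection has operator norm at most $1$. This is positive for $\eps<\mu/(1+\mu)$. Taking $\eps$ below both thresholds finishes the proof.
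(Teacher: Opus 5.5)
Your proof is correct and is essentially the paper's argument: detection by $C_{\Le^{\rm BH}_{d/2-1}}$, and PPT from strict positivity of $\rho_{a,b}^{\Gamma}$ on the interior of $\mathbb{T}$ plus a small perturbation. The one difference is that the paper computes $\Tr\big(C_{\Le^{\rm BH}_{k}}|\om_d^{\Om}\ra\la\om_d^{\Om}|\big)=0$ exactly, from $\la\om_d|\om_d^{\Om}\ra=0$ and $\la\om_d^{\Om}|F_d^{\Om}|\om_d^{\Om}\ra=1$, so $\SN(\rho_{AB})\geq d/2$ holds for every $\eps<1$ and only the PPT part needs $\eps$ small.

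There is one slip in your PPT step. The eigenvalue condition of $\rho_{b,a}$ that is not among the inequalities defining $\mathbb{T}$ is $b+(1-d)a\leq 1$. It is not implied by the other constraints of $\mathbb{P}_d$, which is a genuine triangle. It follows instead from $(1-d)a+b\leq\frac{1}{d+1}$ (from $(a,b)\in\mathbb{P}_d$), and this makes that eigenvalue strictly positive on all of $\mathbb{T}$.
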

\begin{proof}
We first note that $\la \om_d|\om_d^{\Om}\ra = 0$ and $\la \om_d^{\Om}|F_d^{\Om}|\om_d^{\Om}\ra = \la \om_d|F_d|\om_d\ra = 1$, and hence
    $$\Tr(C_{\Le_k^{\rm BH}} |\om_d^{\Om}\ra\la \om_d^{\Om}|) = \frac{1}{d(kd-k-1)}(k-0-k)=0, \quad k=1,\ldots, d/2-1.$$
Therefore, the same arguement with the proof of Corollary \ref{cor-PPTSchOpt1} shows that $\SN(\rho_{AB})\geq d/2$. On the other hand, as in the proof of Proposition \ref{prop-Symp-CP}, we have
    $$\rho_{a,b}^{\Gamma} = \Big(\frac{1-a-b}{d^2}+b-\frac{a}{d}\Big)\Pi_1^{SS}+\Big(\frac{1-a-b}{d^2}+\frac{a}{d}\Big)\Pi_2^{SS} + \Big(\frac{1-a-b}{d^2} - \frac{a}{d}\Big)\Pi_3^{SS},$$
which is strictly positive whenever $(a,b)\in {\rm int}\,\mathbb{T}$. Therefore, $\rho_{AB}^{\Gamma}\geq 0$ for sufficiently small $\eps>0$.
\end{proof}

%----PPTsquared------------

\section{The PPT squared conjecture under symplectic group symmetry} \label{sec-PPTSq}

Initially inspired by the theory of quantum repeaters \cite{BCHW15, CF17}, the \emph{PPT squared conjecture} has since emerged as one of the central open problems in quantum information theory, with recent connections to operator theory and operator algebras. We briefly recall its formulation below.

\begin{conjecture} [\cite{PPTsq, CMHW19}] \label{conj-PPT1}The composition of any two \emph{PPT} linear maps is \emph{entanglement breaking}. In other words, if $\Le_1$ and $\Le_2$ are CP linear maps whose Choi matrices are PPT, then the Choi matrix of $\Le_1\circ \Le_2$ is separable.
\end{conjecture}

The conjecture has resolved only in low dimensions $d=2$ and $d=3$ \cite{CMHW19, Chen2019}. A plethora of works have established the conjecture in various restricted settings or proved weaker variants, e.g., showing that repeated self-composition of a PPT map eventually yields an entanglement-breaking channel \cite{Lami2015entanglebreak,Kennedy2017,Rahaman2018,CYZ18,hanson2020eventually}. Notably, the conjecture is known to be equivalent to the following formulation in terms of positive maps \cite{CMHW19,GKS21}:

\begin{conjecture} \label{conj-PPT2}
For any positive linear map $\Le_1$ and any PPT linear map $\Le_2$, the composition $\Le_1\circ \Le_2$ is decomposable.
\end{conjecture}

On the other hand, several case studies have examined the conjecture within specific families of quantum channels exhibiting group symmetries \cite{CMHW19,singh2022ppt,NP25}, all of which provide further supporting evidence for the PPT squared conjecture. In this section, we contribute additional evidence by verifying both Conjectures \ref{conj-PPT1} and \ref{conj-PPT2} for the class of linear maps that are covariant with respect to the symplectic group. Our result is particularly noteworthy, as it applies to settings where the existence of highly positive indecomposable maps and PPT states with a large degree of entanglement are guaranteed by Corollaries \ref{cor-Symp-PPTSch} and \ref{cor-Symp-PPTSch}.

\begin{theorem} \label{thm-PPTsquared}
Let $d\geq 4$ be an even integer and let  $\Le_1,\Le_2:M_d(\Comp)\to M_d(\Comp)$ be two Hermitian-preserving linear maps.
\begin{enumerate}
    \item If both $\Le_1$ and $\Le_2$ are PPT maps that are either $(S,S)$- or $(S,\overline{S})$-covariant, then the composition $\Le_1\circ \Le_2$ is entanglement breaking.

    \item If $\Le_1$ is a positive map that is either $(S,S)$- or $(S,\overline{S})$-covariant, and $\Le_2$ is a PPT map that is either $(S,S)$- or $(S,\overline{S})$-covariant, then both compositions $\Le_1\circ \Le_2$ and $\Le_1\circ \Le_2$ are decomposable.
\end{enumerate}
\end{theorem}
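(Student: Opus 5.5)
The plan is to reduce everything to the two-parameter family $\Le_{p,q}$, where composition acts as an explicit bilinear operation on the parameters. The problem then becomes a finite check on extreme points of the planar regions already computed: $\mathbb{P}_1$, $\mathbb{T}$, $\mathbb{S}_1$ and $\mathbb{D}$.

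\emph{Step 1 (normal form).} A Hermitian-preserving $(S,S)$-covariant map lies in ${\rm span}_{\Real}\{\Delta,\id,\Ad_{\Om}\circ\top\}$ by Proposition \ref{prop-Symp-object}. If it is PPT or positive and nonzero, its Choi matrix is PSD (resp. block-positive) and nonzero, so it has positive trace. Hence it equals $c\,\Le_{p,q}$ with $c>0$, and positive scalars are irrelevant for all the properties involved. An $(S,\overline{S})$-covariant map has the form $c\,\top\circ\Le_{p,q}=c\,\Ad_{\Om}\circ\Le_{q,p}$ by Proposition \ref{prop-Symp-equivalence}.

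Next, $\Theta:=\Ad_{\Om}\circ\top$ is an involution, since $\Om(\Om Z^\top\Om^\top)^\top\Om^\top=\Om^2Z(\Om^\top)^2=Z$. It commutes with $\Delta$ and with $\Ad_{\Om}$, because $\Om\in Sp(d)$, and $\Ad_{\Om}$ commutes with every $\Le_{p,q}$ by covariance. Therefore every composition of maps of either covariance type equals, up to a positive scalar and a pre- or post-composition with the unitary conjugation $\Ad_{\Om}$, a composition $\Le_{p,q}\circ\Le_{p',q'}$. A direct expansion gives
$$\Le_{p,q}\circ\Le_{p',q'}=\Le_{pp'+qq',\;pq'+qp'}.$$
Composing with $\Ad_{\Om}$ on either side preserves entanglement breaking and decomposability. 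So it suffices to control the bilinear map $\beta\big((p,q),(p',q')\big):=(pp'+qq',\,pq'+qp')$.

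The PPT condition on a map $\Le_{p,q}$ means that its Choi matrix $\rho_{p,q}$ is PPT, i.e.\ $(p,q)\in\mathbb{T}$ by Theorem \ref{thm-SympSepPPT}. For an $(S,\overline{S})$-covariant map the parameters are swapped, and $\mathbb{T}$ is symmetric under the swap, so the condition is the same. Positivity means $(p,q)\in\mathbb{P}_1$ by Theorem \ref{thm-Symp-PosDec}. Entanglement breaking of $\Le_{x,y}$ means $(x,y)\in\mathbb{S}_1$, and decomposability means $(x,y)\in\mathbb{D}$.

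\emph{Step 2 (reduction to vertices).} Since $\beta$ is linear in each argument separately, and $\mathbb{S}_1$ and $\mathbb{D}$ are convex, it suffices to check the claims for pairs of extreme points. Fix one argument: the image of a polygon under an affine map is the convex hull of the images of its vertices, and then one repeats the argument in the other variable.

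For (2), the vertices of $\mathbb{P}_1$ are $(1,0)$, $(0,1)$ and $(-\tfrac1{d-2},-\tfrac1{d-2})$.
\begin{itemize}
\item The vertex $(1,0)$ gives $\beta=(p',q')\in\mathbb{T}\subset\mathbb{D}$.
\item The vertex $(0,1)$ gives $\beta=(q',p')\in\mathbb{T}\subset\mathbb{D}$.
\item The Breuer--Hall vertex gives the diagonal point $\big(-\tfrac{p'+q'}{d-2},-\tfrac{p'+q'}{d-2}\big)$.
\end{itemize}
For the diagonal point, the two constraints $p+(1-d)q\le1$ and $(1-d)p+q\le1$ of $\mathbb{D}$ hold easily. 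The lower bound $x+y\ge-\tfrac{d+2}{d^2-d-2}$ reduces to $p'+q'\le\tfrac{d+2}{2(d+1)}$. On $\mathbb{T}$ the maximum of $p'+q'$ is $\tfrac{2}{d+2}$, attained at $(\tfrac1{d+2},\tfrac1{d+2})$, and $\tfrac{2}{d+2}\le\tfrac{d+2}{2(d+1)}$. The same reasoning applies with the composition order reversed, since $\beta$ is symmetric in its two arguments.

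\emph{Step 3 (main obstacle: part (1)).} Here I would list the four vertices of $\mathbb{T}$, namely the pairwise intersections of its four bounding lines, one of which is $(\tfrac1{d+2},\tfrac1{d+2})$. For each of the at most $10$ unordered vertex pairs I would verify that $\beta$ satisfies the three linear inequalities defining $\mathbb{S}_1$.

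Two shortcuts should cut the work. First, $\beta$ commutes with the swap $(p,q)\mapsto(q,p)$ applied to one argument, and $\mathbb{T}$ and $\mathbb{S}_1$ are both swap-symmetric. Second, $\beta$ is a multiplication in the algebra $\Real[\Theta]\cong\Real^2$ via $(p+q,\,p-q)$. In these coordinates the constraint $a+b\le\tfrac1{d+1}$ and the constraints involving $p-q$ become products of coordinatewise bounds on $\mathbb{T}$. This is where I expect the real verification effort: I must confirm that the product of the extreme values of $p+q$ and $p-q$ over $\mathbb{T}$ respects the separability bounds, which appears tight near the Pál--Vértesi vertex.
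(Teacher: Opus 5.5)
Your Steps 1 and 2 are sound, and part (2) is complete. At the Breuer--Hall vertex you reach exactly the inequality the paper checks: $p'+q'\le\frac{d+2}{2(d+1)}$ is the paper's $(a+b)(p+q)\ge-\frac{d+2}{d^2-d-2}$ evaluated at $a+b=-\frac{2}{d-2}$. The symmetry of $\beta$ also handles the reversed composition order.

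Part (1), however, is not proved. Step 3 only announces a ten-pair vertex check that is never carried out. As written, the proposal therefore establishes (2) but not (1).

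The idea that closes (1) in one line is that PPT maps are closed under composition. If $\Le_1,\Le_2$ are CP and co-CP, then $\Le_1\circ\Le_2$ is CP. Likewise $\top\circ\Le_1\circ\Le_2=(\top\circ\Le_1)\circ\Le_2$ is CP. Hence $\beta(\mathbb{T}\times\mathbb{T})\subseteq\mathbb{T}$.

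The constraints $(1-d)x+y\le\frac{1}{d+1}$ and $x+(1-d)y\le\frac{1}{d+1}$ of $\mathbb{S}_1$ are already among those of $\mathbb{T}$. So only $x+y\le\frac{1}{d+1}$ remains, and by your own multiplicativity this reads $(p+q)(p'+q')\le\frac{1}{d+1}$.

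The vertices of $\mathbb{T}$ are $(\frac{1}{d+2},\frac{1}{d+2})$, $(0,\frac{1}{d+1})$, $(\frac{1}{d+1},0)$ and $(-\frac{1}{d^2-d-2},-\frac{1}{d^2-d-2})$. Hence $p+q\in[-\frac{2}{d^2-d-2},\frac{2}{d+2}]$, so $|p+q|\le\frac{2}{d+2}$ for $d\ge4$. The product is then at most $\frac{4}{(d+2)^2}<\frac{1}{d+1}$. The slack is $(d+2)^2-4(d+1)=d^2$, so your worry about tightness at the P\'al--V\'ertesi vertex is unfounded.

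This is precisely the paper's argument. The paper also uses the analogous closure fact in part (2): a positive map composed with a PPT map is positive, so the $\mathbb{P}_1$-type constraints of $\mathbb{D}$ are automatic there too. That leaves only the lower bound on $x+y$ to check, which your vertex reduction handles by direct verification instead.
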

\begin{proof}
For both two statements, we may assume that $\Le_1$ and $\Le_2$ are trace-preserving. Furthermore, for any $(S,S)$-covariant linear map $\Le_{a,b}$ defined in Eq. \eqref{eq-SympCov}, it is simple to check $\top \circ \Le_{a,b} = \Le_{a,b}\circ \top\in \Cov(S,\overline{S})$. Therefore, any composition between TP linear maps which are either $(S,S)$-covariant or $(S,\overline{S})$-covariant is of the form
    $$\Le_{a,b}\circ \Le_{p,q} \text{ \quad or \quad} \top\circ \Le_{a,b}\circ \Le_{p,q}$$
for some $(a,b),(p,q)\in \Real^2$. Since the positivity, PPT property, entanglement breaking property, and decomposability of lienar maps are all invariant under the composition of $\top$, we may consider only the case $\Le_1=\Le_{a,b}$ and $\Le_2=\Le_{p,q}$. In this case, it is also simple to further verify
    $$\Le_{a,b}\circ \Le_{p,q}=\Le_{ap+bq, aq+bp}$$
from the reltations $\Delta\circ \Delta = \Delta \circ (\Ad_{\Om}\circ \top) = \Delta$ and $(\Ad_{\Om}\circ \top)\circ (\Ad_{\Om} \circ \top)=\id$.

Now suppose both $\Le_1$ and $\Le_2$ are PPT, that is, $(a,b),(p,q)\in \mathbb{T}$, and let us show $(ap+bq,aq+bp)\in \mathbb{S}_1$. Since the map $\Le_1\circ \Le_2$ is already guaranteed to be PPT, it suffices to show that
    $$(ap+bq)+(aq+bp)=(a+b)(p+q)\leq \frac{1}{d+1},$$
by comparing the conditions in Theorem \ref{thm-SympSepPPT}. However, since the condition $(a,b),(p,q)\in \mathbb{T}$ implies $|a+b|, |p+q|\leq \frac{1}{d+2}+\frac{1}{d+2}=\frac{2}{d+2}$, we actually have $(a+b)(p+q)\leq \frac{4}{(d+2)^2}<\frac{1}{d+1}$. This proves the first assertion.

For the second assertion, suppose $(a,b)\in \mathbb{P}_1$ and $(p,q)\in \mathbb{T}$. Similarly, it suffices to check $(ap+bq)+(aq+bp)=(a+b)(p+q)\geq -\frac{2+d}{d^2-d-2}$
from Theorem \ref{thm-Symp-PosDec}, under the additional assumptions that $(a,b)\in \mathbb{P}_1\setminus \mathbb{D}$ and $(p,q)\in \mathbb{T}\setminus \mathbb{S}_1$. In this case, the conditions imply that 
\begin{align*}
    -\frac{d+2}{d^2-d-2}>\, &a+b \geq -\frac{1}{d-2}-\frac{1}{d-2}=-\frac{2}{d-2};\\
    \quad \frac{1}{d+1}<\, &p+q\leq \frac{1}{d+2}+\frac{1}{d+2} =\frac{2}{d+2},
\end{align*}
and hence we actually have $(a+b)(p+q)\geq -\frac{2}{d-2} \cdot \frac{2}{d+2}= -\frac{4}{d^2-4}> -\frac{d+2}{d^2-d-2}$, which completes the proof.
\end{proof}

It is natural to expect that Conjectures \ref{conj-PPT1} and \ref{conj-PPT2} remain equivalent when restricted to classes of maps defined by general group symmetries. However, we were unable to find a complete argument establishing this equivalence in full generality. 

\begin{question}
For any two unitary representations $\pi_A$ and $\pi_B$ of a compact group $G$ and for two covariant maps $\Le_1\in \Cov(\pi_A,\pi_B)$ and $\Le_2\in \Cov(\pi_B,\pi_A)$, are the following equivalent?
\begin{enumerate}
    \item $\Le_1\circ\Le_2$ is entanglement-breaking whenever $\Le_1$ and $\Le_2$ are PPT;

    \item $\Le_1\circ \Le_2$ is decomposable whenever $\Le_1$ is positive and $\Le_2$ is PPT.
\end{enumerate}
\end{question}

% -Application------------

\section{Optimal lower bounds for the Sindici–Piani semidefinite program} \label{sec-SP-SDP}

In order to generate PPT entanglement, Sindici and Piani \cite{SP18} proposed the following semidefinite program (SDP). Let $\Pi_{\A}=\frac{1}{2}(I_{d^2}+F_d)$ denote the projection onto the antisymmetric subspace of $\Comp^d\otimes \Comp^d$. Given an \textit{antisymmetric state} $\rho_{\A}$, i.e., a state satisfying $\Pi_{\A}\rho_{\A}\Pi_{\A}=\rho_{\A}$, define
\begin{center}
    $p^{\rm PPT}(\rho_\A):=\max_{\sigma} 
    \Tr(\Pi_\A\sigma)$ \quad subject to \quad $\begin{cases}
        \text{$\sigma$ is a PPT state}, \\ \text{$\Pi_\A\sigma \Pi_\A=\Tr(\Pi_\A\sigma) \rho_\A$.}
    \end{cases}$
\end{center}
It was shown that $p^{\rm PPT}$ admits a universal upper bound $p^{\rm PPT}(\rho_\A)\leq \frac{1}{2}$, and that equality is attained whenever an optimal solution $\sigma^*$ can be chosen to be separable. In particular, if $p^{\rm PPT}(\rho_{\A})<\frac{1}{2}$, then any optimal state $\sigma^*$ is necessarily PPT entangled.

The quantity $p^{\rm PPT}$ was further studied by Pál and Vértesi \cite{PV19} in their construction of PPT states with high Schmidt number. To this end, for $d\geq 2$, we consider the optimal lower bound
    $$p_{\min}(d):=\min_{\rho_\A} p^{\rm PPT}(\rho_{\A})$$
where the minimization is taken over all $d\otimes d$ antisymmetric states $\rho_{\A}$. By definition, the sequence satisfies $p_{\min}(d)\leq p_{\min}(d-1)\leq \cdots \leq p_{\min}(2)\leq 1/2$. For all antisymmetric state $\rho_{\A}$, it is shown that $p^{\rm PPT}(\rho_{\A})\geq p_{\min}(\SN(\rho_{\A}))$ and the Schmidt number $\SN(\rho_{\A})$ is always an even integer. In other words, if $p^{\rm PPT}(\rho_{\A})<p_{\min}(2r)$ for some $r\geq 1$, then it necessarily implies that $\SN(\rho_{\A})\geq 2r+2$. Therefore, using the Schmidt number inequality $$\SN(\Pi_{\A} \sigma\Pi_{\A})\leq 2\SN(\sigma),$$ obtained from \cite{PV19,Car20}, we conclude that $\SN(\sigma)\geq \frac{1}{2}\SN(\rho_{\A})\geq r+1$, yielding a lower bound on the Schmidt number of any optimal PPT state $\sigma$ achieving $p^{\rm PPT}(\rho_{\A})$.

In \cite{PV19} it was further shown that $p_{\min}(2)=p_{\min}(3)=1/2$ and $p_{\min}(2r) = p_{\min}(2r+1)$ for all integers $r\geq 1$. Moreover, the minimum $p_{\min}(d)$ is attained by the antisymmetric pure state
    $$\rho_{\A}= |\om_d^{\Om}\ra\la \om_d^{\Om}|; \quad |\om_d^{\Om}\ra = (I_d\otimes \Om_d)|\om_d\ra = \frac{1}{\sqrt{d}} \sum_{i=1}^{d/2} (|i+d/2,i\ra - |i,i+d/2\ra),$$
from which it was estimated that $p_{\min}(2r)\ge \tfrac{1}{2r+2}$ for all $r\ge 2$. In the following, we exploit symplectic group symmetry to show that equality indeed holds, thereby resolving the conjecture posed in \cite{PV19}.

\begin{theorem}
For every even integer $d\geq 4$, we have
    $$p_{\min}(d)=p^{\rm PPT}(|\om_d^{\Om}\ra\la \om_d^{\Om}|) = \frac{1}{d+2}.$$
\end{theorem}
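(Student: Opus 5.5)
The plan is to compute $p^{\rm PPT}(|\om_d^{\Om}\ra\la \om_d^{\Om}|)$ exactly by reducing the SDP to the two-parameter family of $S\otimes S$-invariant states. Together with the result of \cite{PV19} that $p_{\min}(d)$ is attained at $\rho_{\A}=|\om_d^{\Om}\ra\la \om_d^{\Om}|$, this gives the first equality. The second equality then amounts to showing that every feasible $\sigma$ satisfies $\Tr(\Pi_{\A}\sigma)\le \frac{1}{d+2}$, and that this value is attained. Here $\Pi_{\A}=\frac12(I_{d^2}-F_d)$ denotes the antisymmetric projection.

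\emph{Step 1 (symmetrization).} Let $\sigma$ be feasible and replace it by $\T_{S\otimes S}(\sigma)$. By Proposition \ref{prop-twirling-preserving}, twirling preserves the PPT property and the trace. The operators $\Pi_{\A}=\Pi_1^{SS}+\Pi_3^{SS}$ and $|\om_d^{\Om}\ra\la \om_d^{\Om}|=\Pi_1^{SS}$ both lie in $\Inv(S\otimes S)$, so $\Pi_{\A}$ commutes with each $S\otimes S$. Consequently the constraint $\Pi_{\A}\sigma\Pi_{\A}=\Tr(\Pi_{\A}\sigma)\,\Pi_1^{SS}$ and the objective value are both unchanged by twirling. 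Hence we may assume $\sigma\in\Inv(S\otimes S)$. By Proposition \ref{prop-twirlformula} and Corollary \ref{cor-Symp-proj}, such a state has the form
$$\sigma=x\,\Pi_1^{SS}+y\,\Pi_2^{SS}/d_2+z\,\Pi_3^{SS}/d_3, \qquad x,y,z\ge 0,\quad x+y+z=1.$$
In this form the constraint reads $x\Pi_1^{SS}+z\Pi_3^{SS}/d_3=x\Pi_1^{SS}$, which forces $z=0$. The objective equals $x$.

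\emph{Step 2 (PPT condition via the established regions).} Write $\sigma=\rho_{a,b}^{\Gamma}$ and match coefficients with the spectral decomposition of $\rho_{a,b}^{\Gamma}$ from the proof of Corollary \ref{cor-PPTSchOpt2}:
$$\tfrac{1-a-b}{d^2}+b-\tfrac ad=x,\qquad \tfrac{1-a-b}{d^2}+\tfrac ad=\tfrac{2(1-x)}{d^2+d},\qquad \tfrac{1-a-b}{d^2}-\tfrac ad=0.$$
Solving gives $a=\frac{1-x}{d+1}$ and $b=x$. By Theorem \ref{thm-SSInv-Sch}, $\sigma$ is PPT if and only if $(a,b)\in\mathbb{T}$, where $\mathbb{T}$ is given by Eq. \eqref{eq-Symp-PPT}. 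I would check the four inequalities in turn:
\begin{itemize}
\item $(1+d)a+b\le1$ holds with equality.
\item The two inequalities involving $\frac{1}{d+1}$ hold for all $x\in[0,1]$; this is a routine check.
\item The remaining inequality $a+(1+d)b\le 1$ becomes $x\big((d+1)-\frac1{d+1}\big)\le \frac{d}{d+1}$, that is, $x\le \frac{1}{d+2}$.
\end{itemize}

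\emph{Step 3 (conclusion).} Steps 1 and 2 give $p^{\rm PPT}(|\om_d^{\Om}\ra\la\om_d^{\Om}|)\le\frac1{d+2}$. Equality is attained at $x=\frac1{d+2}$, where $\sigma=\rho^{\Gamma}_{\frac1{d+2},\frac1{d+2}}=\rho^{\rm PV}$ is PPT. This matches the lower bound $p_{\min}(d)\ge \frac1{d+2}$ from \cite{PV19}, and the attainment of $p_{\min}(d)$ at this state then gives $p_{\min}(d)=\frac{1}{d+2}$.

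The main point needing care is Step 1. One must verify that twirling commutes with the nonlinear-looking equality constraint. This follows because the right-hand side $\Tr(\Pi_{\A}\sigma)$ is twirl-invariant and $\rho_{\A}=\Pi_1^{SS}$ is $S\otimes S$-invariant. After that reduction, the rest is bookkeeping with the explicit region $\mathbb{T}$.
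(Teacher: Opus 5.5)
Your proposal is correct and follows essentially the same route as the paper: you $S\otimes S$-twirl a feasible PPT state, use the constraint to remove the $\Pi_3^{SS}$ component so that $\sigma=\rho^{\Gamma}_{\frac{1-x}{d+1},x}$, read off $x\le\frac{1}{d+2}$ from the PPT region $\mathbb{T}$, and rely on the cited attainment result of \cite{PV19} for the first equality. Two small slips do not affect the conclusion. The constraint should read $x\Pi_1^{SS}+z\Pi_3^{SS}/d_3=(x+z)\Pi_1^{SS}$, which still forces $z=0$. Also, the inequality $(1-d)a+b\le\frac{1}{d+1}$ is equivalent to $x\le\frac12$ rather than holding for all $x\in[0,1]$; this is redundant because $x\le\frac{1}{d+2}<\frac12$.
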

\begin{proof}
Let $\sigma$ be a PPT state such that $\Pi_{\A} \sigma \Pi_{\A} = \Tr(\Pi_{\A} \sigma) |\om_d^{\Om} \ra\la \om_d^{\Om}|$. We first show that its $S\otimes S$-twirling $\tilde{\sigma}=\T_{S\otimes S}(\sigma)=\E_{S\in Sp(d)}\big[ (S\otimes S)\sigma (S\otimes S)^* \big]$ satisfies the same relation
\begin{equation} \label{eq-SStwirling-Opt}
    \Pi_{\A} \tilde{\sigma} \Pi_{\A} = \Tr(\Pi_{\A}\tilde{\sigma})|\om_d^{\Om}\ra\la \om_d^{\Om}| = \Tr(\Pi_{\A}{\sigma})|\om_d^{\Om}\ra\la \om_d^{\Om}|.
\end{equation}
Indeed, by recalling that $\Pi_1^{SS} = |\om_d^{\Om}\ra\la \om_d^{\Om}|$, $\Pi_2^{SS}=\Pi_{\mathcal{S}}$, and $\Pi_3^{SS}= \Pi_{\A}-|\om_d^{\Om}\ra\la \om_d^{\Om}|$ are three mutually orthogonal projections in $\Inv(S\otimes S)$ (Corollary \ref{cor-Symp-proj}), Proposition \ref{prop-twirlformula} implies that
\begin{align*}
    \tilde{\sigma}&= \sum_{j=1}^3 \frac{1}{{\rm rank}(\Pi_j^{SS})} \Tr( \Pi_j^{SS} \sigma) \Pi_j^{SS}\\
    &=\la \om_d^{\Om}|\sigma|\om_d^{\Om}\ra \, |\om_d^{\Om}\ra\la \om_d^{\Om}| + \frac{2}{d^2+d}\Tr(\Pi_{\mathcal{S}} \sigma) \Pi_{\mathcal{S}}+\frac{2}{d^2-d-2}\Tr((\Pi_{\mathcal{A}}-|\om_d^{\Om}\ra\la \om_d^{\Om}|)\sigma)\big(\Pi_{\mathcal{A}}-|\om_d^{\Om}\ra\la \om_d^{\Om}|\big) \\
    &= \Tr(\Pi_{\A}\sigma) |\om_d^{\Om}\ra\la \om_d^{\Om}| + \frac{2}{d^2+d}(1-\Tr(\Pi_{\A}\sigma))\Pi_{\mathcal{S}},
\end{align*}
where the second and third equalities follow from $|\om_d^{\Om}\ra\in {\rm Ran}(\Pi_{\A})$ and
    $$\la \om_d^{\Om}|\sigma|\om_d^{\Om}\ra = \la \om_d^{\Om}|\Pi_{\A}\sigma \Pi_{\A}|\om_d^{\Om}\ra = \Tr(\Pi_{\A} \sigma),$$
by the assumption of $\sigma$. This shows the relation \eqref{eq-SStwirling-Opt}. In particular, this implies that the optimal state $\sigma^*$ for $p^{\rm PPT}(|\om_d^{\Om}\ra\la \om_d^{\Om}|)$ can be chosen to be of the form
    $$\sigma^*=p|\om_d^{\Om}\ra\la \om_d^{\Om}|+\frac{2(1-p)}{d^2+d}\Pi_{\mathcal{S}} = \rho_{\frac{1-p}{d+1},p}^{\Gamma},$$
where in this case $p=\Tr(\Pi_{\A}\sigma)$. By Theorem \ref{thm-SympSepPPT}, $\sigma^*$ above is PPT if and only if $0\leq p\leq \frac{1}{d+2}$.
\end{proof}

\section{Conclusion}

By exploiting symplectic group symmetries, we construct a broad class of new $k$-positive indecomposable linear maps and PPT states exhibiting a high degree of entanglement, attaining the best known general bounds from \cite{PV19,Car20}. In particular, we provide a systematic construction of nontrivial quantum objects, including $k$-positive maps with strong indecomposability properties and PPT entangled states displaying an extremal Schmidt number gap between a state and its partial transpose, a phenomenon that had remained largely beyond reach prior to this work. Furthermore, we demonstrate several applications of our framework, including results related to the PPT-squared conjecture and the optimal bound in the SDP associated with PPT entanglement.

Our results demonstrate the effectiveness of the symmetry method as a systematic tool for analyzing nontrivial phenomena related to Schmidt numbers and positivity. By exploiting symplectic group symmetries and tools from projective geometry, we obtain analytically tractable families in which representation-theoretic structures translate directly into geometric descriptions of entanglement, revealing an unexpected interplay between \textit{representation theory, projective geometry, and quantum information.} Remarkably, even a simple two-parameter family already yields a broad class of highly entangled PPT states and highly positive yet indecomposable maps, suggesting that symplectic symmetry captures essential features of extremal entanglement behavior. This observation naturally motivates further exploration beyond the unitary symplectic group, for instance toward weaker symmetry assumptions such as \textit{orthogonal symplectic symmetry} or more general covariance classes. We expect that such extensions will uncover new entanglement structures and provide a unified framework for constructing and classifying quantum states, potentially improving the current best bounds and further leading to counterexamples to several important open problems, such as the PPT-squared conjecture \cite{PPTsq,CMHW19} and the NPT distillation problem \cite{HRZ22}. We leave these directions for future work.

\bigskip

\emph{Acknowledgements}: The author thanks Aabhas Gulati, Seung-Hyeok Kye, Robin Krebs, and Mariami Gachechiladze for the helpful discussions and comments. We would also like to thank the organizers of the first meeting of \href{http://www.fields.utoronto.ca/activities/24-25/operator-algebras-quantum-information}{Operator Algebras and Quantum Information} for the kind invitation to present a preliminary version of this work. S.-J.P. was supported by the ANR project \href{https://esquisses.math.cnrs.fr/}{ESQuisses}, grant number ANR-20-CE47-0014-01.

%-----------------------------------

\appendix

\section{Proof of \cref{thm-SympkPos}} \label{app-kPos}

This section is devoted to provide the complete proof of Theorem \ref{thm-SympkPos}. We first recall a $k$-positivity criterion presented in \cite[Lemma 1]{Tom85}.

\begin{proposition} \label{prop-kpos}
Let $1\leq k\leq d$. Then a linear map $\Le:M_d(\Comp)\to M_d(\Comp)$ is $k$-positive if and only if the bipartite matrix
   \begin{equation}\label{eq40}
       C_k^v(\Le):=\sum_{i,j=1}^k |i\ra\la j|\otimes \Le(|v_i\ra\la v_j|)\in M_k(\Comp)\otimes M_d(\Comp)
   \end{equation}
is positive semidefinite for any choice of an orthonormal subset $\{v_1,\ldots, v_k\}$ of $\Comp^d$.
\end{proposition}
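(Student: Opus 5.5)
The plan is to prove both directions of Proposition~\ref{prop-kpos} directly from the definition of $k$-positivity. I will use the Schmidt-rank description of $k$-blockpositivity recalled after Theorem~\ref{thm-dual-schmidt}.

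For necessity, fix an orthonormal set $\{v_1,\ldots,v_k\}$ and let $W:\Comp^k\to\Comp^d$ be the isometry $W|i\ra=|v_i\ra$. Then
\[
C_k^v(\Le)=(\id_k\otimes\Le)\big((I_k\otimes W)\,|\xi\ra\la\xi|\,(I_k\otimes W)^*\big),\qquad |\xi\ra=\sum_{i=1}^k|i\ra\otimes|i\ra\in\Comp^k\otimes\Comp^k .
\]
The argument of $\id_k\otimes\Le$ is positive semidefinite in $M_k\otimes M_d$, so $k$-positivity of $\Le$ immediately gives $C_k^v(\Le)\ge0$.

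For sufficiency, I will show that $(\id_k\otimes\Le)(|\eta\ra\la\eta|)\ge0$ for every $\eta\in\Comp^k\otimes\Comp^d$. By convexity and the spectral decomposition, this is enough to conclude that $\Le$ is $k$-positive.

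\begin{enumerate}
\item Write $|\eta\ra=\sum_{j=1}^k|j\ra\otimes|x_j\ra$. The span of $x_1,\ldots,x_k$ has dimension at most $k$, so choose an orthonormal set $\{v_1,\ldots,v_k\}$ whose span contains it; extend arbitrarily if the dimension is smaller.
\item Write $x_j=\sum_i c_{ij}v_i$ with $C=(c_{ij})\in M_k(\Comp)$. This gives $|\eta\ra=(C^\top\otimes W)|\xi\ra$, up to checking the index convention.
\item Since $C^\top\otimes I$ commutes with the ampliation,
\[
(\id_k\otimes\Le)(|\eta\ra\la\eta|)=(C^\top\otimes I_d)\,C_k^v(\Le)\,(C^\top\otimes I_d)^*\ge0 .
\]
\end{enumerate}

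The only delicate point is the bookkeeping in step 2, namely verifying that the local operator acts on the first factor as $C^\top$ rather than $C$. This is a routine index check and poses no real obstacle.
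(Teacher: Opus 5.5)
Your proof is correct. The index check in step 2 works out: $(C^\top\otimes W)|\xi\ra=\sum_j|j\ra\otimes|x_j\ra$. Also, $\id_k\otimes\Le$ does intertwine conjugation by $A\otimes I_d$. You never actually use the Schmidt-rank description of $k$-blockpositivity that you announce at the start; your argument works directly from the definition of $k$-positivity.

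The paper gives no proof of its own and simply cites Tomiyama's Lemma~1. Your argument is the standard one for that lemma: for necessity, apply the ampliation to the projector onto $\sum_i|i\ra\otimes|v_i\ra$; for sufficiency, write any vector as a first-factor local image of such a vector.
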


Note that $C_k^v(\Le)$ reduces to the (unnormalized) Choi matrix of $\Le$ when $k=d$ and we choose the standard basis $|v_i\ra = |i\ra$ for $i=1,\ldots, d$. However, this criterion is still difficult to apply in general, since one must verify the positivity of $C_k^v(\Le)$ for \emph{arbitrary} choices of $\{v_1,\ldots, v_k\}$. Nevertheless, we show that it can be applied surprisingly effectively in our setting to obtain a precise description of the 
$k$-positivity regions $\mathbb{P}_k^{(d)}$ of $\Le_{p,q}$.

The following optimization results play a crucial role in applying Proposition \ref{prop-kpos} for the proof of \cref{thm-SympkPos}.

\begin{lemma} \label{lem-Symp-optimization}
For $1\leq k\leq d$, we have
\begin{align}
    \max\sum_{j,j'=1}^k|\la {v_j}|\Om_d|\overline{v_{j'}}\ra|^2 &= 2\lfloor k/2\rfloor, \label{eq-MaxOptimize}\\
    \min \sum_{j,j'=1}^k|\la {v_j}|\Om_d|\overline{v_{j'}}\ra|^2 &= \max(2k-d,0), \label{eq-MinOptimize}
\end{align}
where the maximum and minimum are taken over all orthonormal subsets $\{v_1,\ldots, v_k\}\subset \Comp^d$.
\end{lemma}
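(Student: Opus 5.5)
Let $P$ denote the orthogonal projection onto $W:=\mathrm{span}\{v_1,\ldots,v_k\}$, and let $\theta:=\Om_d J$ be the antiunitary map $x\mapsto \Om_d\overline{x}$. Since $\overline{\Om_d}=\Om_d$ and $\Om_d^2=-I_d$, we have $\theta^2=-I$. The plan is to rewrite the sum as a single trace:
$$\sum_{j,j'=1}^k|\la v_j|\Om_d|\overline{v_{j'}}\ra|^2=\sum_{j'}\|P\theta v_{j'}\|^2=\Tr(P Q),$$
where $Q$ is the orthogonal projection onto $\theta W$. The equality with $\Tr(PQ)$ holds because $\{\theta v_{j'}\}$ is an orthonormal basis of $\theta W$, as $\theta$ is antiunitary. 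Thus the quantity is $\Tr(PQ)=\sum_i\cos^2\phi_i$ over the principal angles between $W$ and $\theta W$, and we must bound it using the structure forced by $\theta^2=-1$.

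\emph{Key structural step.} The skew form gives $\la x|\theta x\ra=0$ for every $x$; this is exactly the computation in the proof of Theorem \ref{thm-Symp-PosDec}. Consider the compression $A:=P\theta|_W:W\to W$, an antilinear map. Its matrix $M_{jj'}=\la v_j|\Om_d|\overline{v_{j'}}\ra=v_j^*\Om_d\overline{v_{j'}}$ is skew-symmetric, since $M^\top=\overline{V'}{}^\top... $; concretely $M=V^*\Om_d\overline V$ with $V=[v_1\cdots v_k]$, so $M^\top=V^*\Om_d^\top\overline V=-M$. The sum equals $\|M\|_F^2$. A skew-symmetric complex matrix has singular values occurring in pairs (Youla normal form, \cite{TextHJ12}). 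Moreover $\|M\|\le 1$ since $M$ is a compression of a unitary. Hence $\|M\|_F^2\le 2\lfloor k/2\rfloor$: there are at most $\lfloor k/2\rfloor$ pairs, each contributing at most $2$. For equality, take $W$ spanned by pairs $\{e_i,e_{i+d/2}\}$ with $i\le\lfloor k/2\rfloor$, together with one extra vector $e_{m}$ when $k$ is odd. Such a $W$ is $\theta$-invariant on the pairs, so $M$ is a direct sum of unitary $2\times2$ blocks plus a zero row/column. This gives \eqref{eq-MaxOptimize}.

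\emph{Minimum.} Write $\Tr(PQ)=\dim$-type bound: $\Tr(PQ)\ge \dim(W\cap\theta W)$, and $\dim(W\cap \theta W)\ge 2k-d$ since both subspaces are $k$-dimensional in $\Comp^d$. For vectors in the intersection, $P$ and $Q$ both act as the identity, so each such direction contributes $1$. More carefully, $\Tr(PQ)=\|PQ\|_F^2\ge$ the number of singular values equal to $1$, which is $\dim(W\cap\theta W)$. This gives the lower bound $\max(2k-d,0)$. For attainment, when $k\le d/2$ take $W=\mathrm{span}\{e_1,\ldots,e_k\}$. Then $\theta W=\mathrm{span}\{e_{d/2+1},\ldots,e_{d/2+k}\}\perp W$, so the sum is $0$. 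When $k>d/2$, take $W=\mathrm{span}\{e_1,\ldots,e_{d/2}\}\oplus\mathrm{span}\{e_{d/2+1},\ldots,e_{k}\}$. The pairs $\{e_i,e_{i+d/2}\}$ with $i\le k-d/2$ lie fully in $W$; they number $k-d/2$ and each contributes $2$, giving $2k-d$. The remaining basis vectors $e_i$ with $k-d/2<i\le d/2$ map under $\theta$ to $\pm e_{i+d/2}\notin W$, and these are orthogonal to $W$. So $M$ is block diagonal and the sum equals exactly $2k-d$.

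\emph{Expected obstacle.} The main obstacle is the pairing of singular values used in the upper bound. I plan to handle it via the Youla decomposition $M=U(\bigoplus_i s_i\Om_2\oplus 0)U^\top$ for skew-symmetric $M$, which is standard. Combined with $s_i\le\|M\|\le1$, this decomposition gives the bound immediately. The lower bound is elementary via the dimension count.
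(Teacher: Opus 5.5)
Your proposal is correct and matches the paper's proof essentially step for step. The upper bound uses the skew-symmetry of $M=V^*\Om_d\overline{V}$, the bound $\|M\|_\infty\le 1$ and the Youla normal form; the lower bound uses $\Tr(PQ)\ge\dim(W\cap\theta W)\ge 2k-d$; and attainment uses the same standard-basis constructions. The only cosmetic difference is that you obtain $\|M\|_\infty\le 1$ directly because $M$ is a compression of the unitary $\Om_d$, whereas the paper gets it from positivity of the Gram matrix of $\{v_j\}\cup\{\Om_d\overline{v_j}\}$. Also delete the stray fragment in your skew-symmetry line.
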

\begin{proof}
Let us first take an ordered set $\B:=\{v_1,\ldots, v_k, \Om_d\overline{v}_1,\ldots, \Om_d\overline{v}_k\}$ consisting of $2k$ vectors and consider the \textit{Gram matrix} $G$ associated to $\B$, that is,
    $$G=\big(\la a|b\ra\big)_{a,b\in \B} = \begin{pmatrix}
        I_k & B \\ B^* & I_k
    \end{pmatrix} \in M_{2k}(\Comp)^+,$$
where $B:=(\la v_i|\Om_d\overline{v}_j\ra)_{1\leq i,j\leq k}\in M_k(\Comp)$. Note that the diagonal blocks of $G$ are identity since both the sets $\{v_i\}_{i=1}^k$ and $\{\Om\overline{v}_i\}_{i=1}^k$ are orthonormal. We have two additional observations:
\begin{enumerate}
    \item $\|B\|_{\infty}\leq 1$, since $G\geq 0$ (see e.g. \cite[Lemma 3.18]{TextWat18}).
    
    \item $B^{\top}=-B$, i.e., $B$ is skew-symmetric, from the fact $B_{ij}=\la \overline{v_i},\overline{v_j}\ra_{\Om}$ and by Eq. \eqref{eq-skew-form}.
\end{enumerate}
Therefore, the classical result of skew-symmetric matrix (see \cite[Corollary 4.4.19]{TextHJ12} implies that there exists a unitary matrix $U\in U(k)$ and nonnegative numbers $s_1,\ldots, s_r\geq 0$ ($2r\leq k$) such that
    $$U^{\top} B U = 0_{k-2r}\oplus \begin{pmatrix} 0 & s_1 \\ -s_1 & 0    \end{pmatrix} \oplus \cdots \oplus \begin{pmatrix} 0 & s_r \\ -s_r & 0    \end{pmatrix}.$$
In fact, we have $s_1,\ldots, s_r\leq 1$ since $\|B\|_{\infty}\leq 1$. Therefore, one has
    $$\sum_{j,j'=1}^k|\la {v_j}|\Om_d|\overline{v_{j'}}\ra|^2 = \|B\|_2^2 =2(s_1^2+\cdots +s_r^2)\leq 2r\leq  2\lfloor k/2\rfloor.$$
Now if $k$ is even, then the choice of orthonormal vectors $|v_{2j-1}\ra=|j\ra$ and $|v_{2j}\ra =|j+d/2\ra$ ($j=1,\ldots, \frac{k}{2}$) attains this maximum $\|B\|_2^2=k$. If $k$ is odd, a similar construction with $|v_{2j-1}\ra=|j\ra$ ($j=1,\ldots, \frac{k+1}{2}$) and $|v_{2j}\ra =|j+d/2\ra$ ($j=1,\ldots, \frac{k-1}{2}$) yields the maximum $\|B\|_2^2 = k-1$. This completes the proof of the maximization in Eq. \eqref{eq-MaxOptimize}.

We now turn to the minimization in \eqref{eq-MinOptimize}. If $2k\leq d$, then we may simply take $|v_j\ra=|j\ra$ ($j=1,\ldots, k$) and check $\la {v_j}|\Om_d|\overline{v_{j'}}\ra=0$ for every $j,j'$, so the equality \eqref{eq-MinOptimize} holds. From now, let us assume $2k>d$. If we choose $|v_j\ra= |j\ra$ ($j=1,\ldots, k$) as before, then we have
\begin{align*}
    \sum_{j,j'}|\la v_j|\Om_d|\overline{v_{j'}}\ra|^2=2(k-d/2)=2k-d,
\end{align*}
from the observations that $\Om_d|\overline{v_{j+\frac{d}{2}}}\ra=|v_j\ra$ for $j=1,\ldots, k-\frac{d}{2}$ and that $|v_j\ra$ is orthogonal to all $\Om_d|\overline{v_{j'}}\ra$ unless $j'=j\pm\frac{d}{2}$. On the other hand, for any orthonormal vectors $v_1,\ldots, v_k$, we associate two orthogonal projections $\Pi_v=\sum_{j=1}^k |v_j\ra\la v_j|$ and $\Pi_{\overline{v}}=\sum_{j=1}^k |\overline{v_j}\ra \la \overline{v_j}|$. Then we can observe that
\begin{align*}
    \sum_{j,j'} |\la {v_j}|\Om_d|\overline{v_{j'}}\ra|^2=\text{Tr}(\Pi_v \Om_d \Pi_{\overline{v}}\Om_d^{*})&\geq \text{dim}\big(\text{Ran}(\Pi_{v})\cap \text{Ran}(\Om_d \Pi_{\overline{v}}\Om_d^{*})\big),
    \end{align*}
and the right hand side is further equal to
    \begin{align*}
    \text{dim}\big(\text{Ran}(\Pi_{v})\big)+\text{dim}\big(\text{Ran}(\Om_d \Pi_{\overline{v}} \Om_d^{*})\big)-\text{dim}\big(\text{Ran}(\Pi_{v})+\text{Ran}(\Om_d \Pi_{\overline{v}}\Om_d^{*})\big).
\end{align*}
Since both two projections $\Pi_v$ and $\Om_d\Pi_{\overline{v}}\Om_d^*$ have rank $k$, one has $\sum_{j,j'} |\la {v_j}|\Om_d|\overline{v_{j'}}\ra|^2 \geq 2k-d$, and hence establish the advertised minimum \cref{eq-MinOptimize}.
\end{proof}

Next, we collect several properties of the polynomial $f_{u}^{(d)}(x,y)$ defined in \cref{eq-conic}.

\begin{lemma} \label{lem-Conic}
Let $f_u^{(d,k)}(x,y):= \big(1-x-(1+d)y \big) \big(1-(1-kd)x-y \big)+d^2uxy$ for $t\geq 0$.
\begin{enumerate}
    \item For every $u\geq 0$, the conic section $f_{u}^{(d,k)}(x,y)=0$ represents a (possibly degenerate) hyperbola passing through the four points
        $$(1,0), \quad \Big (-\frac{1}{kd-1}, 0 \Big),\quad (0,1), \quad \Big(0,\frac{1}{d+1} \Big).$$
    Furthermore the curve degenerates (into two lines) when $t=0$ or $u=k$.

    \item $f_{k-1}^{(d,k)}(x,y)=0$ is tangent to the line $(1-(k-1)d)x+(1+d)y=1$ at $(0,\frac{1}{d+1})$.

    \item $f_{2k-d}^{(d)}(x,y)=0$ passes through an additional point $(-\frac{2}{d^2-d-2}, -\frac{d}{d^2-d-2})$ and is tangent to the line $(1-d(d-k))x+y=1$ at $(0,1)$.
\end{enumerate}

\end{lemma}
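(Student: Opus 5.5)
This lemma is elementary algebra about the explicit conic $f_u^{(d,k)}$, and I will prove it by direct substitution and factoring. Write $f_u = L_1L_2 + d^2u\,xy$ with $L_1 = 1-x-(1+d)y$ and $L_2 = 1-(1-kd)x-y$.

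\textbf{Item (1).} On the axes the term $d^2uxy$ vanishes, so the zeros there are those of $L_1L_2$. Setting $y=0$ gives $x=1$ (from $L_1$) and $x=\frac{1}{1-kd}=-\frac{1}{kd-1}$ (from $L_2$). Setting $x=0$ gives $y=\frac{1}{d+1}$ (from $L_1$) and $y=1$ (from $L_2$). So the four listed points lie on the curve for every $u$.

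For the type, look at the quadratic part $Q(x,y) = (x+(1+d)y)((1-kd)x+y) + d^2u\,xy$. Its discriminant is
$$\big((1+d)(1-kd)+1+d^2u\big)^2 - 4(1-kd)(1+d).$$
Since $(1-kd)(1+d)<0$ for $k\ge1$, this is strictly positive, so the nondegenerate conic is a hyperbola.

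Degeneracy is detected by the $3\times3$ determinant of the associated symmetric matrix, which is a polynomial in $u$. It vanishes at $u=0$, where $f_0=L_1L_2$. It should also vanish at $u=k$, where I expect $f_k$ to factor as
$$(1-x-y)\big(1+(kd-1)x-(1+d)y\big)$$
or a similar pairing of lines through the four points. I would confirm this by expanding both sides and matching the $xy$ coefficient. The lines through the four points pair up in only three ways, and pencil-of-conics theory says the degenerate members of the pencil through four general points are exactly these three pairs. Each pair corresponds to a unique $u$ (or $u=\infty$), which gives the degeneracy claim.

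\textbf{Item (2).} Tangency at $P=(0,\frac{1}{d+1})$ means the gradient of $f_{k-1}$ at $P$ is proportional to $(1-(k-1)d,\,1+d)$. At $P$ we have $L_1=0$, so
$$\nabla f_u(P) = L_2(P)\,\nabla L_1 + d^2u\,(y,0)\big|_P.$$
Here $L_2(P)=\frac{d}{d+1}$ and $\nabla L_1 = (-1,-(1+d))$. This gives
$$\nabla f_u(P) = \Big(-\frac{d}{d+1}+\frac{d^2u}{d+1},\; -d\Big) = \frac{d}{d+1}\,\big(du-1,\; -(d+1)\big).$$
This is proportional to $(1-(k-1)d,\,1+d)$ exactly when $u=k-1$.

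\textbf{Item (3).} Use the same method at $(0,1)$, where $L_2=0$ and $L_1=-d$. The gradient is
$$\nabla f_u = -d\,\nabla L_2 + d^2u\,(1,0) = \big(d(1-kd)+d^2u,\; d\big).$$
This is proportional to $(1-d(d-k),\,1)$ iff $1-kd+du = 1-d^2+dk$, i.e. $u=2k-d$.

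For the additional point, substitute $(x,y)=\big(-\frac{2}{D},-\frac{d}{D}\big)$ with $D=d^2-d-2=(d-2)(d+1)$. Then
$$L_1 = \frac{D+2+d(d+1)}{D} = \frac{2d^2}{D}, \qquad L_2 = \frac{D+2(1-kd)+d}{D} = \frac{d^2-2kd}{D}.$$
Also $d^2u\,xy = \frac{2d^3u}{D^2}$. The sum is
$$\frac{2d^3\big(d-2k+u\big)}{D^2},$$
which vanishes at $u=2k-d$.

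The only step that is not routine is the $u=k$ degeneracy. I will handle it by exhibiting the factorization explicitly; if the guessed pairing is wrong, I will instead compute the $3\times3$ determinant directly as a polynomial in $u$.
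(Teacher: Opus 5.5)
Your proposal is correct, and the computations check out, but each step takes a different route from the paper.

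\textbf{Hyperbola type.} You use the discriminant of the quadratic part, $B^2-4AC>0$ because $AC=(1-kd)(1+d)<0$, which works for every real $u$. The paper argues geometrically: $(0,\frac{1}{d+1})$ lies inside the triangle spanned by the other three points, so no ellipse or parabola can pass through all four.

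\textbf{Degeneracy at $u=k$.} Your guessed pairing is exactly right: $f_k^{(d,k)}=(1-x-y)\big(1+(kd-1)x-(1+d)y\big)$. The $xy$-coefficient of $L_1L_2$ is $2+d-kd-kd^2$, and adding $kd^2$ gives $2+d-kd$, which matches the product; the other coefficients agree automatically. The paper's proof never checks this degeneracy, so your explicit factorization fills that in. Your pencil-of-conics argument also gives the converse, that $u=0$ and $u=k$ are the only degenerate values.

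\textbf{Tangency.} You compute $\nabla f_u$ at the point and show that proportionality to the line's normal forces $u=k-1$ (resp.\ $u=2k-d$), so you get an iff. The paper instead substitutes the line into $f_{k-1}^{(d,k)}$ and obtains $-d^2(k-1)p^2$. That gives the double root and also the sign: $f_{k-1}^{(d,k)}\le 0$ along the line, with equality only at $(0,\frac{1}{d+1})$.

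This sign information is what the proof of Theorem~\ref{thm-SympkPos} actually uses to discard the constraint $(1-(k-1)d)p+(1+d)q\le 1$. On your route you would recover it by noting that a tangent line meets a nondegenerate conic only at its point of tangency, so $f$ has constant sign on it.

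\textbf{Extra point in (3).} Your explicit check that $(-\frac{2}{d^2-d-2},-\frac{d}{d^2-d-2})$ lies on the conic goes beyond the paper, which only calls (3) ``analogous''.
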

\begin{proof} 
(1) It is straightforward to verify that the four points above lie on the curve $f_u^{(d,k)}(x,y)=0$. Since the point $(0,\frac{1}{d+1})$ lies in the interior of the triangle determined by the other three points $(1,0),(-\frac{1}{kd-1}, 0)$, and $(0,1)$, it follows that the conic $f_u^{(d,k)}(x,y)=0$ is a hyperbola.

(2) Assume that a point $(p,q)\in \Real^2$ is on the line $(1-(k-1)d)x+(1+d)y=1$. Then $1-p-(1+d)q=-d(k-1)p$ and $1-(1-kd)p-q=d(p+q)$, and therefore,
    $$f_{k-1}^{(d,k)}(p,q)=-d^2(k-1)p(p+q)+d^2(k-1)pq=-d^2(k-1)p^2\leq 0,$$
and moreover, $f_{k-1}^{(d,k)}(p,q)=0$ if and only if $(p,q)=(0,\frac{1}{d+1})$.

The proof of (3) is analogous.
\end{proof}

\begin{proof}[\textbf{Proof of Theorem \ref{thm-SympkPos}}]
For an orthonormal subset $\{v_1,\ldots, v_k\}$ in $\Comp^d$, the associated bipartite matrix in \eqref{eq40} is given by
    $$C_k^v(\Le_{p,q})=\frac{1-p-q}{d}I_k\otimes I_d+ kp|\om_k^v\ra\la \om_k^v|+qF_k^{v,\Om},$$
where $\begin{cases} |\om_k^v\ra=\frac{1}{\sqrt{k}}\sum_{j=1}^k|j\ra\otimes |v_j\ra\in \Comp^k\otimes \Comp^d,\\
F_k^{v,\Om}=\sum_{i,j=1}^k|i\ra\la j|\otimes |\Om_d\overline{v_j}\ra\la \Om_d \overline{v_i}|\in \M{k}\otimes \M{d}.\end{cases}$ 
Moreover, we can write $F_k^{v,\Om}=\Pi_{\mathcal{S}}^{v,\Om}-\Pi_{\mathcal{A}}^{v,\Om}$, where 
$\Pi_{\mathcal{S}}^{v,\Om}$ and $\Pi_{\mathcal{A}}^{v,\Om}$  are orthogonal projections onto the (symmetric-like and antisymmetric-like) spaces 
\begin{center}
    ${\rm span}\{\frac{|i\ra| \Om_d 
\overline{v_j}\ra+|j\ra|\Om_d \overline{v_i}\ra}{\sqrt{2}}: 1\leq i\leq j\leq k\}$ 
and ${\rm span}\{\frac{|i\ra|\Om_d \overline{v_j}\ra-|j\ra|\Om_d \overline{v_i}\ra}{\sqrt{2}}: 1\leq i< j\leq k\}$,
\end{center}
respectively. Note that ${\rm Ran}(\Pi_{\mathcal{S}}^v)\perp {\rm Ran}(\Pi_{\mathcal{A}}^v)$, and $|\om_k^v\ra\perp {\rm Ran}(\Pi_{\mathcal{S}}^{v,\Om})$ since
    $$\left \la \om_k^v\Bigg|\frac{e_i\otimes \Om_d \overline{v_j} + e_j\otimes \Om_d \overline{v_i}}{\sqrt{2}}\right \ra=\frac{1}{\sqrt{2k}}(\la v_i|\Om_d| \overline{v_j}\ra + \la v_j| \Om_d |\overline{v_i}\ra)=0$$
for all $1\leq i\leq j\leq k$ (observation (2) in the proof of Lemma \ref{lem-Symp-optimization}). Therefore, after rewriting $C_k^v(\Le_{p,q})$ into the orthogonal decomposition
    $$C_k^v(\Le_{p,q})=\left(A(I_{kd}-\Pi_{\mathcal{S}}^{v,\Om})+ kp|\om_k^v\ra\la \om_k^v|-q\Pi_{\mathcal{A}}^{v,\Om}\right)\oplus \left(A+q\right)\Pi_{\mathcal{S}}^v$$
with $A:=\frac{1-p-q}{d}$, the desired condition $C_k^v(\Le_{p,q})\geq 0$ is equivalent to $A+q\geq 0$ and 
\begin{equation}\label{eq-pos}
    A(I_{kd}-\Pi_{\mathcal{S}}^{v,\Om})+ kp |\om_k^v\ra\la \om_k^v| - q\Pi_{\mathcal{A}}^{v,\Om}\geq 0.
\end{equation}
A difficulty with the condition \eqref{eq-pos} is that $|\om_k^v\ra\la \om_k^v|$ and $\Pi_{\mathcal{A}}^{v,\Om}$ do not commute in general and are therefore not simultaneously diagonalizable, unlike the case $|\om_d\ra\la \om_d|$ and $\Pi_{\A}^{\Om}=\Ad_{I_d\otimes \Om_d}(\Pi_\A)$ considered in Corollary \ref{cor-Symp-proj}. To overcome this, let us define $|\xi_1\ra=\Pi_{\mathcal{A}}^{v,\Om}|\om_k^v\ra\in {\rm Ran}(\Pi_{\mathcal{A}}^{v,\Om})$ and  $|\xi_2\ra=|\om_k^v\ra - |\xi_1\ra \perp |\xi_1\ra$. Then 
\[\xi_2\perp ({\rm Ran}(\Pi_{\mathcal{S}}^{v,\Om})\cup {\rm Ran}(\Pi_{\mathcal{A}}^{v,\Om})).\]
Therefore, after identifying $|\om_k^v\ra\la \om_k^v|=\begin{pmatrix}
    |\xi_1\ra \\ |\xi_2\ra
\end{pmatrix} \big(\la \xi_1|\; \la \xi_2| \big) = \begin{pmatrix}
    |\xi_1\ra\la \xi_1| & |\xi_1\ra\la \xi_2| \\ |\xi_2\ra\la \xi_1| & |\xi_2\ra\la \xi_2|
\end{pmatrix}$, one has the following block matrix decomposition
\begin{align} \label{eq-OOblock}
    &A(I_{kd}-\Pi_{\mathcal{S}}^{v,\Om})+ kp |\om_k^v\ra\la \om_k^v| - q\Pi_{\mathcal{A}}^{v,\Om} \nonumber \\
    &\cong\begin{pmatrix} (A-q)\Pi_{\mathcal{A}}^{v,\Om} + kp|\xi_1\ra\la \xi_1| & kp|\xi_1\ra\la \xi_2| \\ kp|\xi_2\ra\la \xi_1| & A(I-\Pi_{\mathcal{S}}^{v,\Om}-\Pi_{\mathcal{A}}^{v,\Om})+ kp|\xi_2\ra\la \xi_2|\end{pmatrix}.
\end{align}
We remark here that  ${\rm rank}(\Pi_{\mathcal{A}}^{v,\Om})=\frac{k(k-1)}{2}\geq 2$ unless $k\leq 2$ and ${\rm rank}(I-\Pi_{\mathcal{S}}^v-\Pi_{\mathcal{A}}^v)=d^2-k^2\geq 2$ whenever $k<d$. Therefore, the block matrix in \eqref{eq-OOblock} is positive semidefinite if and only if
\begin{equation} \label{eq-OOblock2}
\begin{cases}
    (a)& A-q\geq 0 \text{ (if $k>2$)},\\
    (b)& A\geq 0 \text{ (if $k<d$)},\\
    (c)& A-q+kp\|\xi_1\|^2\geq 0,\\
    (d)& A+kp\|\xi_2\|^2\geq 0,\\
    (e)& \left(A-q+kp\|\xi_1\|^2\right)\left(A+kp\|\xi_2\|^2\right)\geq (kp)^2\|\xi_1\|^2\|\xi_2\|^2.
\end{cases}
\end{equation}
Since $\|\xi_1\|^2+\|\xi_2\|^2=\|\om_k^v\|^2=1$, the conditions $(d)$ and $(e)$ are respectively equivalent to
\begin{equation}
    \begin{cases}
        (d')& A+kp-kp\|\xi_1\|^2\geq 0,\\
        (e')& (A-q)(A+kp)+ kpq\|\xi_1\|^2=\frac{1}{d^2}f_{k\|\xi_1\|^2}^{(d,k)}(p,q)\geq 0,
    \end{cases}
\end{equation}
where the equality in $(e')$ is a simple comparison with Eq. \eqref{eq-conic}. Note that the first two conditions (a) and (b) are independent of the choices of an orthonormal subset $\{v_i\}_{i=1}^k\subset \Comp^d$, and that  the other inequalities in $(c)$, $(d')$ and $(e')$ are linear in $\|\xi_1\|^2$. Since the inequalities in $(c)$, $(d')$ and $(e')$ should hold for all possible choices of $\{v_i\}_{i=1}^k\subseteq \Comp^d$, it suffices to consider the conditions for the maximum and minimum values of $\|\xi_1\|^2$. We can further compute that
\begin{align*}
    \|\xi_1\|^2&=\la \om_k^v|\Pi_{\mathcal{A}}^{v,\Om}|\om_k^v\ra\\
    &=-\la \om_k^v|F_k^{v,\Om}|\om_k^v\ra \quad (\,\because |\om_k^v\ra\perp {\rm Ran}(\Pi_{\mathcal{S}}^{v,\Om}))\\
    &=\frac{1}{k}\sum_{j,j'}\la v_j|\Om_d |\overline{v}_{j'}\ra \cdot \la \overline{v}_{j'}| \Om_d^{\top} |v_j\ra=\frac{1}{k}\sum_{j,j'}|\la v_j| \Om_d| \overline{v}_{j'}\ra|^2,
\end{align*}
and therefore, we can apply  Lemma \ref{lem-Symp-optimization} to conclude that
    $$\max \|\xi_1\|^2 = \frac{2}{k}\left\lfloor \frac{k}{2}\right\rfloor, \quad \min \|\xi_1\|^2 = \frac{1}{k}\max(2k-d,0).$$
To summarize, $\Le_{p,q}$ is $k$-positive ($1< k < d$) if and only if the six inequalities $(a)$, $(b)$, $(c)$, $(d')$, $(e')$, and $A+q\geq 0$ hold for $\|\xi_1\|^2\in \set{\frac{2}{k}\lfloor \frac{k}{2}\rfloor , \frac{\max(2k-d,0)}{k}}$ and $A=\frac{1-p-q}{d}$ (Here we note that the condition $(a)$ is the same with $(c)$ when $\|\xi_1\|^2=0$, so we may include $(a)$ even if $k=2$).

Suppose now that $2\leq k\leq \frac{d}{2}$. Then the conditions for $k\|\xi_1\|^2\in \{2\lfloor \frac{k}{2}\rfloor,0\}$ reduce to
\begin{equation} \label{eq-kPos1}
    A\pm q\geq 0, \quad A-q+ 2\Big\lfloor \frac{k}{2} \Big\rfloor p\geq 0, \quad A+ kp\geq 0, \quad (A-q)(A+ kp) + 2\Big\lfloor \frac{k}{2} \Big\rfloor pq \geq 0,
\end{equation}
where we omitted the conditions $A \geq 0$ (implied by $A \pm q \geq 0$) and $A + (k - 2\lfloor \frac{k}{2} \rfloor)p \geq 0$ (follows from $A \geq 0$ and $A + kp \geq 0$). Let us now further examine these conditions separately according to the parity of $k$:
\begin{itemize}
    \item If $k$ is even, then $2\lfloor \frac{k}{2}\rfloor = k$ gives 
        $$A\pm q\geq 0, \quad A-q+kp\geq 0, \quad A+kp \geq 0, \quad A(A-q+kp)\geq 0,$$
    where the last inequality may again be omitted (implied by the second one and $A\geq 0$). Therefore, we obtain the following minimal conditions:
        $$p+(1\pm d)q\leq 1, \quad (1-kd)p+(1+d)q \leq 1, \quad (1-kd)p+q\leq 1.$$

    \item If $k$ is odd, then $2\lfloor \frac{k}{2}\rfloor = k-1$ similarly gives 
    \begin{align*}
        p+(1\pm d)q\leq 1, \quad (1-(k-1)d)p+(1+d)q \leq 1, \quad (1-kd)p+q\leq 1, \quad f_{k-1}^{(d,k)}(p,q)\geq 0.
    \end{align*}
    However, the second inequality above is implied by the others, by Lemma \ref{lem-Conic} (2). Therefore, the above conditions are further reduced to
    \begin{align*}
        p+(1\pm d)q\leq 1, \quad (1-kd)p+q\leq 1, \quad f_{k-1}^{(d,k)}(p,q)\geq 0.
    \end{align*}
\end{itemize}
This completes the proof of (1) and (2) of Theorem~\ref{thm-SympkPos}. The remaining assertions (3) and (4) (i.e., the case $\frac{d}{2}<k<d$) can be proved analogously and are left to the reader.
\end{proof}

\section{Proof of \cref{thm-SympSch} (2), (3), and (4)} \label{app-SN-hard}

This section is devoted to provide the complete proof of Theorem \ref{thm-SympSch}. We first recall several notions and methods developed in \cite{PY24}. Let us first denote by 
\begin{equation}\label{eq43}
H_{p,q}=\left\{(x,y)\in \mathbb{R}^2: px+qy\leq 1 \right\}
\end{equation}
for all $(p,q)\in \mathbb{R}^2\setminus\left\{(0,0)\right\}$, and by $\alpha:\mathbb{R}^2\rightarrow \mathbb{R}^2$ a linear isomorphism given by 
\begin{equation}\label{eq-alpha}
    \alpha: \begin{pmatrix}
        x \\ y
    \end{pmatrix} \mapsto -(d+1) \begin{pmatrix}
        d-1 & -1 \\ -1 & d-1
    \end{pmatrix} \begin{pmatrix}
        x \\ y
    \end{pmatrix}.
\end{equation}
Then by Eq. \eqref{eq-Symp-paring}, we have the following identity:
\begin{equation} \label{eq-SNPolar}
    \mathbb{S}_k=\bigcap_{(p,q)\in {\rm ext}(P_k)}H_{\alpha(p,q)}=\alpha^{-1}\bigg(\bigcap_{(p,q)\in {\rm ext}(P_k)}H_{p,q}\bigg),
\end{equation} 
where the last equality is from the observation $H_{\alpha(p,q)} = \alpha^{-1}(H_{p,q})$. Recall that for $k$ in cases (2),(3), and (4) of Theorem \ref{thm-SympSch}, the set ${\rm ext}(\mathbb{P}_k)$ contains one or two segments of conic sections (which are always non-degenerate hyperbolas according to Lemma \ref{lem-Conic}). In this situation, it is often useful to exploit the notions of \emph{pole–polar duality} and \emph{dual curves} from projective geometry \cite{TextBK,Cox03} to describe the regions in \eqref{eq-SNPolar}.

The pole–polar duality provides a bijective correspondence between the space $\mathbb{R}^2\setminus \left\{(0,0)\right\}$ and the set of all lines in $\Real^2$ that are not passing through the origin $(0,0)$. To a point $P=(p,q)\in \mathbb{R}^2\setminus \left\{(0,0)\right\}$ we associate the line
$$l=\left\{(x,y)\in \mathbb{R}^2: px+qy=1\right\}.$$
We call $l$ the \emph{polar} of $P$ and $P$ the \emph{pole} of $l$ (with respect to the unit circle $x^2+y^2=1$), and denote by $l=:{\rm Polar}(P)$ and $P=:{\rm Pole}(l)$ respectively. 
Now for a smooth curve $\gamma:I\to \Real^2\setminus \{(0,0)\}$, we associate its \textit{dual curve} $\tilde{\gamma}:I\to \Real^2\setminus \{(0,0)\}$ defined by
    $$\tilde{\gamma}(t):={\rm Pole}(l_t), \quad t\in I,$$ 
where $l_t$ denotes the line tangent to $\gamma$ at $\gamma(t)$. More explicitly, if $\gamma(t)=(p(t),q(t))$ is a smooth parametrization, the  dual curve $\tilde{\gamma}$ is given by 
\begin{equation}\label{eq-dual-curve}
    \tilde{\gamma}(t)=\left(\frac{q'(t)}{p(t)q'(t)-q(t)p'(t)}, \frac{-p'(t)}{p(t)q'(t)-q(t)p'(t)}\right).
\end{equation}
The following lemma provides the connection between $\tilde{\gamma}$ and the intersection $\bigcap_{t\in I}H_{\gamma(t)}$.

\begin{lemma} [{\cite[Lemma 4.5]{PY24}}] \label{lem-curve}
Let $I$ be an open interval and $\gamma: I\to \Real^2\setminus \{(0,0)\}$ be a smooth, regular, and strictly convex curve. Suppose that for every $t\in I$, the tangent lines $l_t$ at $\gamma(t)$ do not pass through the origin $(0,0)$, and that the origin lies in the same (closed) half-plane as $\gamma$ with respect to $l_t$. Then the dual curve $\tilde{\gamma}$ from Eq. \eqref{eq-dual-curve} satisfies the following properties.
\begin{enumerate}
    \item ${\rm Polar}(\gamma(t))$ is tangent to $\tilde{\gamma}$ at $\tilde{\gamma}(t)$ for each $t\in I$. 

    \item For any closed interval $[t_0,t_1]\subset I$, the intersection
        $$\bigcap_{t\in [t_0,t_1]} H_{\gamma(t)}$$
    is the largest convex region containing $(0,0)$, which is bounded by two lines ${\rm Polar}(\gamma(t_0))$ and ${\rm Polar}(\gamma(t_1))$ as well as the image of the dual curve $\tilde{\gamma}|_{[t_0,t_1]}$ (as in Figure \ref{fig-SNPolar}).

    \item If the image of $\gamma$ is a segment of a conic, then so is $\tilde{\gamma}$.
\end{enumerate}
\end{lemma}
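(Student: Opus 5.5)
The lemma to prove is \cref{lem-curve}. I would work with the explicit formula \eqref{eq-dual-curve} for $\tilde\gamma$. The tangent line $l_t$ at $\gamma(t)=(p,q)$ has direction $(p',q')$. It is the set of $(x,y)$ with $q'x-p'y=pq'-qp'$. The hypothesis that $l_t$ misses the origin says exactly that $D(t):=pq'-qp'\neq0$. Dividing by $D$ puts $l_t$ in the form $\tilde p x+\tilde q y=1$, so $\tilde\gamma(t)=(q'/D,-p'/D)$ is indeed ${\rm Pole}(l_t)$.

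\textbf{Part (1).} I check two facts.
\begin{enumerate}
\item $\tilde\gamma(t)\in{\rm Polar}(\gamma(t))$. This holds because $p\tilde p+q\tilde q=(pq'-qp')/D=1$.
\item $\tilde\gamma'(t)$ is orthogonal to $\gamma(t)$. Differentiate the identities $\langle\gamma,\tilde\gamma\rangle=1$ and $\langle\gamma',\tilde\gamma\rangle=0$; the second is immediate from the formula. This gives $\langle\gamma,\tilde\gamma'\rangle=-\langle\gamma',\tilde\gamma\rangle=0$.
\end{enumerate}
So $\tilde\gamma$ moves along the direction of ${\rm Polar}(\gamma(t))$, which is exactly tangency. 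I also need $\tilde\gamma'\neq0$, i.e. regularity of the dual curve. Writing $\tilde\gamma'=(q''D-q'D')/D^2$ and so on, one finds $\tilde\gamma'=\frac{p'q''-q'p''}{D^2}(-q,p)$. Strict convexity of $\gamma$ (nonvanishing curvature) makes the prefactor nonzero.

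\textbf{Part (2).} The key observation is that the family of half-planes $H_{\gamma(t)}$ has envelope $\tilde\gamma$, by (1). Each boundary line ${\rm Polar}(\gamma(t))$ touches $\tilde\gamma$ at $\tilde\gamma(t)$.

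I first show $\tilde\gamma$ is itself a locally strictly convex curve whose tangent lines support the region. By the computation above, the sign of $p'q''-q'p''$ is constant. Together with the side condition (the origin and $\gamma$ lie in the same half-plane of $l_t$), this should give that each $\tilde\gamma(s)$ satisfies $\langle\gamma(t),\tilde\gamma(s)\rangle\le1$, with equality only at $s=t$. Expressed through pole-polar duality, this is the same as saying that $\gamma(t)$ lies on the origin side of $l_s$, which is the hypothesis.

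Hence the region $K=\bigcap_{t\in[t_0,t_1]}H_{\gamma(t)}$ contains the arc $\tilde\gamma([t_0,t_1])$ on its boundary. Near each $\tilde\gamma(t)$, only the constraint with parameter $t$ is active. Beyond the endpoints, the constraints from $t_0$ and $t_1$ are the extreme ones, so they cut out the two rays of ${\rm Polar}(\gamma(t_0))$ and ${\rm Polar}(\gamma(t_1))$ emanating from $\tilde\gamma(t_0)$ and $\tilde\gamma(t_1)$.

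$K$ is convex, as an intersection of half-planes, and contains $0$. A point $x$ lies in $K$ if and only if $\max_t\langle\gamma(t),x\rangle\le1$. For $x$ in a given angular sector, this maximum is attained either at an interior $t$, where $x$ is on the ray toward $\tilde\gamma(t)$, or at an endpoint. This describes the boundary precisely.

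\textbf{Part (3).} If $\gamma$ lies on a conic $Q(x)=X^\top MX=0$, with $X=(x,y,1)$ and $M$ nondegenerate, then the polar of a point with respect to the unit circle corresponds to the standard projective duality. The tangent line at $X$ is $MX$, and under the pole map its pole is again a point up to sign. So the dual points satisfy $Y^\top M^{-1}Y=0$ after the sign and normalization coming from $x^2+y^2=1$. This is a conic, the dual conic conjugated by ${\rm diag}(1,1,-1)$.

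\textbf{Main obstacle.} The delicate step is (2): justifying globally, not just locally, that the envelope arc and the two end polars exactly bound $K$. In particular I must show that no other constraint cuts into the arc. I would handle this via the inequality $\langle\gamma(t),\tilde\gamma(s)\rangle\le1$, proved from convexity of $\gamma$ and the origin-side hypothesis.
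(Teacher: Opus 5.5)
The paper does not prove this lemma. It quotes it from \cite[Lemma 4.5]{PY24} and uses it as a black box in Appendix~\ref{app-SN-hard}, so there is no in-text argument to compare yours against.

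On its own terms, your proof is correct.
\begin{itemize}
\item In (1), the identities $\langle\gamma,\tilde\gamma\rangle=1$ and $\langle\gamma',\tilde\gamma\rangle=0$ are right, as is the formula $\tilde\gamma'=\frac{p'q''-q'p''}{D^2}(-q,p)$.
\item In (3), the pole of the tangent line $MX_0$ is $Y=JMX_0$ with $J=\mathrm{diag}(1,1,-1)$, which gives $Y^\top JM^{-1}JY=0$. Here $M$ is nondegenerate, because a strictly convex arc cannot lie on a pair of lines.
\item In (2), you correctly observe that $\langle\gamma(t),\tilde\gamma(s)\rangle\le 1$, with equality only for $s=t$, is nothing but the side hypothesis together with strict convexity. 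That inequality is exactly what resolves the obstacle you name.
\end{itemize}

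Two points in (2) need tightening.

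First, make the radial step explicit. For $x\neq 0$, the critical points of $s\mapsto\langle\gamma(s),x\rangle$ are exactly the $s$ with $x=\lambda\tilde\gamma(s)$. By your key inequality, such an $s$ is a strict global maximum if $\lambda>0$ and a strict global minimum if $\lambda<0$. Hence, outside the cone $\{\lambda\tilde\gamma(s):\lambda>0,\ s\in[t_0,t_1]\}$, the maximum over $[t_0,t_1]$ is attained at an endpoint. Inside the cone, each ray meets the arc exactly once, since the same inequality makes $s\mapsto\tilde\gamma(s)/\|\tilde\gamma(s)\|$ injective.

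Second, the end constraints need not cut out two \emph{rays}. When $0\in\mathrm{int}\,\mathrm{conv}\,\gamma([t_0,t_1])$, the intersection is bounded. An example is an arc, longer than a half-circle, of a circle centred at $0$. In that case the region is enclosed by the arc and two segments of $\mathrm{Polar}(\gamma(t_0))$ and $\mathrm{Polar}(\gamma(t_1))$ meeting at a vertex. The lemma's wording covers both situations, so just state that step conditionally.
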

\begin{figure}[htb!]
    \centering
    \includegraphics[scale=0.45]{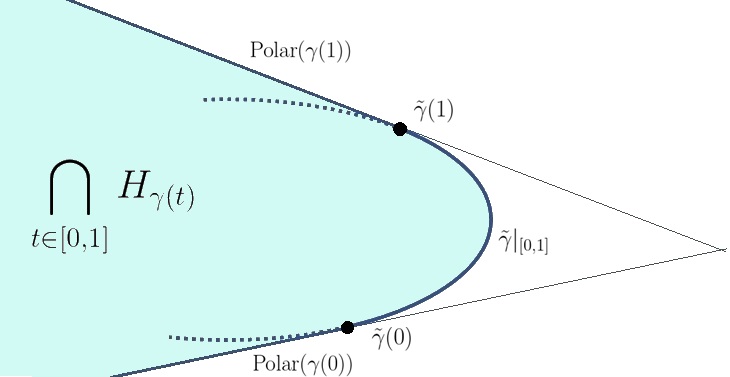}
    \caption{Geometric description of the intersection $\displaystyle \bigcap_{t\in [0,1]} H_{\gamma(t)}$}
    \label{fig-SNPolar}
\end{figure}

For the remainder of this section, and for each $0<u<k$, we consider a smooth curve $\gamma:(-\eps,1+\eps)\to \Real^2$ parametrizing one branch of the hyperbola $f_{u}^{(d,k)}(x,y)=0$ (that is, $f_{u}^{(d,k)}(\gamma(t))\equiv 0$) whose convex hull contains the origin. Then Lemma \ref{lem-curve} implies that the dual curve $\tilde{\gamma}$ represents a conic section and so is the curve $\alpha^{-1}\circ \tilde{\gamma}$ (recall that $\alpha$ defined in Eq. \eqref{eq-alpha} is a linear isomorphism). In other words, there exists a unique quadratic polynomial $\tilde{f}_u^{(d,k)}(x,y)$ satisfying $\tilde{f}_u^{(d,k)}(\alpha^{-1}(\tilde{\gamma}(t)))\equiv 0$. Consequently, a segment of the conic $\tilde{f}_u^{(d,k)}(x,y)=0$ determines a portion of the boundary of the set $\alpha^{-1}\big(\bigcap_{t\in [0,1]} H_{\gamma(t)}\big)$.

For completeness, we provide in detail how to determine the polynomial $\tilde{f}_u^{(d,k)}(x,y)$. First, set
    $$\begin{cases}A=1-kd, \quad B=-kd^2-kd+d+2+d^2t, \quad C=d+1, \\ D=kd-2, \quad E=-d-2, \quad  F=1,\end{cases}$$
for simplicity. For each $(p,q)$ on the conic $f_u^{(d,k)}(x,y)=Ax^2+Bxy+Cy^2+Dx+Ey+F=0$, the tangent line $l_{p,q}$ at $(p,q)$ is given by
\begin{equation} \label{eq-tangent}
    (2Ap+Bq+D)x+(2Cq+Bp+E)y+(Dp+Eq+2F)=0.
\end{equation}
Then by Eq. \eqref{eq-dual-curve}, the ``dual point '' ${\rm Pole}(l_{p,q})$ corresponding to $(p,q)$ is 
    $$(\tilde{a},\tilde{b})=-\left(\frac{2Ap+Bq+D}{Dp+Eq+2F}, \frac{2Cq+Bp+E}{Dp+Eq+2F}\right),$$
and $(a,b)=\alpha^{-1}(\tilde{a},\tilde{b})$ becomes a point on the conic $\tilde{f}_u^{(d,k)}(x,y)=0$. 

On the other hand, the conic $f_u^{(d,k)}(x,y)=0$ passes through four points
    $$(0,1),\quad  (1,0), \quad \left(0,\frac{1}{d+1}\right), \quad \left(\frac{-1}{kd-1},0\right),$$
independent of $u$, and $f_{2k-d}^{(d,k)}(x,y)=0$ further passes through $\left(\frac{-2}{d^2-d-2},\frac{-d}{d^2-d-2}\right)$ by Lemma \ref{lem-Conic}.
Therefore, by applying the above procedure to these choices of $(p,q)$, we can find (at least) four explicit points $(a,b)$. Furthermore, Lemma~\ref{lem-curve} (1) implies that the tangent lines to the curve $\alpha^{-1}\circ \tilde{\gamma}$ at these points $(a,b)$ coincide with the boundary lines of the half-planes $H_{\alpha(p,q)}$, that is, 
    $$\{(x,y): (p(1-d)+q)x+((1-d)q+p)y=1/(d+1)\}.$$
Consequently, we obtain at least \emph{four points together with their corresponding tangent lines} on
$\alpha^{-1} \circ \tilde{\gamma}$, which altogether uniquely determine the conic $\tilde{f}_u^{(d,k)}(x,y)=0$ for each $d$ and $k$. We summarize all the computations in three \cref{tab-ellipse,tab-ellipse1,tab-ellipse2}: one for general $u\in (0,k)$ and two for the specific values $u=k-1$ and $u=2k-d$, respectively.

\begin{remark} \label{rmk-ellipse}
The four tangent lines listed in \cref{tab-ellipse} encloses a parallelogram with vertices
\begin{center}
    $\big(\frac{-2}{d^2-d-2},\frac{-d}{d^2-d-2}\big), \quad \big(\frac{-1}{d^2-d-2},\frac{-1}{d^2-d-2}\big), \quad \big(\frac{kd-k-1}{d^2-d-2},\frac{k-1}{d^2-d-2}\big), \quad \big(\frac{kd-k-2}{d^2-d-2},\frac{-d+k}{d^2-d-2}\big).$
\end{center}
Furthermore, if $0<u<k$, one verifies that the four points $(a,b)$ listed in \cref{tab-ellipse} lie on the four distinct sides of this parallelogram. Therefore, the conic $\tilde{f}_{u}^{(d,k)}(x,y)=0$ defines an \textit{ellipse} inscribed in the above parallelogram whenever $0<u<k$. 
\end{remark}

\begin{table}[h!]
  \begin{center}
    \caption{Four canonical points $(a,b)$ on $\tilde{f}_{u}^{(d,k)}(x,y)=0$ and the corresponding tangent lines}
    \label{tab-ellipse}
    \begin{tabular}{c|c|c} % <-- Alignments: 1st column left, 2nd middle and 3rd right, with vertical lines in between
       %$\backslash$ 
\hline
    $(p,q)$ & $(a,b)$ & Tangent at $(a,b)$\\
\hline\hline
    $\left(1,0\right)$ &  $\left(\frac{-2k+u}{k(d^2-d-2)}, \frac{-kd+du-u}{k(d^2-d-2)}\right)$ & $(1-d)x+y=\frac{1}{d+1}$\\
\hline
    $\left(0,1\right)$ &  $\left(\frac{kd-1-du-k+u}{d^2-d-2}, \frac{k-u-1}{d^2-d-2}\right)$ & $x+(1-d)y=\frac{1}{d+1}$\\
\hline
    $\left(0,\frac{1}{d+1}\right)$ &  $\left(\frac{-2+du-u}{d^2-d-2}, \frac{-d+u}{d^2-d-2}\right)$ & $x+(1-d)y=1$\\
\hline
    $\left(-\frac{1}{kd-1},0\right)$ &  $\left(\frac{k^2d-k-k^2-u}{k(d^2-d-2)}, \frac{k^2-k-du+u}{k(d^2-d-2)}\right)$ & $(1-d)x+y=-\frac{kd-1}{d+1}$\\
\hline
    \end{tabular}
  \end{center}
\end{table}

\begin{table}[h!]
  \begin{center}
    \caption{Four canonical points $(a,b)$ on $\tilde{f}_{k-1}^{(d,k)}(x,y)=0$ and the corresponding tangent lines}
    \label{tab-ellipse1}
    \begin{tabular}{c|c|c} % <-- Alignments: 1st column left, 2nd middle and 3rd right, with vertical lines in between
       %$\backslash$ 
\hline
    $(p,q)$ & $(a,b)$ & Tangent at $(a,b)$\\
\hline \hline
    $\left(1,0\right)$ &  $\left(\frac{-k-1}{k(d^2-d-2)}, \frac{-d-k+1}{k(d^2-d-2)}\right)$ & $(1-d)x+y=\frac{1}{d+1}$\\
\hline
    $\left(0,1\right)$ &  $\left(\frac{1}{d+1}, 0 \right)$ & $x+(1-d)y=\frac{1}{d+1}$\\
\hline
    $\left(0,\frac{1}{d+1}\right)$ &  $\left(\frac{kd-d-k-1}{d^2-d-2}, \frac{-d+k-1}{d^2-d-2}\right)$ & $x+(1-d)y=1$\\
\hline
    $\left(-\frac{1}{kd-1},0\right)$ &  $\left(\frac{k^2d-2k-k^2+1}{k(d^2-d-2)}, \frac{(k-1)(k-d+1)}{k(d^2-d-2)}\right)$ & $(1-d)x+y=-\frac{kd-1}{d+1}$\\
\hline
    \end{tabular}
  \end{center}
\end{table}

\begin{table}[h!]
  \begin{center}
    \caption{Five canonical points $(a,b)$ on $\tilde{f}_{2k-d}^{(d,k)}(x,y)=0$ and the corresponding tangent lines}
    \label{tab-ellipse2}
    \begin{tabular}{c|c|c} % <-- Alignments: 1st column left, 2nd middle and 3rd right, with vertical lines in between
       %$\backslash$ 
\hline
    $(p,q)$ & $(a,b)$ & Tangent at $(a,b)$\\
\hline \hline
    $\left(1,0\right)$ &  $\left(\frac{-d}{k(d^2-d-2)}, \frac{kd-d^2-2k+d}{k(d^2-d-2)}\right)$ & $(1-d)x+y=\frac{1}{d+1}$\\
\hline
    $\left(0,1\right)$ &  $\left(\frac{-kd+d^2+k-d-1}{d^2-d-2}, \frac{d-k-1}{d^2-d-2}\right)$ & $x+(1-d)y=\frac{1}{d+1}$\\
\hline
    $\left(0,\frac{1}{d+1}\right)$ &  $\left(\frac{-2+2kd-d^2-2k+d}{d^2-d-2}, \frac{2k-2d}{d^2-d-2}\right)$ & $x+(1-d)y=1$\\
\hline
    $\left(-\frac{1}{kd-1},0\right)$ &  $\left(\frac{k^2d-3k-k^2+d}{k(d^2-d-2)}, \frac{(k-d)(k-d+1)}{k(d^2-d-2)}\right)$ & $(1-d)x+y=-\frac{kd-1}{d+1}$\\
\hline
    $\left(\frac{-2}{d^2-d-2},\frac{-d}{d^2-d-2}\right)$ &  $\left(\frac{d}{3d-2k}, \frac{2d-2k}{(d+1)(3d-2k)}\right)$ & $x+(1+d)y=1$\\
\hline
    \end{tabular}
  \end{center}
\end{table}

\begin{lemma} \label{lem-SN-line}
The line $l_1(x,y)=0$ introduced in Eq. \eqref{eq-Symp-line1} passes through two points
    $$\left(\frac{kd-d-k-1}{d^2-d-2}, \frac{-d+k-1}{d^2-d-2}\right), \quad \left(\frac{k^2d-2k-k^2+1}{k(d^2-d-2)}, \frac{(k-1)(k-d+1)}{k(d^2-d-2)}\right)$$
which appear as the points $(a,b)$ in \cref{tab-ellipse1}. Similarly, the line $l_2(x,y)=0$ defined in Eq. \eqref{eq-Symp-line2} passes through two points
    $$\left(\frac{k^2d-3k-k^2+d}{k(d^2-d-2)}, \frac{(k-d)(k-d+1)}{k(d^2-d-2)}\right), \quad \left(\frac{d}{3d-2k}, \frac{2d-2k}{(d+1)(3d-2k)}\right)$$
as listed in \cref{tab-ellipse2}.
\end{lemma}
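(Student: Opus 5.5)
This is a direct verification: substitute each of the four listed points into the corresponding linear polynomial $l_1$ or $l_2$ from \eqref{eq-Symp-line1}--\eqref{eq-Symp-line2}, clear denominators, and check that the result vanishes identically as a polynomial in $d$ and $k$. To keep the algebra short, I will write $N:=d^2-d-2=(d-2)(d+1)$ and multiply each equation $l_j(a,b)=0$ by the common denominator of the coordinates. The factor $(d+1)$ that appears in the constant terms is then absorbed via $N=(d-2)(d+1)$.

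\emph{First point on $l_1$.} For $A_1=\big(\frac{kd-d-k-1}{N},\frac{-d+k-1}{N}\big)$, I multiply $l_1(A_1)$ by $N$. It suffices to show that
$$(k+d-1)(kd-d-k-1)+(kd-k+1)(d-k+1)=(d-2)(2kd+k-d-1).$$
Expanding, the left side is $(k^2d+kd^2-2kd-k^2-d^2-2k+1)+(kd^2-k^2d+2kd+k^2-d-2k+1)$, which equals $2kd^2-d^2-d-4k+2$. The right side is $2kd^2+kd-d^2-d-4kd-2k+2d+2$, which also simplifies to $2kd^2-d^2-d-4k+2$ once the $kd$ terms cancel. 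So $l_1(A_1)=0$.

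\emph{Second point on $l_1$.} For $A_2$, I multiply by $kN$ and check
$$(k+d-1)(k^2d-2k-k^2+1)-(kd-k+1)(k-1)(k-d+1)=k(d-2)(2kd+k-d-1).$$
This is a routine polynomial identity of the same kind. One useful observation is that both sides are of degree at most $3$ in $k$. Hence it suffices to compare coefficients of $k^3,k^2,k^1,k^0$ as polynomials in $d$. Alternatively, one can check equality at four values of $k$, for example $k=0,1,2,-1$, for symbolic $d$.

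\emph{Points on $l_2$.} The same procedure applies to $l_2$. For $B_1$, I multiply by $kN$. For $B_2=\big(\frac{d}{3d-2k},\frac{2d-2k}{(d+1)(3d-2k)}\right)$, I multiply by $(d+1)(3d-2k)$, and the equation to check is
$$(d+1)d(3d-3-k)+(2d-2k)(kd-k-3)=(3d-2k)(d^2+kd-k-3).$$
Both sides expand to $3d^3+(k-2)d^2\cdot\ldots$; checking this, and the analogous identity for $B_1$, is mechanical expansion.

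\emph{Main obstacle.} The only risk is arithmetic slips in the cubic identities for $A_2$ and $B_1$. I would guard against these by the coefficient-in-$k$ comparison described above, and by a numerical sanity check at, for example, $d=4$ and $k=3$.

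As an additional cross-check that the points themselves are correct (and not only that they lie on the lines), I would recompute each tabulated $(a,b)$ as $\alpha^{-1}\big(\mathrm{Pole}(l_{p,q})\big)$ from \eqref{eq-tangent}. This would use $(p,q)=(0,\tfrac1{d+1})$ and $(-\tfrac1{kd-1},0)$ with $u=k-1$ for the $l_1$ points, and $(p,q)=(-\tfrac1{kd-1},0)$ and $(\tfrac{-2}{N},\tfrac{-d}{N})$ with $u=2k-d$ for the $l_2$ points.
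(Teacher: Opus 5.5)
Your plan is the paper's own argument: substitute each point into $l_1$ or $l_2$, clear the denominators $N$, $kN$ and $(d+1)(3d-2k)$ using $N=(d-2)(d+1)$, and check a polynomial identity in $d,k$. The paper's proof says only ``This is straightforward,'' and all four identities are indeed true. However, the algebra you display is wrong, so your verification does not stand as written.

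\textbf{The $A_1$ check.} The correct products are
\[
(k+d-1)(kd-d-k-1)=k^2d+kd^2-3kd-k^2-d^2+1,\qquad (kd-k+1)(d-k+1)=kd^2-k^2d+k^2-2k+d+1.
\]
So the left side is $2kd^2-3kd-d^2+d-2k+2$. Your expansion of $(d-2)(2kd+k-d-1)$ is correct, but its $kd$ terms do not cancel: $kd-4kd=-3kd$. That side also equals $2kd^2-3kd-d^2+d-2k+2$. Your claimed common value $2kd^2-d^2-d-4k+2$ is wrong on both sides; two separate slips happened to agree.

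\textbf{The $B_2$ check.} Both sides equal $3d^3+kd^2-2k^2d-3kd-9d+2k^2+6k$. The $d^2$-coefficient is $k$, not $k-2$ as your hint suggests.

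\textbf{The $A_2$ and $B_1$ checks.} You left these as ``routine,'' but since the lemma consists only of these computations, they should be written out. Both close cleanly.
\begin{itemize}
\item For $A_2$, the $k^3$ and constant terms cancel, and the left side is $(2d^2-3d-2)k^2-(d^2-d-2)k=k(d-2)\bigl((2d+1)k-(d+1)\bigr)=k(d-2)(2kd+k-d-1)$.
\item For $B_1$, the $k^3$ and constant terms again cancel, and the left side is $(d-1)(d-2)k^2+(d-2)(d^2-3)k=k(d-2)(d^2+kd-k-3)$, which matches the right side.
\end{itemize}

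With these corrections the proof is complete, and your numerical test at $d=4$, $k=3$ does pass. Recomputing the table entries via $\alpha^{-1}\circ\mathrm{Pole}$ is a sensible sanity check, but the lemma as stated does not need it.
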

\begin{proof}
This is straightforward.
\end{proof}

We are now ready to prove Theorem \ref{thm-SympSch}

\begin{proof}[\textbf{Proof of Theorem \ref{thm-SympSch}}]
In the case $1<k\leq \frac{d}{2}$ with $k$ odd, we have
    $${\rm ext}(\mathbb{P}_k)=\left\{(1,0), \left(0,\frac{1}{d+1}\right), \left(-\frac{1}{kd-1},0\right),\left(\frac{-1}{kd-k-1},\frac{-k}{kd-k-1}\right)\right\}\,\cup\, \{\gamma_1(t):t\in [0,1]\}$$
by Theorem \ref{thm-SympkPos} (2), where $\gamma_1:(-\eps,1+\eps) \to \Real^2$ is a smooth curve satisfying 
\begin{itemize}
    \item $f_{k-1}^{(d,k)}(\gamma_1(t))\equiv 0$ and the image of $\gamma_1\big|_{[0,1]}$ lies entirely in the second quadrant;
    
    \item $\gamma_1(0)=(0,\frac{1}{d+1})$ and $\gamma_1(1)=(-\frac{1}{kd-1},0)$.
\end{itemize}
Eq. \eqref{eq-SNPolar} then implies that the set $\alpha(\mathbb{S}_k)$ is given by
    $$H_{1,0}\cap  H_{0,\frac{1}{d+1}} \cap H_{\frac{-1}{kd-1},0}\cap H_{\frac{-1}{kd-k-1}, \frac{-k}{kd-k-1}}\cap \bigcap_{t\in [0,1]}H_{\gamma_1(t)},$$
and by the previous discussion, there exists a quadratic polynomial $g_1(x,y):=\tilde{f}_{k-1}^{(d,k)}(x,y)$ such that 
    $$g_1\left ((\alpha^{-1} \circ \tilde{\gamma}_1)(t)\right )\equiv 0.$$ 
and $g_1(x,y)\leq 0$ becomes a filled ellipse, by Remark \ref{rmk-ellipse} and by modifying the sign if necessary. Thanks to Lemma \ref{lem-curve} (2), the set $\alpha^{-1}\big(\bigcap_{t\in [0,1]} H_{\gamma_1(t)}\big)$ is the largest convex region containing the origin and bounded by two lines 
    $$x+(1-d)y=1 \text{\quad and \quad} (1-d)x+y=-\frac{kd-1}{d+1},$$
which are tangent to $\alpha^{-1}\circ \tilde{\gamma}_1$ at 
\begin{center}
    $\alpha^{-1}(\tilde{\gamma}_1(0))=\left(\frac{kd-d-k-1}{d^2-d-2}, \frac{-d+k-1}{d^2-d-2}\right)$ and $\alpha^{-1}(\tilde{\gamma}_1(1))=\left(\frac{k^2d-2k-k^2+1}{k(d^2-d-2)}, \frac{(k-1)(k-d+1)}{k(d^2-d-2)}\right)$,
\end{center}
respectively (\cref{tab-ellipse1}), and further bounded by the ellipse segment $\alpha^{-1}\circ \tilde{\gamma}_1\big|_{[0,1]}$. Note that this set is a union of a region described by three inequalities
    $$x+(1-d)y\le 1, \quad (1-d)x+y\geq -\frac{kd-1}{d+1}, \quad l_1(x,y)\leq 0,$$
and another region from $l_1(x,y)\geq 0$ and $g_1(x,y)\leq 0$, as the equation $l_1(x,y)=0$ describes the line connecting $\alpha^{-1}(\tilde{\gamma}_1(0))$ and $\alpha^{-1}(\tilde{\gamma}_1(1))$ from the definition \eqref{eq-Symp-line1}.

Putting everything together, we obtain the description of the Schmidt number region $\mathbb{S}_k$ given by the system of inequalities in
Eq.~\eqref{eq-Symp-SN1o}, which completes the proof of part~(2).

The remaining proofs of Theorem \ref{thm-SympSch} (3) and (4), i.e., the casees $\frac{2}{d}<k<d$, is analogous to that of (2), so we only provide the sketch of the proof. Indeed, we can proceed with another smooth curve $\gamma_2:(-\eps,1+\eps)\to \Real^2$ satisfying
\begin{itemize}
    \item $f_{2k-d}^{(d,k)}(\gamma_2(t))\equiv 0$ and the image of $\gamma_2\bigm|_{[0,1]}$ lies entirely in the third quadrant;

    \item $\gamma_2(0)=(-\frac{1}{kd-1},0)$ and $\gamma_2(1) = (-\frac{2}{d^2-d-2},-\frac{d}{d^2-d-2})$,
\end{itemize}
and choose $g_2(x,y):=\tilde{f}_{2k-d}^{(d,k)}(x,y)$. Since
    $${\rm ext}(\mathbb{P}_k)=\begin{cases}
        \{(1,0), \left(0,\frac{1}{d+1}\right), \left(-\frac{1}{kd-1},0\right),\left(\frac{-1}{kd-k-1},\frac{-k}{kd-k-1}\right)\}\cup \{\gamma_2(t)\}_{t\in [0,1]} & \text{in the case (3),}\\
        \{(1,0), \left(0,\frac{1}{d+1}\right), \left(-\frac{1}{kd-1},0\right),\left(\frac{-1}{kd-k-1},\frac{-k}{kd-k-1}\right)\}\\
        \qquad\qquad\qquad\qquad\qquad\qquad\qquad\quad \cup \{\gamma_1(t)\}_{t\in [0,1]} \cup \{\gamma_2(t)\}_{t\in [0,1]}& \text{in the case (4),}
    \end{cases}$$
    the same proof, combined with Lemma \ref{lem-SN-line}, works with a minor modification.
\end{proof}

\bibliography{references}

\newcommand{\etalchar}[1]{$^{#1}$}
\def\cprime{$'$}
\begin{thebibliography}{ADMH{\etalchar{+}}24}

\bibitem[ADMH{\etalchar{+}}24]{aubrun2024completely}
Guillaume Aubrun, Kenneth~R Davidson, Alexander M{\"u}ller-Hermes, Vern~I Paulsen, and Mizanur Rahaman.
\newblock Completely bounded norms of k k-positive maps.
\newblock {\em Journal of the London Mathematical Society}, 109(6):e12936, 2024.

\bibitem[BCH{\etalchar{+}}02]{BCH+02}
Dagmar Bru{\ss}, J~Ignacio Cirac, Pawel Horodecki, Florian Hulpke, Barbara Kraus, Maciej Lewenstein, and Anna Sanpera.
\newblock Reflections upon separability and distillability.
\newblock {\em Journal of Modern Optics}, 49(8):1399--1418, 2002.

\bibitem[BCHW15]{BCHW15}
Stefan B{\"a}uml, Matthias Christandl, Karol Horodecki, and Andreas Winter.
\newblock Limitations on quantum key repeaters.
\newblock {\em Nature communications}, 6(1):6908, 2015.

\bibitem[BD11]{BD11}
Francesco Buscemi and Nilanjana Datta.
\newblock Entanglement cost in practical scenarios.
\newblock {\em Phys. Rev. Lett.}, 106:130503, Mar 2011.

\bibitem[BK12]{TextBK}
Egbert Brieskorn and Horst Kn{\"o}rrer.
\newblock {\em Plane Algebraic Curves: Translated by John Stillwell}.
\newblock Springer Science \& Business Media, 2012.

\bibitem[Bra37]{Bra37}
Richard Brauer.
\newblock On algebras which are connected with the semisimple continuous groups.
\newblock {\em Annals of Mathematics}, 38(4):857--872, 1937.

\bibitem[Bre06]{Bre06}
Heinz-Peter Breuer.
\newblock Optimal entanglement criterion for mixed quantum states.
\newblock {\em Physical review letters}, 97(8):080501, 2006.

\bibitem[Car20]{Car20}
Daniel Cariello.
\newblock Inequalities for the {S}chmidt number of bipartite states.
\newblock {\em Lett. Math. Phys.}, 110(4):827--833, 2020.

\bibitem[CCF25]{CCF25}
Qian Chen, Benoît Collins, and Omar Fawzi.
\newblock Symmetry reduction for testing k-block-positivity via extendibility.
\newblock {\em Journal of Physics A: Mathematical and Theoretical}, 58(48):485302, nov 2025.

\bibitem[CF17]{CF17}
Matthias Christandl and Roberto Ferrara.
\newblock Private states, quantum data hiding, and the swapping of perfect secrecy.
\newblock {\em Physical review letters}, 119(22):220506, 2017.

\bibitem[Cho75]{Cho75a}
Man~Duen Choi.
\newblock Completely positive linear maps on complex matrices.
\newblock {\em Linear Algebra Appl.}, 10:285--290, 1975.

\bibitem[Cho82]{Cho82}
Man~Duen Choi.
\newblock Positive linear-maps.
\newblock In {\em Proceedings of Symposia in Pure Mathematics}, volume~38, pages 583--590. AMER MATHEMATICAL SOC 201 CHARLES ST, PROVIDENCE, RI 02940-2213, 1982.

\bibitem[Chr12]{PPTsq}
M.~Christandl.
\newblock P{PT} square conjecture.
\newblock {\em Banff International Research Station Workshop: \emph{Operator Structures in Quantum Information Theory}}, 2012.

\bibitem[CK06]{CK06}
Dariusz Chru{\'s}ci{\'n}ski and Andrzej Kossakowski.
\newblock Class of positive partial transposition states.
\newblock {\em Physical Review A—Atomic, Molecular, and Optical Physics}, 74(2):022308, 2006.

\bibitem[CK08]{CK08}
Dariusz Chru{\'s}ci{\'n}ski and Andrzej Kossakowski.
\newblock A class of positive atomic maps.
\newblock {\em Journal of Physics A: Mathematical and Theoretical}, 41(21):215201, 2008.

\bibitem[CK09]{CK09}
Dariusz Chru\'{s}ci\'{n}ski and Andrzej Kossakowski.
\newblock Spectral conditions for positive maps.
\newblock {\em Comm. Math. Phys.}, 290(3):1051--1064, 2009.

\bibitem[CMHW19]{CMHW19}
Matthias Christandl, Alexander M\"{u}ller-Hermes, and Michael~M. Wolf.
\newblock When do composed maps become entanglement breaking?
\newblock {\em Ann. Henri Poincar\'{e}}, 20(7):2295--2322, 2019.

\bibitem[Cox03]{Cox03}
H.S.M. Coxeter.
\newblock {\em Projective Geometry}.
\newblock Springer, 10 2003.

\bibitem[CYT17]{CYT17}
Lin Chen, Yu~Yang, and Wai-Shing Tang.
\newblock Schmidt number of bipartite and multipartite states under local projections.
\newblock {\em Quantum Inf. Process.}, 16(3):Paper No. 75, 27, 2017.

\bibitem[CYT19]{Chen2019}
Lin Chen, Yu~Yang, and Wai-Shing Tang.
\newblock Positive-partial-transpose square conjecture for $n=3$.
\newblock {\em Phys. Rev. A}, 99:012337, Jan 2019.

\bibitem[CYZ18]{CYZ18}
Beno\^{i}t Collins, Zhi Yin, and Ping Zhong.
\newblock The {PPT} square conjecture holds generically for some classes of independent states.
\newblock {\em J. Phys. A}, 51(42):425301, 19, 2018.

\bibitem[ES13]{ES13}
Christopher Eltschka and Jens Siewert.
\newblock Negativity as an estimator of entanglement dimension.
\newblock {\em Physical Review Letters}, 111(10):100503, 2013.

\bibitem[EW01]{EW01}
T.~Eggeling and R.~F. Werner.
\newblock Separability properties of tripartite states with ${U}\ensuremath{\bigotimes}{U}\ensuremath{\bigotimes}{U}$ symmetry.
\newblock {\em Phys. Rev. A}, 63:042111, Mar 2001.

\bibitem[Fol16]{Fol16}
Gerald~B. Folland.
\newblock {\em A course in abstract harmonic analysis}.
\newblock Textbooks in Mathematics. CRC Press, Boca Raton, FL, second edition, 2016.

\bibitem[GHP10]{GHP10}
Andrzej Grudka, Micha\l Horodecki, and \L~ukasz Pankowski.
\newblock Constructive counterexamples to the additivity of the minimum output {R}\'{e}nyi entropy of quantum channels for all {$p>2$}.
\newblock {\em J. Phys. A}, 43(42):425304, 7, 2010.

\bibitem[GKS21]{GKS21}
Mark Girard, Seung-Hyeok Kye, and Erling Størmer.
\newblock Convex cones in mapping spaces between matrix algebras.
\newblock {\em Linear Algebra and its Applications}, 608:248--269, 2021.

\bibitem[GNP25]{GNP25+}
Aabhas Gulati, Ion Nechita, and Sang-Jun Park.
\newblock Positive maps and extendibility hierarchies from copositive matrices.
\newblock {\em arXiv preprint arXiv:2509.15201}, 2025.

\bibitem[GNS25]{GNS25}
Aabhas Gulati, Ion Nechita, and Satvik Singh.
\newblock Entanglement in cyclic sign invariant quantum states.
\newblock {\em Journal of Mathematical Physics}, 66(12):122202, 12 2025.

\bibitem[Gur03]{Gur03}
Leonid Gurvits.
\newblock Classical deterministic complexity of {E}dmond's problem and quantum entanglement.
\newblock In {\em Proceedings of the {T}hirty-{F}ifth {A}nnual {ACM} {S}ymposium on {T}heory of {C}omputing}, pages 10--19. ACM, New York, 2003.

\bibitem[GW09]{TextGoodman}
Roe Goodman and Nolan~R Wallach.
\newblock {\em Symmetry, representations, and invariants}, volume 255.
\newblock Springer, 2009.

\bibitem[Hal06]{Hal06}
William Hall.
\newblock A new criterion for indecomposability of positive maps.
\newblock {\em Journal of Physics A: Mathematical and General}, 39(45):14119, 2006.

\bibitem[HH99]{HH99}
Micha\l{} Horodecki and Pawe\l{} Horodecki.
\newblock Reduction criterion of separability and limits for a class of distillation protocols.
\newblock {\em Phys. Rev. A}, 59:4206--4216, Jun 1999.

\bibitem[HHH96]{HHH96}
Micha\l Horodecki, Pawe\l Horodecki, and Ryszard Horodecki.
\newblock Separability of mixed states: necessary and sufficient conditions.
\newblock {\em Phys. Lett. A}, 223(1-2):1--8, 1996.

\bibitem[HHH98]{HHH98}
Micha{\l} Horodecki, Pawe{\l} Horodecki, and Ryszard Horodecki.
\newblock Mixed-state entanglement and distillation: Is there a “bound” entanglement in nature?
\newblock {\em Physical Review Letters}, 80(24):5239, 1998.

\bibitem[HHHH09]{HHHH09}
Ryszard Horodecki, Pawe\l Horodecki, Micha\l Horodecki, and Karol Horodecki.
\newblock Quantum entanglement.
\newblock {\em Rev. Modern Phys.}, 81(2):865--942, 2009.

\bibitem[HJ12]{TextHJ12}
Roger~A Horn and Charles~R Johnson.
\newblock {\em Matrix analysis}.
\newblock Cambridge university press, 2012.

\bibitem[HK25a]{HK25a}
Kyung~Hoon Han and Seung-Hyeok Kye.
\newblock Global locations of schmidt-number witnesses.
\newblock {\em Physical Review A}, 112(3):032426, 2025.

\bibitem[HK25b]{HK25b}
Kyung~Hoon Han and Seung-Hyeok Kye.
\newblock Supporting hyperplanes for schmidt numbers and schmidt number witnesses.
\newblock {\em Open Systems \& Information Dynamics}, 32(02):2550008, 2025.

\bibitem[HLLMH18]{HLLMH18}
Marcus Huber, Ludovico Lami, C\'ecilia Lancien, and Alexander M\"uller-Hermes.
\newblock High-dimensional entanglement in states with positive partial transposition.
\newblock {\em Phys. Rev. Lett.}, 121:200503, Nov 2018.

\bibitem[Hol93]{holevo1993note}
Alexander~S Holevo.
\newblock A note on covariant dynamical semigroups.
\newblock {\em Reports on mathematical physics}, 32(2):211--216, 1993.

\bibitem[Hor97]{Hor97}
Pawe\l Horodecki.
\newblock Separability criterion and inseparable mixed states with positive partial transposition.
\newblock {\em Phys. Lett. A}, 232(5):333--339, 1997.

\bibitem[HRF20]{hanson2020eventually}
Eric~P Hanson, Cambyse Rouz{\'e}, and Daniel~Stilck Fran{\c{c}}a.
\newblock Eventually entanglement breaking markovian dynamics: Structure and characteristic times.
\newblock {\em Ann. Henri Poincar{\'e}}, 21:1517--1571, 2020.

\bibitem[HR{\.Z}22]{HRZ22}
Pawe{\l} Horodecki, {\L}ukasz Rudnicki, and Karol {\.Z}yczkowski.
\newblock Five open problems in quantum information theory.
\newblock {\em PRX Quantum}, 3(1):010101, 2022.

\bibitem[Jk72]{Jam72}
A.~Jamio\l~kowski.
\newblock Linear transformations which preserve trace and positive semidefiniteness of operators.
\newblock {\em Rep. Mathematical Phys.}, 3(4):275--278, 1972.

\bibitem[KG24]{KG24}
Robin Krebs and Mariami Gachechiladze.
\newblock High schmidt number concentration in quantum bound entangled states.
\newblock {\em Physical Review Letters}, 132(22):220203, 2024.

\bibitem[KG25]{KG25+}
Robin Krebs and Mariami Gachechiladze.
\newblock Scaling bound entanglement through local extensions.
\newblock {\em arXiv preprint arXiv:2509.07086}, 2025.

\bibitem[KMP17]{Kennedy2017}
Matthew Kennedy, Nicholas Manor, and Vern Paulsen.
\newblock Composition of {PPT} maps.
\newblock {\em Quantum Information and Computation}, 18, 10 2017.

\bibitem[Kye23]{Kye23}
Seung-Hyeok Kye.
\newblock Compositions and tensor products of linear maps between matrix algebras.
\newblock {\em Linear Algebra Appl.}, 658:283--309, 2023.

\bibitem[LG15]{Lami2015entanglebreak}
L.~Lami and V.~Giovannetti.
\newblock Entanglement–breaking indices.
\newblock {\em Journal of Mathematical Physics}, 56(9):092201, Sep 2015.

\bibitem[LHF25]{LHF25}
Nicky Kai~Hong Li, Marcus Huber, and Nicolai Friis.
\newblock High-dimensional entanglement witnessed by correlations in arbitrary bases.
\newblock {\em npj Quantum Information}, 11(1):50, 2025.

\bibitem[LKCH00]{LKCH00}
Maciej Lewenstein, Barabara Kraus, J~Ignacio Cirac, and Pawel Horodecki.
\newblock Optimization of entanglement witnesses.
\newblock {\em Physical Review A}, 62(5):052310, 2000.

\bibitem[LY22]{LY22}
Hun~Hee Lee and Sang-Gyun Youn.
\newblock Quantum channels with quantum group symmetry.
\newblock {\em Communications in Mathematical Physics}, 389(3):1303--1329, 2022.

\bibitem[MGM24]{MGM24+}
Bivas Mallick, Nirman Ganguly, and AS~Majumdar.
\newblock On the characterization of schmidt number breaking and annihilating channels.
\newblock {\em arXiv preprint arXiv:2411.19315}, 2024.

\bibitem[MOM25]{MOM25}
Tomasz M{\l}ynik, Hiroyuki Osaka, and Marcin Marciniak.
\newblock Characterization of k-positive maps.
\newblock {\em Communications in Mathematical Physics}, 406(3):62, 2025.

\bibitem[MSD17]{MSD17}
Marek Mozrzymas, Micha{\l} Studzi{\'n}ski, and Nilanjana Datta.
\newblock Structure of irreducibly covariant quantum channels for finite groups.
\newblock {\em Journal of Mathematical Physics}, 58(5), 2017.

\bibitem[NC00]{NiCh}
Michael~A. Nielsen and Isaac~L. Chuang.
\newblock {\em Quantum computation and quantum information}.
\newblock Cambridge University Press, Cambridge, 2000.

\bibitem[NP25]{NP25}
Ion Nechita and Sang-Jun Park.
\newblock Random covariant quantum channels.
\newblock In {\em Annales Henri Poincar{\'e}}, pages 1--61. Springer, 2025.

\bibitem[Pau02]{paulsen2002completely}
Vern Paulsen.
\newblock {\em Completely bounded maps and operator algebras}, volume~78.
\newblock Cambridge University Press, 2002.

\bibitem[Per96]{Per96}
Asher Peres.
\newblock Separability criterion for density matrices.
\newblock {\em Phys. Rev. Lett.}, 77(8):1413--1415, 1996.

\bibitem[PJPY24]{PJPY24}
Sang-Jun Park, Yeong-Gwang Jung, Jeongeun Park, and Sang-Gyun Youn.
\newblock A universal framework for entanglement detection under group symmetry.
\newblock {\em Journal of Physics A: Mathematical and Theoretical}, 57(32):325304, 2024.

\bibitem[PV19]{PV19}
K\'aroly~F. P\'al and Tam\'as V\'ertesi.
\newblock Class of genuinely high-dimensionally-entangled states with a positive partial transpose.
\newblock {\em Phys. Rev. A}, 100:012310, Jul 2019.

\bibitem[PY24]{PY24}
Sang-Jun Park and Sang-Gyun Youn.
\newblock k-positivity and schmidt number under orthogonal group symmetries.
\newblock {\em Quantum Information Processing}, 23(5):162, 2024.

\bibitem[RBO20]{BO20}
BV~Rajarama~Bhat and Hiroyuki Osaka.
\newblock A factorization property of positive maps on c*-algebras.
\newblock {\em International Journal of Quantum Information}, 18(05):2050019, 2020.

\bibitem[RJP18]{Rahaman2018}
Mizanur Rahaman, Samuel Jaques, and Vern Paulsen.
\newblock Eventually entanglement breaking maps.
\newblock {\em Journal of Mathematical Physics}, 59, 01 2018.

\bibitem[SBL01]{SBL01}
Anna Sanpera, Dagmar Bru{\ss}, and Maciej Lewenstein.
\newblock Schmidt-number witnesses and bound entanglement.
\newblock {\em Phys. Rev. A}, 63:050301, Apr 2001.

\bibitem[Sim95]{Simon95}
Barry Simon.
\newblock {\em Representations of Finite and Compact Groups (Graduate Studies in Mathematics ; V. 10)}.
\newblock American Mathematical Society, 12 1995.

\bibitem[Sko11]{Sko11}
\L~ukasz Skowronek.
\newblock Cones with a mapping cone symmetry in the finite-dimensional case.
\newblock {\em Linear Algebra Appl.}, 435(2):361--370, 2011.

\bibitem[SN21]{SN21}
Satvik Singh and Ion Nechita.
\newblock Diagonal unitary and orthogonal symmetries in quantum theory.
\newblock {\em {Quantum}}, 5:519, August 2021.

\bibitem[SN22]{singh2022ppt}
Satvik Singh and Ion Nechita.
\newblock The {PPT}$^2$ conjecture holds for all {C}hoi-type maps.
\newblock {\em Annales Henri Poincar{\'e}}, 23(9):3311--3329, 2022.

\bibitem[SP18]{SP18}
Enrico Sindici and Marco Piani.
\newblock Simple class of bound entangled states based on the properties of the antisymmetric subspace.
\newblock {\em Phys. Rev. A}, 97:032319, Mar 2018.

\bibitem[Sr82]{Sto82}
Erling St\o~rmer.
\newblock Decomposable positive maps on {$C^{\ast} $}-algebras.
\newblock {\em Proc. Amer. Math. Soc.}, 86(3):402--404, 1982.

\bibitem[Sr86]{Sto86}
Erling St\o~rmer.
\newblock Extension of positive maps into {$B({\mathcal{H}})$}.
\newblock {\em J. Funct. Anal.}, 66(2):235--254, 1986.

\bibitem[SSrZ09]{SSZ09}
\L~ukasz Skowronek, Erling St\o~rmer, and Karol \.{Z}yczkowski.
\newblock Cones of positive maps and their duality relations.
\newblock {\em J. Math. Phys.}, 50(6):062106, 18, 2009.

\bibitem[St{\o}63]{stormer1963positive}
Erling St{\o}rmer.
\newblock Positive linear maps of operator algebras.
\newblock {\em Acta Mathematica}, 110(1):233--278, 1963.

\bibitem[St{\o}13]{TextSto}
Erling St{\o}rmer.
\newblock {\em Positive linear maps of operator algebras}.
\newblock Springer, 2013.

\bibitem[SWZ11]{SWZ11}
Stanis\l aw~J. Szarek, Elisabeth Werner, and Karol \.{Z}yczkowski.
\newblock How often is a random quantum state {$k$}-entangled?
\newblock {\em J. Phys. A}, 44(4):045303, 15, 2011.

\bibitem[Ter01]{Ter01}
Barbara~M Terhal.
\newblock A family of indecomposable positive linear maps based on entangled quantum states.
\newblock {\em Linear Algebra and its Applications}, 323(1-3):61--73, 2001.

\bibitem[TG09]{TG09}
G{\'e}za T{\'o}th and Otfried G{\"u}hne.
\newblock Entanglement and permutational symmetry.
\newblock {\em Physical review letters}, 102(17):170503, 2009.

\bibitem[TH00]{TH00}
Barbara~M. Terhal and Pawe\l{} Horodecki.
\newblock Schmidt number for density matrices.
\newblock {\em Phys. Rev. A (3)}, 61(4):040301, 4, 2000.

\bibitem[Tom85]{Tom85}
Jun Tomiyama.
\newblock On the geometry of positive maps in matrix algebras. {II}.
\newblock {\em Linear Algebra Appl.}, 69:169--177, 1985.

\bibitem[vEKD25]{EKD25}
Frederik vom Ende, Sumeet Khatri, and Sergey Denisov.
\newblock k-positive maps: New characterizations and a generation method.
\newblock {\em Open Systems \& Information Dynamics}, 32(04):2550015, 2025.

\bibitem[VW01]{VW01}
K.~G.~H. Vollbrecht and R.~F. Werner.
\newblock Entanglement measures under symmetry.
\newblock {\em Phys. Rev. A}, 64:062307, Nov 2001.

\bibitem[Wat18]{TextWat18}
John Watrous.
\newblock {\em The theory of quantum information}.
\newblock Cambridge university press, 2018.

\bibitem[Wen88]{Wen88}
Hans Wenzl.
\newblock On the structure of brauer's centralizer algebras.
\newblock {\em Annals of Mathematics}, 128(1):173--193, 1988.

\bibitem[Wer89]{Wer89}
Reinhard~F. Werner.
\newblock Quantum states with einstein-podolsky-rosen correlations admitting a hidden-variable model.
\newblock {\em Phys. Rev. A}, 40:4277--4281, Oct 1989.

\bibitem[Wey46]{TextWeyl}
Hermann Weyl.
\newblock {\em The classical groups: their invariants and representations}, volume~1.
\newblock Princeton university press, 1946.

\bibitem[Wor76]{Wor76}
S.~L. Woronowicz.
\newblock Positive maps of low dimensional matrix algebras.
\newblock {\em Rep. Math. Phys.}, 10(2):165--183, 1976.

\bibitem[YLT16]{YLT16}
Yu~Yang, Denny~H. Leung, and Wai-Shing Tang.
\newblock All 2-positive linear maps from {$M_3(\Bbb{C})$} to {$M_3(\Bbb{C})$} are decomposable.
\newblock {\em Linear Algebra Appl.}, 503:233--247, 2016.

\bibitem[Yu16]{Yu16}
Nengkun Yu.
\newblock Separability of a mixture of dicke states.
\newblock {\em Physical Review A}, 94(6):060101, 2016.

\end{thebibliography}
\bibliographystyle{alpha}
\bigskip
\hrule
\bigskip

\end{document}